\documentclass{article}

\usepackage{PRIMEarxiv}
\usepackage{blindtext}
\usepackage{natbib}
\usepackage[T1]{fontenc}
\usepackage[utf8]{inputenc}
\usepackage{microtype}
\usepackage{upgreek}
\usepackage{amsmath,amssymb,amsthm,mathtools,bm,mathrsfs}
\usepackage{bbm}
\usepackage{dsfont}
\usepackage{url}
\usepackage{algorithm}
\usepackage{algpseudocode}
\usepackage{tikz}
\usetikzlibrary{arrows.meta,positioning,calc,backgrounds, decorations.pathreplacing}

\usepackage{tabularx}

\usepackage{booktabs}
\usepackage{graphicx}
\graphicspath{{media/}}

\usepackage{xcolor}

\usepackage{lastpage}
\usepackage{amssymb}
\usepackage{pifont}
\newcommand{\cmark}{\ding{51}}%
\newcommand{\xmark}{\ding{55}}%
\newtheorem{thm}{Theorem}
\newtheorem{lem}{Lemma}

\newtheorem{rmrk}{Remark}

\newtheorem{cor}{Corollary}
\newtheorem{assumption}{Assumption}

\newcommand{\R}{\mathbb{R}}

\newcommand{\E}{\mathbb{E}}
\renewcommand{\Pr}{\mathbb{P}}

\newcommand{\Ec}{\mathcal{E}}
\newcommand{\Var}{\mathrm{Var}}

\newcommand{\V}{\mathcal{V}}
\newcommand{\A}{\mathcal{A}}

\newcommand{\As}{\mathscr{A}}
\newcommand{\F}{\mathcal{F}}
\newcommand{\G}{\mathcal{G}}

\newcommand{\Unif}{\mathrm{Unif}}
\newcommand{\one}{\mathbf 1}

\newcommand{\Jset}{\mathcal{J}}
\newcommand{\Oc}{\mathcal O}

\newcommand{\algmark}[1]{%
  \tikz[remember picture,overlay]\coordinate (#1);%
}

\newcommand{\algrbrace}[3]{%
  \tikz[remember picture,overlay]
    \draw[decorate,decoration={brace,amplitude=5pt},color=blue]
      ($(#1)+(22em,0.5ex)$) -- ($(#2)+(22em,-0.5ex)$)
      node[midway,xshift=3.2em,align=left] {\scriptsize #3};
}

\title{Virtual Dummies: Enabling Scalable FDR-Controlled Variable Selection via Sequential Sampling of Null Features}
\author{Taulant Koka, Jasin Machkour, Daniel P. Palomar, Michael Muma}

\begin{document}
\maketitle

\begin{abstract}
High-dimensional variable selection, particularly in genomics, requires error-controlling procedures that scale to millions of predictors. The Terminating-Random Experiments (T-Rex) selector achieves false discovery rate (FDR) control by aggregating results of early terminated random experiments, each combining original predictors with i.i.d. synthetic null variables (dummies). At biobank scales, however, explicit dummy augmentation requires terabytes of memory. We demonstrate that this bottleneck is not fundamental. Formalizing the information flow of forward selection through a filtration, we show that compatible selectors interact with unselected dummies solely through projections onto an adaptively evolving low-dimensional subspace. For rotationally invariant dummy distributions, we derive an adaptive stick-breaking construction sampling these projections from their exact conditional distribution given the selection history, thereby eliminating dummy matrix materialization. We prove a pathwise universality theorem: under mild delocalization conditions, selection paths driven by generic standardized i.i.d. dummies converge to the same Gaussian limit. We instantiate the theory through Virtual Dummy LARS (VD-LARS), reducing memory and runtime by several orders of magnitude while preserving the exact selection law and FDR guarantees of the T-Rex selector. Experiments on realistic genome-wide association study data confirm that VD-T-Rex controls FDR and achieves power at scales where all competing methods either fail or time out.
\end{abstract}

\section{Introduction}
\label{sec:intro}

The first eukaryotic genome sequence, that of \emph{Saccharomyces cerevisiae}\footnote{Commonly known as baker's yeast or brewer's yeast.}, was completed in 1996 by \citet{Goffeau1996}, marking a turning point in molecular biology. The same year, \citet{Tibshirani1996} introduced the Lasso, laying the foundation for modern high-dimensional variable selection. At the time these advances may have seemed unrelated; yet, three decades later, they converge conceptually in large-scale genomic studies such as genome-wide association studies (GWAS) \citep{Uffelmann2021}, where the statistical task is to identify a small active set among millions of candidate predictors.

The standard mathematical setting is the high-dimensional linear model
\[
\bm y = \bm X \bm\beta + \bm\upepsilon,
\qquad
\bm\upepsilon \sim \mathcal N(\bm 0,\sigma^2\bm I_n),
\]
where $n, p$ are the numbers of observations and predictors, respectively; $\bm y\in\R^n$ is the response, $\bm X\in\R^{n\times p}$ is the design matrix, $\bm\beta\in\R^p$ is the unknown sparse coefficient vector, and the goal is to recover the active set
\(
\A = \{ j : \beta_j \neq 0 \}
\).

The Lasso promotes sparsity through an $\ell_1$ penalty, and its geometry leads naturally to forward selection (FS) procedures that incrementally add predictors. The \emph{Least Angle Regression (LARS)} algorithm \citep{Efron2004} is a canonical example: it advances along the equiangular direction of active predictors and, with a slight modification, traces the Lasso path exactly. However, classical FS methods, including LARS, do not provide false discovery rate (FDR) control, a critical limitation in large-scale problems where $p\gg n$. Indeed, \cite{Su2017} showed that false discoveries are interspersed among true ones early on the Lasso path, motivating procedures that explicitly calibrate the rate of null variable inclusion.

A recent strategy for controlling the FDR in high-dimensional settings is to add synthetic \emph{null} variables that mimic the behavior of the true features under the null. The \emph{model--X knockoff} framework \citep{Cands2018, Ren2021, ren2024} follows this idea by generating knockoff copies of the predictors: variables that match the joint distribution of $\bm{X}$ yet are independent of the response. Comparing each feature to its knockoff yields exact FDR control, but constructing such knockoffs requires modeling the full joint distribution of $\bm{X}$, which is often computationally prohibitive when $p$ is large.

The \emph{Terminating Random Experiments (T--Rex)} selector \citep{Machkour2025,Machkour2025_Dependent} offers a scalable FDR-controlling alternative that avoids modeling the covariance matrix of $\bm{X}$. The T--Rex selector conducts multiple randomized experiments in which $\bm{X}$ is augmented with $L$ synthetic null predictors, so-called \emph{dummy variables} or \emph{dummies}. In each of the independent random experiments, the dummies compete against the real predictors in a forward-selection process. These dummies are drawn independently from some distribution~$\Psi$:
\[
  \bm{D} = (\bm{d}_1,\dots,\bm{d}_L), 
  \qquad 
  \bm{d}_\ell \stackrel{\text{i.i.d.}}{\sim} \Psi.
\]
In the original T--Rex formulation, the dummy design follows an $n$-fold product distribution $\Psi=\mu^{\otimes n}$ for some univariate base law $\mu$, where \cite{Machkour2025,Machkour2025_Dependent} adopt independent standard normal entries ($\mu=\mathcal N(0,1)$) as a convenient default choice. 
Despite its statistical guarantees, T--Rex faces a severe computational bottleneck. Each random experiment requires (i) explicitly materializing an $n\times L$ dummy block, typically with $L\ge p$, and (ii) repeatedly computing correlations against the residual across all $p+L$ predictors along the forward path. At biobank scales (e.g., $n=5{\times}10^5$, $p=10^6$, $L\ge p$), storing the dummy block alone would require over $4$~TB in \texttt{float64}, well beyond the capacity of typical machines. While out-of-core techniques such as memory mapping can mitigate RAM constraints \citep{Scheidt2023}, they do not remove the fundamental cost of explicitly handling and repeatedly processing an $n\times L$ dummy block.

We show that this bottleneck is not fundamental. The key observation is that many forward selection procedures, including LARS, orthogonal matching pursuit (OMP) \citep{pati1993orthogonal}, stepwise regression \citep{Hocking1976}, and others, never require access to the full dummy matrix itself. Instead of using the full coordinates of a dummy $\bm d_\ell$, the selection rule only requires its correlation with the current residual. More generally, any decision to add a variable depends solely on the dummy's projections onto the subspace formed by the response and the variables already selected. Thus, forward selection never uses the $n$ coordinates of a dummy directly, but only a growing collection of low-dimensional projections along the selection path.

In general, however, these projections cannot be generated independently of the unrevealed coordinates, since they are constrained by their joint distribution. A central challenge is therefore to characterize the conditional distribution of future projections given the revealed past.

This difficulty admits a precise resolution for suitably standardized \emph{rotationally invariant} dummy distributions.\footnote{A distribution
$\Psi$ on $\R^n$ is rotationally invariant if $\bm Q\bm d \stackrel d= \bm d$ for all orthogonal $\bm Q$.} For Gaussian and uniform spherical base laws, rotational invariance implies that, after centering and normalization, the conditional distribution of the unrevealed components of a dummy depends only on the already revealed subspace, not on the specific data-dependent basis used to represent it. One of the main tasks of this paper is to show that this invariance is strong enough to permit sequential sampling of dummy projections along an adaptively evolving orthonormal basis from their exact conditional distribution.

At first glance, restricting attention to Gaussian or uniform spherical dummy distributions may appear limiting. For large $n$, however, this restriction is often immaterial for the early part of a forward-selection run. Beyond our finite-sample exact construction under rotational invariance, we prove a pathwise universality result in a fixed-$(K,L,p)$ regime: for any fixed numbers of steps $K$, predictors $p$, and dummies $L$, and under mild moment and delocalization conditions, the $K$-step forward-selection path generated by standardized i.i.d. dummies converges in distribution, as $n\to\infty$, to the same limiting selection process as the path generated by Gaussian dummies. Thus, Gaussian (and spherical) dummies serve as canonical proxies for the initial trajectory, while our finite-sample results provide exact guarantees under rotational invariance for any fixed finite $L$.

As a consequence, under rotationally invariant dummy laws, the explicit $n\times L$ dummy matrix never needs to be materialized. Instead, dummy variables can be represented implicitly through the collection of projections along the forward selection path, whose dimension grows only linearly with the number of selected directions. In large-scale settings, this reduces the \emph{dummy-related} memory footprint from terabytes to a few hundred megabytes, effectively removing the primary computational bottleneck of dummy-augmented methods such as the T--Rex selector.

Beyond the specific context of the T--Rex, the virtual dummy perspective offers a general template for accelerating any randomized variable selection method relying on rotationally invariant augmentation. By removing the computational barriers inherent to these high-dimensional synthetic variable schemes, the proposed virtual-dummy construction paves the way for multivariate FDR-controlled variable selection at biobank scale, offering a path toward reproducible discoveries of truly relevant genetic variants associated with diseases.

\subsection{Contributions}
This paper develops a general construction, proposes an associated methodology, and establishes theoretical guarantees for integrating virtual-dummies into forward selection. Our main contributions are:
\begin{enumerate}
  \item \textbf{Sequential sampling.} We formalize the information revealed by forward selection via a filtration, characterize compatible selection rules, and, under rotationally invariant dummy laws, derive an adaptive stick-breaking representation that enables valid sequential sampling of spherical dummies under data-dependent basis updates.
  \item \textbf{Exactness and universality.} We show that dummy-augmented and virtual-dummy forward selectors have identical conditional laws of the entire randomized forward selection path, so existing FDR guarantees (in particular for T--Rex) carry over without modification. We further prove a pathwise universality result: the finite-dimensional dummy projections that determine the first $K$ forward-selection steps (for any fixed $K$, and in particular for early-terminated T--Rex paths) converge to the Gaussian limit with respect to $n$, making the choice of standardized i.i.d. dummy law asymptotically irrelevant for the first $K$ steps.
    \item \textbf{Algorithm and empirical validation.} We instantiate the construction for LARS, yielding the \emph{Virtual Dummy LARS  (VD--LARS)}, which avoids forming the $n\times L$ dummy matrix and reduces dummy-related memory and runtime by more than two orders of magnitude. Simulations confirm that VD--LARS preserves the FDR control and power of the T--Rex selector while scaling to dimensions that are multiple orders of magnitude beyond the failure point of explicit augmentation. We further benchmark our method on large-scale GWAS data simulated with realistic linkage disequilibrium patterns. An open-source \texttt{C++} implementation of VD--LARS, VD--OMP, VD--AFS (including logistic regression), and VD--T--Rex, together with code to reproduce all experiments in this paper, is available at \url{https://github.com/taulantkoka/virtual-dummies}.
\end{enumerate}

\subsection{Related Work}
\label{ssec:related}

The virtual-dummy framework provides a computationally efficient realization of uninformative negative controls \citep{Miller1984,Wu2007} by replacing explicit design augmentation with sequentially generated projections. While the knockoff framework \citep{Barber2015,Cands2018,Romano2019,pmlr-v48-daia16,Li2021} achieves FDR control by constructing surrogates that mimic the dependence structure of $\bm X$ (see also \citealt{Ke2024} for a power analysis of knockoff design choices), and mirrored statistics \citep{Xing2021,Chen2023} exploit the sign symmetry of null test statistics, our approach instead exploits rotational invariance of the dummy distribution to generate projections directly in the adaptively revealed basis, while preserving the augmented law. 

Knockoff generation itself can be cast as a sequential sampling problem: Sequential Conditional Independent Pairs (SCIP) \citep{Cands2018} and Metropolized knockoff sampling \citep{Bates2020} generate knockoff copies of $\bm X$ one variable at a time by exploiting conditional independence structure, while FANOK \citep{Askari2021} accelerates the knockoff construction by exploiting factor model structure in the semidefinite program (SDP). Both lines of work aim to produce the \emph{entire} knockoff matrix more efficiently. If the entries of $\bm X$ are i.i.d.\ draws from some univariate distribution, the pairwise exchangeability holds trivially for independent copies drawn from the same distribution, and knockoffs reduce to i.i.d.\ dummies. For general $\bm\Sigma$, however, knockoffs must model the full dependence structure of $\bm X$, whereas dummies remain independent by design. By contrast, the virtual-dummy construction proposed here does not generate full $n$-dimensional vectors at all; instead, it samples only the low-dimensional projections that the forward-selection path actually requires, avoiding materialization of the dummy matrix entirely.

In contrast to subsampling-based stability methods such as the Bolasso \citep{Bach2008} or stability selection \citep{Meinshausen2010,Shah2012}, T--Rex \citep{Machkour2025,Machkour2025_Dependent} uses dummy-based calibration to stabilize forward-selection paths. Beyond its original regression setting, T--Rex has been applied to graphical model estimation \citep{Koka2024}, sparse PCA \citep{Machkour2024}, grouped selection \citep{Machkour2023_InformedElasticNet}, and portfolio tracking \citep{Machkour2025_Portfolio}. A dummy-free variant of the T--Rex selector, based on tracking dummy importance
along the forward-selection path, is proposed as a future research direction in \citet{Machkour2024phd}. In contrast, the present work retains the dummy-based construction but replaces explicit dummy variables by an implicit representation, yielding an exactly distribution-preserving procedure with substantially reduced computational cost. Because the virtual-dummy construction preserves the full law of the augmented path (Theorem~\ref{thm:vd-fs-dist}), existing benchmarks against knockoff-based procedures (see, \cite{Machkour2025,Machkour2025_Dependent,Koka2024}) remain statistically unchanged, allowing us to focus on the computational gains over explicit augmentation.

\subsection*{Notation}
Boldface letters denote vectors and matrices (e.g., $\bm x$, $\bm X$); calligraphic symbols denote sets, $\sigma$-algebras and filtrations (e.g., $\F_k$), and blackboard symbols denote probability spaces and expectations. For a set of random variables $\{\zeta_\ell\}_{\ell =1}^L=\{\zeta_1, \dots, \zeta_L\}$ and an event $E_\ell$, we write
$\{\zeta_\ell\}_{E_\ell}$ for all $\zeta_\ell$ with $E_\ell$ true.  
If $\tau_\ell$ is the selection time of dummy $\ell$, then $\{\zeta_\ell\}_{\tau_\ell>k}$ collects the quantities associated with dummies remaining unselected at step $k$; this appears, for example, in the conditional update $\{\alpha_{k+1,\ell}\}_{\tau_\ell>k}$.
Real variables use indices $j\in\{1,\dots,p\}$ and dummies use
$\ell\in\{1,\dots,L\}$.   A dummy with local index $\ell$ appears in the augmented design at global index $p+\ell$. Inner products are written $\langle\cdot,\cdot\rangle$, and $\sigma(\cdot)$ denotes generated $\sigma$-algebras. All random objects are defined on the probability space $(\Omega,\mathcal F,\mathbb P)$ introduced later. For any subspace $\V\subseteq\R^n$, let  $\bm P_{\,\V}$ denote the orthogonal projector onto $\V$. We denote by $\V^\perp$ the orthogonal complement of $\V$ in the ambient space. The standard orthogonal group is denoted by
\(
\mathrm O(n) := \{\bm Q \in \R^{n \times n} : \bm Q^\top \bm Q = \bm I_n \}.
\) For a subspace $\V\subseteq\R^n$, we write
\(
\mathrm O(\V) := \{ \bm Q \in \mathrm O(n) : \bm Q\V=\V \}
\) for the subgroup that preserves $\V$ (setwise), and
\(
\mathrm O(\V^\perp \mid \V)
:= \{ \bm Q \in \mathrm O(n) : \bm Q\bm v = \bm v \ \ \forall\,\bm v \in \V \}
\) for the pointwise stabilizer of $\V$ (equivalently, orthogonal transformations acting freely on $\V^\perp$ while fixing $\V$).
\section{Preliminaries}
\label{sec:setup}
\vspace{1ex}

We formalize the probabilistic and geometric setting underlying the
\emph{virtual-dummy forward-selection} framework. At each step, a
forward-selection procedure interacts with each candidate variable only
through its projection onto the subspace spanned by the response and the
previously selected directions. The distribution of these projections must
therefore be interpreted conditionally on the information revealed so far.
To capture this, we construct a filtration that records all revealed dummy
projections and realized dummy vectors, and track how an orthonormal basis
for the revealed subspace $\V_k$ evolves with this filtration.
This framework will later allow us to prove that augmented and virtual-dummy
procedures share the same transition kernels.

\medskip
\noindent
We consider a variable selection problem with response vector
\(\bm y \in \R^n\) and design matrix
\(\bm X = (\bm x_1,\dots,\bm x_p) \in \R^{n\times p}\),
where both the response and the predictors are centered:
\(\one^\top \bm y = 0\) and \(\one^\top \bm x_j = 0\) for all \(j\).
All vectors therefore lie in the centered subspace
\[
H := \{\,\bm x \in \R^n : \one^\top \bm x = 0\,\},
\qquad
m := \dim(H) = n-1 \ge 2.
\]

In our setting, a forward-selection procedure acts on the augmented design
\[
(\bm X\ \bm D) = (\bm x_1,\dots,\bm x_p,\bm d_1,\dots,\bm d_L)
\in H^{p+L},
\]
sequentially choosing one column at each step according to a data-dependent rule that may depend on \((\bm X,\bm y)\) and on the progressively revealed dummy projections (see Table~\ref{tab:scope_compatibility} for a comprehensive list).
By orthogonalizing each selected column against the previous directions, we obtain a growing orthonormal basis
\[
\Ec_m := \{\bm e_1,\bm e_2,\dots,\bm e_m\} \subset H,
\qquad
\langle\bm e_i,\bm e_j\rangle = \delta_{ij},\quad \|\bm e_i\|_2=1,
\]
for the revealed subspace
$\V_k := \mathrm{span}(\bm e_1,\dots,\bm e_k)$ after step~$k$.
The dummy variables are drawn i.i.d.\ from a rotationally invariant base law~$\Psi(H)$:
\[
\bm d_1,\dots,\bm d_L \stackrel{\text{i.i.d.}}{\sim} \Psi(H).
\]
For the conditional laws induced by $\Psi(H)$, we write $\Psi_k$ for the one--dimensional conditional law of $\langle \bm d, \bm e_k\rangle$, $\Psi_k^\parallel$ for the $k$--dimensional conditional law on $\V_k$, and $\Psi_k^\perp$ for the $(m-k)$--dimensional conditional law on $H\cap\V_k^\perp$, interpreted as kernels given the revealed subspace. All randomness arises from the dummy ensemble; $(\bm X,\bm y)$ are treated as fixed throughout. Hence, we work on the canonical probability space
\begin{equation}
\label{eq:probability_space}
(\Omega,\F,\mathbb P)
~=~
\bigl(H^L,\,\mathcal B(H^L),\,\Psi(H)^{\otimes L}\bigr),
\end{equation}
where $\bm d_\ell(\omega)=\omega_\ell$ and $\mathcal B(H^L)$ is the Borel $\sigma$-algebra over $H^L$.

Although augmented forward selection is described as operating on the full dummy matrix $\bm D$, the algorithm never needs to access $\bm D$ explicitly. After $k$ steps, the procedure has revealed the directions $\bm e_1,\dots,\bm e_k$ and, for each dummy $\bm d_\ell$, only the projections $\langle \bm d_\ell,\bm e_i\rangle$ for $i\le k$. For dummies that have not been selected, the remaining component in $H\cap \V_k^\perp$ is still completely unseen and plays no role in the next decision. Only when a dummy is selected do we need to ``complete'' it by sampling its orthogonal residual.

 In the following section, we formalize this reveal-on-demand mechanism via a filtration $(\F_k)$ whose evolution follows
\[
\F_k \;\to\; \F_k^+ \;\to\; \F_{k+1},
\]
where $\F_k$ is the information used to choose the next variable, $\F_k^+$ additionally reveals any dummy selected at step $k$ (thereby fixing $\bm e_{k+1}$), and $\F_{k+1}$ then reveals the next round of dummy projections along $\bm e_{k+1}$. The definitions below make this information pattern precise and will later allow us to compare augmented and virtual--dummy procedures via their conditional transition laws.

\subsection{Filtration of the Forward Selection}
\label{ssec:filtration}
The progressive revelation of dummy information to the forward-selection
algorithm is encoded by a filtration $(\F_k)_{k=0}^m$ on $(\Omega,\F)$, i.e. an increasing family of sub-$\sigma$--algebras with $\F_k\subseteq\F$, which we now construct explicitly.
An overview of its time evolution is given in
Figure~\ref{fig:filt_timeline}. The cumulative information available \emph{immediately before} selecting the next
variable is represented by $\F_k$.
We initialize
\[
\F_0 := \sigma(\bm X,\bm y),
\qquad
\bm e_1 := \frac{\bm y}{\|\bm y\|_2}\in H.
\]
Thus, \(\F_0\) contains no information about any dummy projection yet\footnote{For notational convenience, we write $\sigma(\bm X,\bm y)$ inside filtrations even though $(\bm X,\bm y)$ are deterministic on this space, so that $\sigma(\bm X,\bm y)=\{\varnothing,\Omega\}$.}.
Before advancing to step $k=1$, the first projections of all dummies $\bm d_\ell,\:\ell=1,\dots,L$ onto \(\bm e_1\) are drawn from the one--dimensional law induced by $\Psi(H)$ and revealed:
\[
\upalpha_{1\ell} := \langle \bm d_\ell, \bm e_1\rangle,
\qquad
(\upalpha_{11},\dots,\upalpha_{1L}) \stackrel{\text{i.i.d.}}{\sim} \Psi_1,
\]
Accordingly,
\[
\F_1 := \sigma\!\bigl(\F_0,\,\{\upalpha_{1\ell}\}_{\ell=1}^L\bigr).
\]
At this stage, the procedure has seen one coordinate of each dummy, but none of the remaining coordinates in the orthogonal complement.
Let the global index set of candidate variables \(\Jset\) be partitioned into real and dummy variables
\[
\Jset_R := \{1,\dots,p\}, \qquad
\Jset_D := \{p+1,\dots,p+L\}, \qquad
\Jset := \Jset_R \cup \Jset_D.
\]
At iteration \(k \ge 1\), \(\A_k \subseteq \Jset\) denotes the active set of selected variables, which is \(\F_k\)-measurable by construction. The next variable index is chosen from the inactive set via an \(\F_k\)-measurable decision rule
\[
\phi_k : \Omega \to\Jset \setminus \A_k,
\qquad
j^\star = \phi_k(\omega),
\]
and the active set is then updated recursively as
\(
\A_{k+1} = \A_k \cup \{j^\star\}.
\) Because the FS rule may select a dummy at any step $k\geq1$, we must formally capture
the iteration at which each dummy becomes fully revealed. Each dummy index
$\ell=1,\dots,L$ has an associated \emph{selection time}
\[
  \tau_\ell(\omega)
  := \inf\{\, k \ge 1 : \phi_k(\omega) = p+\ell \,\},
\]
indicating the iteration at which dummy~$\ell$ is first selected.
By construction, $\tau_\ell$ is a stopping time with respect to the filtration $(\F_k)_{k\ge0}$.\footnote{%
For any $k\ge1$, $\{\tau_\ell \le k\}=\bigcup_{i=1}^k\{\phi_i=p+\ell\}$. Since $\phi_i$ is $\F_i$-measurable and $\F_i\subseteq\F_k$ for $i\le k$, this event belongs to $\F_k$.} After selecting \(j^\star\) at step \(k\), we set \(\bm v_k\) to the corresponding variable and compute the next basis vector through orthonormalization:
\begin{equation}
\label{eq:orthog}
\bm v_{k}^\perp :=
  \bm v_{k} - \sum_{i=1}^k \langle \bm v_{k}, \bm e_i \rangle \bm e_i,
\qquad
\bm e_{k+1} := \frac{\bm v_{k}^\perp}{\|\bm v_{k}^\perp\|}.
\end{equation}
If the selected variable is a dummy with index
\(\ell^\star = \phi_k - p\),
its unrevealed residual is drawn conditionally on
\(\F_k\):
\begin{equation}
\label{eq:realize-dummy}
\bm d_{\ell^\star,k}^{\perp} \,\big|\, \F_k \sim \Psi_k^{\perp},
\qquad
\bm d_{\ell^\star}
  = \sum_{i=1}^k \upalpha_{i\ell^\star}\bm e_i
  + \bm d_{\ell^\star,k}^{\perp}.
\end{equation}
This operation \emph{realizes} dummy~\(\ell^\star\),
making \(\bm d_{\ell^\star}\) fully known. We then define the intermediate $\sigma$-algebra
\begin{equation}
\label{eq:common-filtration}
\F_k^+
:= \sigma\!\Bigl(\F_k,\ \{\bm d_\ell:\tau_\ell=k\}\Bigr),
\end{equation}
which includes any newly realized dummies and the new direction $\bm e_{k+1} = \bm d_{\ell^\star,k}^{\perp}/\|\bm d_{\ell^\star,k}^{\perp}\|$. Thus, \(\F_k^+\) represents the information after realizing any selected dummy but before drawing new projections. When a real variable is selected, $\bm e_{k+1}$ is already $\F_k$-measurable and no new dummies are realized, so $\F_k^+=\F_k$. When a dummy is selected, $\bm e_{k+1}$ becomes measurable only in $\F_k^+$. Conditional on $\F_k^+$, the next projections of all unrealized dummies (i.e., $\tau_\ell > k$) are drawn along the now fixed axis $\bm e_{k+1}$:
\begin{equation}
\label{eq:alpha-next}
\{\upalpha_{k+1,\ell}\}_{\tau_\ell>k} \;\big|\; \F_k^+
\stackrel{\text{i.i.d.}}{\sim} \Psi_{k+1},
\qquad
\upalpha_{k+1,\ell} = \langle \bm d_\ell, \bm e_{k+1}\rangle.
\end{equation}
We call these the \emph{fresh projections} at step $k+1$, as they have not yet entered the filtration. The filtration is then updated to
\begin{equation}
\label{eq:Fk-update}
\F_{k+1}
:= \sigma\!\Bigl(\F_k^+,\ \{\upalpha_{k+1,\ell}\}_{\tau_\ell>k}\Bigr).
\end{equation}
Hence, the timeline of information flow (illustrated in Figure~\ref{fig:filt_timeline}) is
\[
  \F_k
  \;\xrightarrow[\text{realize $\bm d_{\ell^\star}$}]{}
  \F_k^+
  \;\xrightarrow[\text{reveal $\{\upalpha_{k+1,\,\ell}\}_{\tau_\ell>k}$}]{}
  \F_{k+1}.
\]
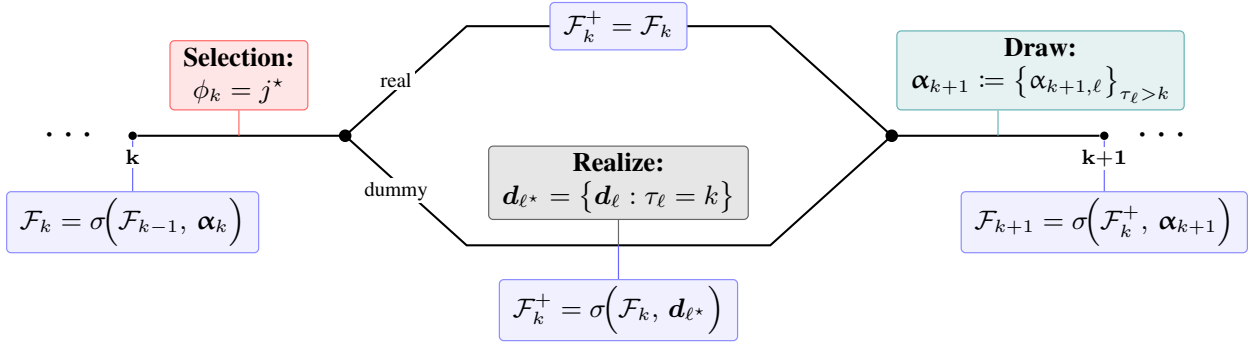
\begin{figure}[t]
    \centering
    \resizebox{\linewidth}{!}{%
\begin{tikzpicture}[
    >=latex,
    node distance=4mm and 8mm,
    axis/.style={line width=0.7pt},
    tick/.style={circle, fill=black, inner sep=1pt},
    branch dot/.style={circle, fill=black, inner sep=1.5pt},
    tlabel/.style={font=\scriptsize, fill=white, inner sep=1pt},
    box/.style={rounded corners=1.5pt, font=\footnotesize, inner xsep=4.5pt, inner ysep=2.2pt, align=center},
    timebox/.style={box, draw=blue!45, fill=blue!6, minimum height=1.6em, minimum width=3em},
    eventbox1/.style={box, draw=teal!60,  fill=teal!10,  minimum height=2.1em, minimum width=4em},
    eventbox2/.style={box, draw=red!60,   fill=red!10,   minimum height=2.1em, minimum width=4em},
    eventbox3/.style={box, draw=black!60, fill=black!10, minimum height=2.1em, minimum width=4em},
    branchlabel/.style={font=\scriptsize, fill=white, inner sep=1pt},
]
\def\sw{3.6}
\def\bdy{1.3}
\def\bslant{0.4}
% === MAIN AXIS: ticks at edges, dots immediately outside ===
\node at (-0.35*\sw, 0) {\Large $\cdots$};
\node[tick] (tk) at (-0.15*\sw,0) {};
\coordinate (fork)  at (0.55*\sw, 0);
\coordinate (merge) at (2.35*\sw, 0);
\node[tick] (tk1) at (3.05*\sw, 0) {};
\node at (3.25*\sw, 0) {\Large $\cdots$};
\draw[axis] (tk) -- (fork);
\draw[axis] (merge) -- (tk1);
\node[branch dot] at (fork) {};
\node[branch dot] at (merge) {};
% === UPPER BRANCH: j* ≤ p (rises) ===
\coordinate (up_entry) at ($(fork) +(\bslant*\sw, \bdy)$);
\coordinate (up_exit)  at ($(merge)+(-\bslant*\sw, \bdy)$);
\coordinate (up_mid)   at ($(up_entry)!0.5!(up_exit)$);
\coordinate (up_lab)   at ($(fork)!0.35!(up_entry)+(1mm,2.1mm)$);
\draw[axis] (fork) -- (up_entry) -- (up_exit) -- (merge);
\node[branchlabel] at (up_lab) {real};
% F_k^+ = F_k label on the upper horizontal
\node[timebox] at (up_mid) {$\F_k^+ = \F_k$};
% === LOWER BRANCH: j* > p (dips) ===
\coordinate (low_entry) at ($(fork) +(\bslant*\sw, -\bdy)$);
\coordinate (low_exit)  at ($(merge)+(-\bslant*\sw, -\bdy)$);
\coordinate (low_mid)   at ($(low_entry)!0.5!(low_exit)$);
\coordinate (low_lab)   at ($(fork)!0.35!(low_entry)+(1mm,-2mm)$);
\draw[axis] (fork) -- (low_entry) -- (low_exit) -- (merge);
\node[branchlabel] at (low_lab) {dummy};
% --- Realize: above the lower branch ---
\node[eventbox3, above=3mm of low_mid] (Real)
  {\textbf{Realize:}\\[1pt]
   $\bm d_{\ell^\star} = \big\{\bm d_\ell : \tau_\ell = k\big\}$};
\draw[-,black!60] (low_mid) -- (Real.south);
% --- F_k+: below lower branch ---
\node[timebox, below=4mm of low_mid] (Fkplus)
  {$\F_k^{+} = \sigma\!\Big(\F_k,\, \bm d_{\ell^\star}\Big)$};
\draw[-,blue!60] (low_mid) -- (Fkplus.north);
% === SELECTION: above main axis, between tk and fork ===
\coordinate (sel_line) at ($(tk)!0.5!(fork)$);
\node[eventbox2, above=3mm of sel_line] (Sel)
  {\textbf{Selection:}\\[1pt]
   $\phi_k = j^\star$};
\draw[-,red!60] (sel_line) -- (sel_line |- Sel.south);
% === DRAW: above main axis, between merge and tk1 ===
\coordinate (draw_line) at ($(merge)!0.5!(tk1)$);
\coordinate (draw_box)  at ($(merge)!0.7!(tk1)$);
\node[eventbox1, above=3mm of draw_box] (Draw)
  {\textbf{Draw:}\\[1pt]
   $\bm\upalpha_{k+1} \coloneqq \big\{\upalpha_{k+1,\ell}\big\}_{\tau_\ell > k}$};
\draw[-,teal!60] (draw_line) -- (draw_line |- Draw.south);
% === FILTRATION BOXES: below main axis ===
\node[timebox, below=6mm of tk] (Fk)
  {$\F_k = \sigma\!\Big(\F_{k-1},\, \bm\upalpha_k\Big)$};
\draw[-,blue!60] (tk) -- (Fk.north);
\node[timebox, below=6mm of tk1] (Fk1)
  {$\F_{k+1} = \sigma\!\Big(\F_k^{+},\, \bm\upalpha_{k+1}\Big)$};
\draw[-,blue!60] (tk1) -- (Fk1.north);
% === LABELS ===
\node[tlabel, below=2pt of tk]  {$\bf k$};
\node[tlabel, below=2pt of tk1] {$\bf k{+}1$};
\end{tikzpicture}%
}% end resizebox
    \caption{Exemplary timeline of the filtration $(\F_k)$. At step $k$, the selection rule $\phi_k$ is applied using only the information in $\F_k$. If a real variable is selected ($j^\star \le p$), the $\sigma$-algebra does not grow, hence $\F_k^+ = \F_k$. If a dummy is selected ($j^\star > p$), it is realized, yielding the intermediate $\sigma$-algebra $\F_k^+ = \sigma(\F_k,d_{\ell^\star})$. The next projections $\{\upalpha_{k+1,\ell}\}_{\tau_\ell>k}$ are then drawn conditional on $\F_k^+$, yielding $\F_{k+1}$.}
    \label{fig:filt_timeline}
\end{figure}
The filtration encodes exactly the information needed to make $\phi_k$ $\F_k$-measurable, and determines the conditional distribution of unrevealed dummy components.

\begin{rmrk}[Conditioning convention]
\label{rmrk:conditioning}
We state conditional laws of the next dummy projections given $\F_k^+$ rather than $\F_k$, since $\bm e_{k+1}$ is $\F_k^+$-measurable in all cases. When a real variable is selected, $\F_k^+=\F_k$ and no additional conditioning is introduced.
\end{rmrk}

Conditionally on $\F_k$, each dummy $\bm d_\ell$
admits the decomposition into an $\F_k$-measurable component and an
orthogonal residual,
\begin{align}
\label{eq:decomp_dummies}
\underbrace{
  \bm d_{\ell,k}^{\parallel}
  := \sum_{i=1}^k \upalpha_{i\ell}\bm e_i \in \V_k
}_{\text{$\F_k$-measurable}}
\qquad
\underbrace{
  \bm d_{\ell,k}^{\perp} \in H \cap \V_k^\perp,
  \quad \bm d_{\ell,k}^{\perp}\mid\F_k \sim \Psi_k^{\perp}
  \vphantom{\sum_{i=1}^k}
}_{\text{unrevealed residual}}
\end{align}
and for any $\ell$ with $\tau_\ell>k$, the conditional law is unchanged, as
$\bm d_{\ell,k}^{\perp}\mid\F_k^+\sim\Psi_k^\perp$,
since passing from $\F_k$ to $\F_k^+$ only reveals the dummy selected at step $k$ (if any) and does not reveal any information about the residual components of
the other, unrealized dummies.

The revealed projections define the affine slice
\[
\As_k :=
\Bigl\{\, \bm x \in H :
\langle \bm x, \bm e_i \rangle = \upalpha_{i\ell},\ i=1,\dots,k
\Bigr\}
= \bm d_{\ell,k}^{\parallel} + (H\cap\V_k^\perp).
\]

\noindent
With this structure in place, we can finally formalize the difference between explicit and virtual dummy treatments. The distinction is not in the update rule, but in which parts of the dummy ensemble are revealed at which times. We denote the two processes by
\[ \mathcal{P}_{\mathrm{ad}} \;\widehat{=}\; \text{Augmented Dummy Forward Selector},\qquad \mathcal{P}_{\mathrm{vd}} \;\widehat{=}\; \text{Virtual Dummy Forward Selector}.
\]

The augmented process~$\mathcal{P}_{\mathrm{ad}}$ draws the entire dummy ensemble $\bm D = (\bm d_1,\dots,\bm d_L)$ from $\Psi(H)^{\otimes L}$ \emph{a priori}.
Once $\bm D$ is fixed, its trajectory is deterministic and measurable with respect to the full $\sigma$-algebra $\sigma(\bm X,\bm y,\bm D)$. Thus $\mathcal{P}_{\mathrm{ad}}$ is generally \emph{not} adapted to the filtration $(\F_k)$; rather, $(\F_k)$ serves only as a device for comparing its information pattern to that of~$\mathcal{P}_{\mathrm{vd}}$, by conditioning on it.

In contrast, the virtual process~$\mathcal{P}_{\mathrm{vd}}$ reveals the same dummy ensemble incrementally through the filtration: at each step~$k$, it evolves from $\F_k$ to the intermediate $\F_k^+$ (realizing the selected dummy and updating the basis), and then to $\F_{k+1}$ (revealing the new projections $\{\upalpha_{k+1,\ell}\}$).
Hence $\mathcal{P}_{\mathrm{vd}}$ is adapted to $(\F_k)$. Both processes use identical update rules, basis evolution, and tie-breaking mechanisms; they differ only in how dummy randomness is revealed. Section~\ref{sec:stick_breaking} establishes that the two processes generate the same law over the entire forward-selection trajectory, i.e., the sequence of selected indices, basis vectors, correlations, and all intermediate path statistics.

Throughout, we assume that all orthonormalized candidates $\bm v_k^{\perp}$ and unrealized dummy residuals are almost surely nonzero; degenerate events (e.g.\ $\bm v_k \in \V_{k-1}$) occur with probability zero under continuous, rotationally invariant base laws. Accordingly, $\phi_k$ is $\F_k$-measurable at each step, and $\bm e_{k+1}$ is $\F_k$-measurable when a real variable is selected and $\F_k^+$-measurable when a dummy is selected.
\section{Sequential Sampling of Dummies}
\label{sec:stick_breaking}

In Section~\ref{sec:setup}, we introduced the probability space, the filtration, and an abstract forward--selection framework.  In this section, we specialize the dummies to rotationally invariant base laws $\Psi(H)$. This invariance is exactly what allows us to sample fresh dummy projections conditionally on the revealed history \((\F_k^+)_{k\ge0}\), without generating the full dummy matrix $\bm D$.

\subsection{The Role of Rotational Invariance}
\label{ssec:rotational-invariance}

The central geometric property we rely on is rotational invariance. Its relevance comes from the two ways a dummy vector can be represented: (i) via fixed coordinates in the canonical basis of $\R^n$ (augmented selection), or (ii) through projections $\{\upalpha_{j\ell}\}$ onto the \emph{adaptive basis} $\Ec_k$ (virtual selection). For these representations to produce identically distributed paths, the law $\Psi(H)$ must be invariant under any orthogonal transformation $\bm Q \in \mathrm{O}(H)$.

\begin{lem}[Gaussian rotational invariance]
\label{lem:rot-invariance}
Let $\bm D\in\R^{n\times L}$ have i.i.d.\ rows drawn from
$\mathcal N_L(\bm 0,\bm\Sigma)$ with $\bm\Sigma\succ 0$. Then
$\bm Q\bm D \stackrel{d}{=} \bm D$ for any $\bm Q\in\mathrm O(n)$.
In particular, each column of $\bm D$ is marginally rotationally
invariant in $\R^n$, even when distinct columns are correlated
through~$\bm\Sigma$.
\end{lem}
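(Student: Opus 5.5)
The plan is to work with the vectorization of $\bm D$ and use the fact that a Gaussian law is determined by its mean and covariance. Since the rows $\bm D_{i\cdot}^\top$ are i.i.d.\ $\mathcal N_L(\bm 0,\bm\Sigma)$, the vector $\mathrm{vec}(\bm D^\top)\in\R^{nL}$, obtained by stacking the rows, is centered Gaussian with covariance $\bm I_n\otimes\bm\Sigma$. Equivalently, $\bm D \stackrel{d}{=} \bm Z\bm\Sigma^{1/2}$, where $\bm Z$ has i.i.d.\ $\mathcal N(0,1)$ entries. Either viewpoint works; I will use the second because it reduces everything to the standard isotropic case.

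\textbf{Step 1: the isotropic case.} I first show that $\bm Q\bm Z\stackrel d=\bm Z$. The columns $\bm z_1,\dots,\bm z_L$ of $\bm Z$ are i.i.d.\ $\mathcal N_n(\bm 0,\bm I_n)$. The columns of $\bm Q\bm Z$ are $\bm Q\bm z_\ell$; they remain independent, being measurable functions of independent vectors. Each is $\mathcal N_n(\bm 0,\bm Q\bm Q^\top)=\mathcal N_n(\bm 0,\bm I_n)$, because a linear image of a Gaussian vector is Gaussian with the transformed covariance. Hence $\bm Q\bm Z\stackrel d=\bm Z$. As a cross-check, the characteristic function gives the same conclusion: $\E\exp(i\,\mathrm{tr}(\bm T^\top\bm Q\bm Z))=\exp(-\tfrac12\|\bm Q^\top\bm T\|_F^2)=\exp(-\tfrac12\|\bm T\|_F^2)$.

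\textbf{Step 2: transfer to general $\bm\Sigma$.} Right multiplication by the deterministic matrix $\bm\Sigma^{1/2}$ commutes with left multiplication by $\bm Q$. Therefore
\[
\bm Q\bm D\stackrel d=\bm Q\bm Z\bm\Sigma^{1/2}=(\bm Q\bm Z)\bm\Sigma^{1/2}\stackrel d=\bm Z\bm\Sigma^{1/2}\stackrel d=\bm D,
\]
where the middle equality in law is Step 1 pushed forward through the measurable map $\bm M\mapsto\bm M\bm\Sigma^{1/2}$. The assumption $\bm\Sigma\succ0$ is used only to make $\bm\Sigma^{1/2}$ well defined and invertible; a PSD square root would in fact suffice.

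\textbf{Step 3: the marginal claim.} Applying the coordinate projection $\bm M\mapsto\bm M\bm e_\ell$ to both sides of $\bm Q\bm D\stackrel d=\bm D$ gives $\bm Q\bm d_\ell\stackrel d=\bm d_\ell$ for each column. This holds regardless of how the columns are correlated through $\bm\Sigma$.

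I do not expect a real obstacle here. The only point needing care is the bookkeeping of which side the rotation acts on. $\bm Q$ acts on the row index of $\bm D$, that is, on the $n$ observations, while $\bm\Sigma$ couples the column index. This is exactly why the two commute and why cross-column correlation is harmless.
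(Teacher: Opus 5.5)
Your proof is correct, but it takes a slightly different route from the paper's. The paper stays in the vectorized picture that you mention and then set aside. It writes $\operatorname{vec}((\bm Q\bm D)^\top)=(\bm Q\otimes\bm I_L)\operatorname{vec}(\bm D^\top)$ and checks in one line, via the Kronecker mixed-product rule, that the covariance $\bm I_n\otimes\bm\Sigma$ is sent to $(\bm Q\bm Q^\top)\otimes\bm\Sigma=\bm I_n\otimes\bm\Sigma$, so the centered Gaussian law is unchanged.

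You instead factor $\bm D\stackrel d=\bm Z\bm\Sigma^{1/2}$ and prove the isotropic case column by column, using independence and $\bm Q\bm z\sim\mathcal N_n(\bm 0,\bm I_n)$. You then transfer the result through the deterministic map $\bm M\mapsto\bm M\bm\Sigma^{1/2}$.

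Both arguments encode the same structural fact: $\bm Q$ acts on the row index and $\bm\Sigma$ on the column index. The paper expresses this as the mixed-product identity; you express it as $\bm Q(\bm Z\bm\Sigma^{1/2})=(\bm Q\bm Z)\bm\Sigma^{1/2}$. The paper's version is shorter and needs neither a square root nor an auxiliary $\bm Z$. Yours avoids Kronecker bookkeeping and uses only the single-vector Gaussian fact. It also shows that Step 2 works for any $\bm D=\bm Z\bm B$ with deterministic $\bm B$ and any left-rotation-invariant $\bm Z$, Gaussian or not.

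You are right that $\bm\Sigma\succ 0$ is not needed, and this holds for both proofs. Invertibility of $\bm\Sigma^{1/2}$ is never actually used in your argument, so that part of your remark can be dropped.
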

\begin{proof}
The proof is deferred to Appendix~\ref{app:B}.
\end{proof}
This standard result shows that Gaussian row sampling is invariant under
left-orthogonal transformations of $\R^n$, regardless of the dependence
structure among the $L$ columns. Moreover, if \(\Psi=\mu^{\otimes n}\) is a product measure, Herschel--Maxwell's theorem, which states that the Gaussian is the only distribution that is both rotationally symmetric and has independent coordinates, implies that mean-zero Gaussian distributions are the \emph{only} possibility. Thus, requiring exact rotational invariance for i.i.d.\ dummies forces the choice of a Gaussian base law. This leads to two canonical choices, conforming with the T-Rex selector, which serve as our main examples:

\begin{itemize}
\item[\textbf{(\,G\,)}]
  $\Psi(H) = \mathcal N(\bm 0, \bm I_H)$, where $\bm I_H := \bm I_n - \frac{1}{n}\one\one^\top$. 
  This defines a standard Gaussian on the subspace $H$.\footnote{
  Equivalently, if $\{\bm f_1, \dots, \bm f_m\}$ is any orthonormal basis for $H$,
  then $\bm d_\ell = \sum_{j=1}^m z_j \bm f_j$ with $z_j \stackrel{\text{i.i.d.}}{\sim} \mathcal{N}(0,1)$.}
\item[\textbf{(\:S\:)}]
  $\Psi(H) = \Unif(S_H)$, where \(S_H := \{\bm x \in H : \|\bm x\|_2 = 1\}\). This defines the uniform distribution on the unit sphere $S_H$ centered in $H$. Equivalently, let $\bm g_\ell \sim \mathcal N(\bm 0, \bm I_H)$ and normalize $\bm d_\ell = \bm g_\ell / \|\bm g_\ell\|_2$.
\end{itemize}

Both laws are rotationally invariant, but they differ in dependence: the Gaussian law yields independent coordinates, while the spherical law couples them through the unit-norm constraint. While (\,G\,) provides a simple baseline, the standardization step in most forward-selection methods maps columns onto \(S_H\), making the spherical model the canonical setting for sequential sampling.

\subsection{Adaptive Stick-Breaking}

\begin{figure}
    \centering
    \newcommand{\subsphereoutline}[5][]{%
    % back half (hidden)
    \draw[#1, dashed, rotate=#5] (#2) circle [x radius=#3, y radius=#4];
    % front half (visible)
    \begin{scope}[rotate=#5]
        \clip (#3,0.1) rectangle (-#3,-#4-0.1);
        \draw[#1, thick] (#2) circle [x radius=#3, y radius=#4];
    \end{scope}
}
\resizebox{\linewidth}{!}{
\begin{tikzpicture}[scale=0.72]

    % --- FIGURE 1: INITIAL STATE (Left) ---
    \begin{scope}[shift={(0,0)}]
        \def\R{2} \def\tilt{-60} \def\yPersp{0.4}
        \fill[teal!70!black] (0,0) circle (1pt);
        \draw[-{Stealth[scale=0.4]}, teal!70!black, very thick] (0,0) -- (\tilt:\R) node[pos=0.75, right] {$\bm{e}_1$};
        \draw[red!40, thick] (0,0) circle (\R);
        \subsphereoutline[red!40]{0,0}{\R}{\yPersp}{0};
        \node[red!60, font=\tiny] (lab1) at (0,1.5) {$\bm d \sim \mathrm{Unif}$};
        \draw[red!60, ->, dashdotted] (lab1) to[bend right=15] (-1.8,0.2);
    \end{scope}

    % --- FIGURE 2: FIRST PROJECTION (Center-Left) ---
    \begin{scope}[shift={(5.5,0)}]
        \def\R{2} \def\alphaVal{0.8} \def\tilt{-60} \def\subR{1.83} \def\yPersp{0.4}
        \fill[blue!60] (0,0) circle (1pt);
        \draw[-{Stealth[scale=0.4]}, teal!70!black, very thick] (0,0) -- (\tilt:\R) node[pos=0.75, right] {$\bm{e}_1$};
        \coordinate (Center) at (\tilt:\alphaVal);
        \subsphereoutline[blue!60]{Center}{\subR}{\yPersp}{\tilt-90}
        \draw[blue!60, thick, -{|[scale=0.6]}] (0.0,0) -- (Center) node[pos=1.1, left, font=\scriptsize] {$\upalpha_1$};
        \draw[gray!40, thick] (0,0) circle (\R);
        \subsphereoutline[red!40]{0,0}{\R}{\yPersp}{\tilt-270};
        \subsphereoutline[gray!40]{0,0}{\R}{\yPersp}{0};
        
        \node[red!60, font=\tiny, align=left] (lab2) at (0,1.3)
          {$\dfrac{\bm d^\perp}{\|\bm d^\perp\|} \sim \mathrm{Unif}$};
        \draw[red!60, ->, dashdotted]
          (lab2) to[bend right=20] (-1,-0.2);

         \node[blue!60, font=\tiny, align=left] (lab2) at (0,-2.2)
          {${\bm d} \sim \mathrm{Unif}$};
        \draw[blue!60, ->, dashdotted]
          (lab2) to[bend right=20] (0,-1.35);
    \end{scope}

    % --- FIGURE 3: SUBSEQUENT DIRECTIONS (Center-Right) ---
    \begin{scope}[shift={(11,0)}]
        \def\R{2} \def\alphaVal{0.8} \def\tilt{-60} \def\subR{1.83} \def\yPersp{0.4}
        \pgfmathsetmacro{\rotAng}{\tilt+90}
        \pgfmathsetmacro{\cA}{cos(\rotAng)} \pgfmathsetmacro{\sA}{sin(\rotAng)}
        \pgfmathsetmacro{\eTwoX}{cos(-20)} \pgfmathsetmacro{\eTwoY}{\yPersp*sin(-20)}
        \pgfmathsetmacro{\eTwoN}{sqrt(\eTwoX*\eTwoX + \eTwoY*\eTwoY)}
        \pgfmathsetmacro{\eThreeX}{cos(250)} \pgfmathsetmacro{\eThreeY}{\yPersp*sin(250)}
        \pgfmathsetmacro{\eThreeN}{sqrt(\eThreeX*\eThreeX + \eThreeY*\eThreeY)}
        \coordinate (E1u) at (\tilt:1);
        \coordinate (E2u) at ({(\eTwoX/\eTwoN)*\cA - (\eTwoY/\eTwoN)*\sA}, {(\eTwoX/\eTwoN)*\sA + (\eTwoY/\eTwoN)*\cA});
        \coordinate (E3u) at ({(\eThreeX/\eThreeN)*\cA - (\eThreeY/\eThreeN)*\sA}, {(\eThreeX/\eThreeN)*\sA + (\eThreeY/\eThreeN)*\cA});
        
        \draw[-{Stealth[scale=0.4]}, teal!70!black, very thick] (0,0) -- ($(0,0) + \R*(E1u)$) node[pos=0.75, right] {$\bm e_1$};
        \coordinate (Center) at ($(0,0) + \alphaVal*(E1u)$);
        \subsphereoutline[blue!40]{Center}{\subR}{\yPersp}{\tilt-90}
        \draw[blue!60, thick, -{|[scale=0.6]}] (0,0) -- (Center) node[pos=1.1, left, font=\scriptsize] {$\upalpha_1$};
        
        \draw[-{Stealth[scale=0.4]}, teal!70!black, very thick] (0,0) -- ($(0,0) + 0.8*\R*(E2u)$) node[pos=0.9, above] {$\bm e_2$};
        \draw[-{Stealth[scale=0.4]}, teal!70!black, very thick] (0,0) -- ($(0,0) + 0.27*\R*(E3u)$) node[pos=0.9, left] {$\bm e_3$};
        
        \fill[blue!60] (0,0) circle (1pt);
        \draw[gray!40, thick] (0,0) circle (\R);
        \subsphereoutline[gray!40]{0,0}{\R}{\yPersp}{0};
        \subsphereoutline[red!40]{0,0}{\R}{\yPersp}{\tilt-270};
        \node[red!60, font=\tiny, align=left] (lab2) at (0,1.3)
          {$\dfrac{\bm d^\perp}{\|\bm d^\perp\|} \sim \mathrm{Unif}$};
        \draw[red!60, ->, dashdotted]
          (lab2) to[bend right=20] (-1,-0.2);
        \node[blue!60, font=\tiny, align=left] (lab2) at (0,-2.2)
          {${\bm d} \sim \mathrm{Unif}$};
        \draw[blue!60, ->, dashdotted]
          (lab2) to[bend right=20] (0,-1.35);
    \end{scope}

    % --- FIGURE 4: VECTOR CONSTRUCTION (Right) ---
    \begin{scope}[shift={(16.5,0)}]
        \def\R{2} \def\alphaVal{0.8} \def\tilt{-60} \def\subR{1.83} \def\yPersp{0.4}
        \def\alphaTwo{0.6} \def\alphaThree{0.6}
        \pgfmathsetmacro{\rotAng}{\tilt+90}
        \pgfmathsetmacro{\cA}{cos(\rotAng)} \pgfmathsetmacro{\sA}{sin(\rotAng)}
        \pgfmathsetmacro{\eTwoX}{cos(-20)} \pgfmathsetmacro{\eTwoY}{\yPersp*sin(-20)}
        \pgfmathsetmacro{\eTwoN}{sqrt(\eTwoX*\eTwoX + \eTwoY*\eTwoY)}
        \pgfmathsetmacro{\eThreeX}{cos(250)} \pgfmathsetmacro{\eThreeY}{\yPersp*sin(250)}
        \pgfmathsetmacro{\eThreeN}{sqrt(\eThreeX*\eThreeX + \eThreeY*\eThreeY)}
        \coordinate (E1u) at (\tilt:1);
        \coordinate (E2u) at ({(\eTwoX/\eTwoN)*\cA - (\eTwoY/\eTwoN)*\sA}, {(\eTwoX/\eTwoN)*\sA + (\eTwoY/\eTwoN)*\cA});
        \coordinate (E3u) at ({(\eThreeX/\eThreeN)*\cA - (\eThreeY/\eThreeN)*\sA}, {(\eThreeX/\eThreeN)*\sA + (\eThreeY/\eThreeN)*\cA});
        
        \draw[-{Stealth[scale=0.4]}, teal!70!black, very thick] (0,0) -- ($(0,0) + \R*(E1u)$) node[pos=0.75, right] {$\bm e_1$};
        \coordinate (Center) at ($(0,0) + \alphaVal*(E1u)$);
        \subsphereoutline[gray!40]{Center}{\subR}{\yPersp}{\tilt-90}
        \draw[blue!60, thick, -{|[scale=0.6]}] (0,0) -- (Center) node[pos=1.1, left, font=\scriptsize] {$\upalpha_1$};
        
        \draw[-{Stealth[scale=0.4]}, teal!70!black, very thick] (0,0) -- ($(0,0) + 0.8*\R*(E2u)$) node[pos=0.9, above] {$\bm e_2$};
        \draw[-{Stealth[scale=0.4]}, teal!70!black, very thick] (0,0) -- ($(0,0) + 0.27*\R*(E3u)$) node[pos=0.9, left] {$\bm e_3$};
        
        \coordinate (AlphaTwoPt) at ($(0,0) + 0.8*\R*\alphaTwo*(E2u)$);
        \coordinate (AlphaThreePt) at ($(0,0) + 0.27*\R*\alphaThree*(E3u)$);
        \draw[blue!60, thick, -{|[scale=0.6]}] (0,0) -- (AlphaTwoPt) node[pos=0.8, above, font=\scriptsize] {$\upalpha_2$};
        \draw[blue!60, thick, -{|[scale=0.6]}] (0,0) -- (AlphaThreePt) node[pos=0.2, left, font=\scriptsize] {$\upalpha_3$};
        
        \coordinate (P1) at (Center);
        \coordinate (P2) at ($(P1) + 0.8*\R*\alphaTwo*(E2u)$);
        \coordinate (D)  at ($(P2) + 0.27*\R*\alphaThree*(E3u)$);
        \draw[blue!40, dashed,-{Stealth[scale=0.4]}] (P1) -- (P2);
        \draw[blue!40, dashed, -{Stealth[scale=0.4]}] (P2) -- (D);
        \draw[-{Stealth[scale=0.4]}, purple, thick] (0,0) -- (D) node[anchor=south west] {$\bm d$};
        
        \fill[blue!60] (0,0) circle (1pt);
        \draw[gray!40, thick] (0,0) circle (\R);
        \subsphereoutline[gray!40]{0,0}{\R}{\yPersp}{0};
        \subsphereoutline[gray!40]{0,0}{\R}{\yPersp}{\tilt-270};
    \end{scope}

\end{tikzpicture}
}

    \caption{
    Illustration of the stick-breaking construction for spherical dummies.
    From left to right: Initially, the dummy $\bm d$ is uniformly distributed on the sphere $S_H$, and the first direction $\bm e_1$ is revealed.
    Drawing $\upalpha_1$ constrains the normalized residual direction $\bm d^\perp/\|\bm d^\perp\|$ to lie uniformly on the red subsphere, while the dummy $\bm d$ itself becomes uniformly distributed on the corresponding affine slice (blue).
    The FS algorithm then returns the next direction $\bm e_2$ (which in three dimensions fixes $\bm e_3$ up to sign) and draws $\upalpha_2$. The magnitude of $\upalpha_3$ is fixed and its sign drawn uniformly at random. At this point all randomness is exhausted, and the dummy $\bm d$ is fully determined, i.e., $\F_k$-measurable.
}
    \label{fig:stick_breaking}
\end{figure}
In our framework, the orthonormal system $\{\bm e_k\}$ is genuinely \emph{adaptive}, evolving with the data. This raises a crucial question: \emph{does the conditional uniformity of the unrealized dummy components persist when the basis itself is a random, filtration-measurable object?} 

We show that this conditional symmetry does indeed survive. Rotational invariance on $S_H$ propagates through the filtration, ensuring that at each step, the residual component of any unrealized dummy remains uniform on the orthogonal complement of the currently revealed subspace (equivalently, the nested sequence $\V_0\subset\V_1\subset\cdots$ defines a Markov chain on the Grassmannian $\mathrm{Gr}(k,H)$, though we do not develop this perspective further). This stability allows us to translate the abstract symmetry into a recursive sampling scheme where, at each step $k$, the unrevealed portion of a dummy $\bm d_\ell$ is simply restricted to a smaller, lower-dimensional sphere. This is formalized in the following lemma.

\begin{lem}[Conditional Uniformity]
\label{lem:conditional-uniformity-affine}
Fix $k\in\{0,\dots,m-1\}$ and $\ell\in\{1,\dots,L\}$ and assume $\tau_\ell>k$ (i.e., dummy $\ell$ has not been selected yet at iteration k). Denote $\upalpha_i := \upalpha_{i\ell}$.
With the common filtration $\F_k^+$ defined in \eqref{eq:common-filtration}, the random vector $\bm d_\ell$ satisfies, conditionally on $\F_k^+$,
\[
\bm d_\ell \;\big|\; \F_k^+ \;\sim\; \Unif\bigl(S_H\cap \As_k\bigr),
\]
where $\As_k := \{ \bm x \in H : \langle \bm x,\bm e_i\rangle = \upalpha_i,\ i=1,\dots,k\}$. Equivalently, the residual $\bm d_\ell^\perp := \bm d_\ell - \sum_{i=1}^k \upalpha_i \bm e_i$ has direction uniformly distributed on $S_H\cap\V_k^\perp$ and radius $R_{k}=\sqrt{1-\sum_{i=1}^k\upalpha_i^2}$.
\end{lem}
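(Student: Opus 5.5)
Under the spherical base law (S), the natural route is induction on $k$, with the induction hypothesis being exactly the statement of Lemma~\ref{lem:conditional-uniformity-affine}. It holds simultaneously for all unrealized dummies, together with conditional independence of the unrealized dummies given $\F_k^+$.

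\textbf{Base case.} For $k=0$ we have $\F_0^+=\F_0$ trivial and $\As_0=H$. The claim is then just $\bm d_\ell\sim\Unif(S_H)$, and independence across $\ell$ holds by the product structure of \eqref{eq:probability_space}.

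\textbf{Inductive step, part 1: passing to $\F_k^+$.} Assume the claim at level $k-1$, i.e.\ given $\F_{k-1}^+$ the unrealized $\bm d_\ell$ are independent and uniform on $S_H\cap\As_{k-1}$. Note that $\bm e_k$ is $\F_{k-1}^+$-measurable.

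\begin{itemize}
\item First I would reveal $\upalpha_{k\ell}=\langle\bm d_\ell,\bm e_k\rangle$. Write $\bm d_\ell=\bm d^\parallel_{\ell,k-1}+R_{k-1}\bm u$, where $\bm u$ is uniform on the unit sphere of $W:=H\cap\V_{k-1}^\perp$, and $\bm e_k\in W$.
\item Conditioning a uniform vector on a sphere on one coordinate $\langle\bm u,\bm e_k\rangle$ leaves the remaining part uniform. Concretely, $\bm u=t\bm e_k+\sqrt{1-t^2}\,\bm w$ with $\bm w$ uniform on the unit sphere of $W\cap\bm e_k^\perp$, independent of $t$.
\item To prove this I would use rotational invariance. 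For any $\bm Q\in\mathrm O(W\cap \bm e_k^\perp \mid \text{rest})$, i.e.\ any $\bm Q$ fixing $\V_k$ pointwise, the conditional law of $\bm u$ is invariant under $\bm Q$ and $\bm Q$ fixes $t$. Hence the conditional law of $\bm w$ given $t$ is $\mathrm O$-invariant on that sphere, and therefore uniform by uniqueness of the invariant probability measure.
\item This yields the claim given $\F_k=\sigma(\F_{k-1}^+,\{\upalpha_{k\ell}\})$, with conditional independence preserved because the reveals act coordinatewise on independent dummies.
\end{itemize}

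\textbf{Inductive step, part 2: from $\F_k$ to $\F_k^+$.} This step adds the realized dummy $\bm d_{\ell^\star}$, for which $\tau_{\ell^\star}=k$. The selection $\phi_k$ is $\F_k$-measurable, so the event $\{\tau_\ell>k\}$ is in $\F_k$. By \eqref{eq:realize-dummy} and conditional independence, the residual of $\bm d_{\ell^\star}$ is independent of the residuals of the other unrealized dummies given $\F_k$. Conditioning additionally on it therefore does not change their law. Since $\V_k$, and hence $\As_k$, are $\F_k$-measurable, the conditional law of $\bm d_\ell$ given $\F_k^+$ is $\Unif(S_H\cap\As_k)$. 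The radius formula follows from $\|\bm d_\ell\|=1$ and Pythagoras.

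\textbf{Main obstacle.} The hard part is making rigorous that the adaptive, random basis does not break the symmetry. The rotations used must depend on the conditioning information, so invariance has to be checked with $\F_{k}$-measurable orthogonal maps. I would handle this with a regular-conditional-probability argument. For a measurable selection $\bm Q(\omega)$ in the pointwise stabilizer of $\V_k$, which exists since $\V_k$ is $\F_k$-measurable, I would show $\E[f(\bm Q\bm d_\ell^\perp)\,\one_A]=\E[f(\bm d_\ell^\perp)\,\one_A]$ for $A\in\F_k^+$. To do so I would write the expectation as an integral over the previous-level conditional law, which is uniform by the inductive hypothesis, and apply Fubini. Invariance of that uniform law under fixed rotations then transfers to random ones because $\bm Q$ is measurable with respect to the conditioning $\sigma$-algebra.
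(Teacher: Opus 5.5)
Your proof is correct, but it is organized differently from the paper's.

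\textbf{The paper's route.} The paper argues in one shot at level $\F_k^+$. It observes that $S_H\cap\As_k$ is an orbit of the pointwise stabilizer $\mathrm O(\V_k^\perp\mid\V_k)$. It then asserts that the conditional law of $\bm d_\ell$ given $\F_k^+$ is invariant under this subgroup, because the unconditional law is Haar-induced. It concludes by uniqueness of the invariant probability measure.

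\textbf{Your route.} You use the same two ingredients, but only in the classical fixed-direction setting, inside an induction that peels the filtration one reveal at a time. From $\F_{k-1}^+$ to $\F_k$, you condition a uniform vector on $S_H\cap\As_{k-1}$ on one coordinate along the already measurable $\bm e_k$. From $\F_k$ to $\F_k^+$, you condition on a realized dummy that is conditionally independent of the others.

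\textbf{What each buys.} Your route supplies exactly the step the paper leaves implicit. $\V_k$ depends on the earlier projections of $\bm d_\ell$ itself, on the other dummies, and on realized dummies. So invariance of the conditional law under this random subgroup does not follow merely from invariance of the unconditional law, and your induction is what justifies it. Your hypothesis also carries the conditional independence of the unrealized dummies given $\F_k^+$. The proof of Theorem~\ref{thm:vd-fs-dist} invokes that fact with only a one-line justification. The paper's version is shorter and more conceptual, but closer to an outline.

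\textbf{Three small fixes.}
\begin{enumerate}
\item In Part 2, citing \eqref{eq:realize-dummy} is off target. That display describes the virtual sampler, whereas the lemma concerns dummies drawn a priori on the canonical space. What you actually need is your inductive conditional independence given $\F_k$, and it suffices.
\item Because $\ell^\star$ is random, restrict to the events $\{\ell^\star=j\}\in\F_k$. On each such event, the trace of $\F_k^+$ coincides with that of $\sigma(\F_k,\bm d_j)$.
\item Your ``main obstacle'' paragraph is largely superseded by Part 1, where $\bm e_k$ and $W$ are fixed under the regular conditional law given $\F_{k-1}^+$. If you keep the measurable-selection route, one selection $\bm Q$ is not enough to get almost-sure invariance under the whole group. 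You need the identity for a countable dense family of stabilizer elements, for example parametrized through an $\F_k$-measurable orthonormal frame of $H\cap\V_k^\perp$.
\end{enumerate}
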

\begin{proof}
The proof is deferred to Appendix~\ref{app:B}.
\end{proof}

Lemma~\ref{lem:conditional-uniformity-affine} establishes that each unrealized dummy remains uniform on a subsphere in $H$ despite the adaptive basis. In a fixed basis, it is a classical result that the squared coordinates of such a vector follow a symmetric Dirichlet law and can be generated via stick-breaking. We now show that this Dirichlet structure is preserved under the filtration $(\F_k)$, enabling the sequential generation of the path. To that end, let us first briefly review the finite-dimensional stick-breaking construction of the Dirichlet distribution.

\paragraph{Stick-Breaking in a Fixed Basis}

Let $\bm d \sim \Unif(S_H)$ and let $\{\bm f_1,\dots,\bm f_m\}$ be any fixed orthonormal basis of $H$. Define
\[
\upalpha_k := \langle \bm d, \bm f_k\rangle,
\qquad
Z_k := \upalpha_k^2,
\qquad
\sum_{k=1}^m Z_k = 1.
\]
It is well known that the squared coordinates of a uniform spherical vector follow a Dirichlet law,
\[
(Z_1,\dots,Z_m)
\sim \mathrm{Dirichlet}\!\left(\tfrac12,\dots,\tfrac12\right).
\]

Such a Dirichlet vector is \emph{neutral} \citep{frigyik2010,Connor1969}: for every $k<m$, the normalized remainder satisfies 
\[
\frac{(Z_{k+1},\dots,Z_m)}
     {1-\sum_{i=1}^k Z_i}~\sim~
\mathrm{Dirichlet}
\bigl(\tfrac12,\dots,\tfrac12\bigr)
\quad\text{on the $(m-k)$-simplex (dimension $m-k-1$)}
\]
and is independent of $(Z_1,\dots,Z_k)$.
Aggregating the last $m-k$ coordinates of this Dirichlet distribution\footnote{ We use the identity that a vector $(W_1,\dots,W_r)\sim\mathrm{Dirichlet}(\zeta_1,\dots,\zeta_r)$ satisfies~$W_i\sim~\mathrm{Beta}(\zeta_i,\sum_{\ell\neq i}\zeta_\ell)$.}
yields the conditional marginal
\[
\frac{Z_{k+1}}{1-\sum_{i=1}^k Z_i}
\,\Big|\,(Z_1,\dots,Z_k)
~\sim~
\mathrm{Beta}\!\left(\tfrac12,\tfrac{m-k-1}{2}\right),
\qquad k=1,\dots,m-2.
\]

Thus the coordinates admit the classical \emph{finite stick-breaking} representation:
\[
Z_1 = V_1,\qquad
Z_2 = (1-V_1)V_2,\qquad
\dots,\qquad
Z_m = \prod_{i=1}^{m-1}(1-V_i),
\]
where the independent breaking proportions satisfy
\[
V_i \sim \mathrm{Beta}\!\left(\tfrac12,\tfrac{m-i}{2}\right),
\qquad i=1,\dots,m-1.
\]
Finally, the original coordinates satisfy $\upalpha_k = S_k \sqrt{Z_k}$, where $S_k \in \{\pm1\}$ are i.i.d.\ Rademacher signs ($\Pr(S_k = +1) = \Pr(S_k = -1) = 1/2$).

\medskip
This provides the canonical link between the \(\mathrm{Dirichlet}(\tfrac12,\dots,\tfrac12)\) distribution and a sequence of independent Beta marginals, the probabilistic foundation for the adaptive stick-breaking construction developed below.

\begin{lem}
\label{lem:adaptive_stick_breaking}
Fix $\ell\in\{1,\dots,L\}$ and let $(\bm e_1,\dots,\bm e_m)$ be an orthonormal basis of $H$
such that for each $k\in\{0,\dots,m-1\}$ the prefix $(\bm e_1,\dots,\bm e_k)$ is $\F_k^+$-measurable.
Define
\[
Z_i := \upalpha_{i\ell}^2=\langle \bm d_\ell,\bm e_i\rangle^2,\qquad i=1,\dots,m.
\]
Assume $\bm d_\ell \notin \V_k$ for all $k<m$.

\begin{enumerate}
  \item $(Z_1,\dots,Z_m)\sim \mathrm{Dirichlet}(\tfrac12,\dots,\tfrac12)$.
  \item For any $k\in\{0,\dots,m-1\}$, on the event $\{\tau_\ell>k\}$,
  \[
  (Z_{k+1},\dots,Z_m)\mid \F_k^+
  ~\sim~ \Bigl(1-\sum_{i=1}^k Z_i\Bigr)\,
  \mathrm{Dirichlet}\!\left(\tfrac12,\dots,\tfrac12\right).
  \]
\end{enumerate}
\end{lem}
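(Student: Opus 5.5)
I would prove part 2 first and then obtain part 1 from it by iterating, with Lemma~\ref{lem:conditional-uniformity-affine} as the main input. Fix $k$ and work on the event $\{\tau_\ell>k\}$. By Lemma~\ref{lem:conditional-uniformity-affine}, conditionally on $\F_k^+$ the residual $\bm d_\ell^\perp=\bm d_\ell-\sum_{i\le k}\upalpha_{i\ell}\bm e_i$ has radius $R_k=\sqrt{1-\sum_{i\le k}Z_i}$. This radius is $\F_k^+$-measurable. The direction $\bm u:=\bm d_\ell^\perp/R_k$ is uniform on the unit sphere of the $(m-k)$-dimensional space $W_k:=H\cap\V_k^\perp$. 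The assumption $\bm d_\ell\notin\V_k$ guarantees $R_k>0$.

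The obstacle is that the remaining basis vectors $\bm e_{k+1},\dots,\bm e_m$ are not $\F_k^+$-measurable; they are chosen adaptively later. So I cannot simply say that $\bm u$ has a fixed coordinate frame in $W_k$. I would handle this by induction on the steps $j=k,\dots,m-1$, using the tower property and the fact that each $\bm e_{j+1}$ is $\F_j^+$-measurable. The induction statement is the following. Conditionally on $\F_j^+$, on $\{\tau_\ell>j\}$, the normalized remainder $(Z_{j+1},\dots,Z_m)/R_j^2$ is $\mathrm{Dirichlet}(\tfrac12,\dots,\tfrac12)$ on $m-j$ coordinates.

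The one-step argument goes as follows. Given $\F_j^+$, the vector $\bm e_{j+1}\in W_j$ is fixed, and the unit direction $\bm u_j$ of the residual is uniform on $S(W_j)$. Because $S(W_j)$ is rotationally symmetric, I may complete $\bm e_{j+1}$ to an arbitrary orthonormal basis of $W_j$. The fixed-basis fact then gives that $\langle\bm u_j,\bm e_{j+1}\rangle^2\sim\mathrm{Beta}(\tfrac12,\tfrac{m-j-1}{2})$. By the neutrality property recalled above, this Beta variable is independent of the normalized residual direction in $W_{j+1}$, and that direction is again uniform. Hence $Z_{j+1}=R_j^2V_{j+1}$ with $V_{j+1}\mid\F_j^+\sim\mathrm{Beta}(\tfrac12,\tfrac{m-j-1}{2})$, and $R_{j+1}^2=R_j^2(1-V_{j+1})$.

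Two cases arise at the next step. If dummy $\ell$ is not selected at step $j+1$, Lemma~\ref{lem:conditional-uniformity-affine} applied at $j+1$ restores conditional uniformity given $\F_{j+1}^+$. In particular, the next breaking proportion $V_{j+2}$ is conditionally Beta given $\F_{j+1}^+$, whatever $\bm e_{j+2}$ is. Since that conditional law does not depend on $\F_{j+1}^+$, $V_{j+2}$ is independent of $V_{j+1},\dots$. If dummy $\ell$ is selected at some later step, the lemma no longer applies directly. Here I would use that $\bm d_\ell$ is then realized by drawing its residual uniformly per \eqref{eq:realize-dummy}. Every subsequent basis vector is then measurable with respect to $\sigma(\F^+,\text{other independent dummies})$, which is independent of the direction of $\bm d_\ell$'s residual within the remaining complement after the first realized step. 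Alternatively, the direction's coordinates in any basis chosen independently of it remain uniform.

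Multiplying the conditional densities via the tower property shows that $V_{k+1},\dots,V_{m-1}$ are independent with $V_i\sim\mathrm{Beta}(\tfrac12,\tfrac{m-i}{2})$ given $\F_k^+$. The finite stick-breaking representation then identifies $(Z_{k+1},\dots,Z_m)/R_k^2$ as $\mathrm{Dirichlet}(\tfrac12,\dots,\tfrac12)$, which is part 2. Part 1 is the case $k=0$, where $\F_0^+$ is trivial and $\tau_\ell>0$ always holds.

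The main obstacle is the post-selection case, where the basis adapts to $\bm d_\ell$ itself; this is where I expect the most care to be needed. Concretely, I need that a basis vector $\bm e_{k+1}=\bm d^\perp/\|\bm d^\perp\|$ built from $\bm d_\ell$ itself yields $Z_{k+1}=R_k^2$ and zeros afterwards. That would contradict the Dirichlet claim, so the hypothesis $\bm d_\ell\notin\V_k$ for all $k<m$ must exclude it. I would first verify that this hypothesis indeed forces $\tau_\ell\ge m$ (or rules out self-selection) and restrict the induction accordingly.
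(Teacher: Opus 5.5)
Your proposal follows the paper's proof. Lemma~\ref{lem:conditional-uniformity-affine} applied at each step gives $Z_{j+1}=R_j^2V_{j+1}$ with $V_{j+1}\mid\F_j^+\sim\mathrm{Beta}(\tfrac12,\tfrac{m-j-1}{2})$, the tower property makes the $V$'s independent, and the finite stick-breaking characterization then yields the conditional Dirichlet law, with part~1 as the case $k=0$.

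Your closing remark correctly resolves the self-selection issue, which the paper leaves implicit. Selecting $\ell$ at a step $j\le m-2$ puts $\bm d_\ell\in\V_{j+1}$, so the hypothesis, read as an almost-sure property of the selector rather than an event to condition on, forces $\tau_\ell\ge m-1$; selection at step $m-1$ is harmless, since $Z_m=R_{m-1}^2$ regardless. This makes your middle post-selection paragraph unnecessary, and you should drop it: its claim that later basis vectors are independent of $\bm d_\ell$'s residual is false, because $\bm e_{\tau_\ell+1}$ is exactly that residual's direction.
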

\begin{proof}
The proof is deferred to Appendix~\ref{app:B}.
\end{proof}
\begin{rmrk}[Stick-breaking and the adaptive twist]
The classical fact is that if $\bm d\sim\Unif(S_H)$ and
$\Ec_m$ is a fixed orthonormal basis, then $(\upalpha_1^2,\dots,\upalpha_m^2)$ is Dirichlet$(\tfrac12,\dots,\tfrac12)$, equivalently obtained by the standard stick-breaking scheme. Lemma~\ref{lem:adaptive_stick_breaking} shows this Dirichlet structure survives \emph{adaptive} basis selection in \(H\): even when $\bm e_{k+1}$ is chosen measurably from $\F_k^+$, the residual direction is uniform on the lower-dimensional sphere in \(H\), so the next squared projection is Beta-distributed
and the remaining ``mass'' continues to break according to the same Dirichlet law. This justifies using the same sequential Beta sampling in Algorithm~\ref{alg:seq_spherical} despite the data-dependent basis.
\end{rmrk}

\begin{algorithm}[ht]
\caption{Adaptive stick-breaking}
\label{alg:seq_spherical}
\begin{algorithmic}[1]
  \State $R_0^2 \gets 1$.
  \For{$k = 1$ to $m-1$}
    \State Draw independently of $\F_{k-1}^+$:
      \[
        U_k \sim \mathrm{Beta}\Bigl(\tfrac12,\tfrac{m-k}{2}\Bigr),
        \qquad
        S_k \sim \{\pm1\} \text{ (each prob.\ }1/2).
      \]
    \State Set
      \[
        \upalpha_k \gets S_k\sqrt{\,R_{k-1}^2\,U_k\,},
        \qquad
        R_k^2 \gets R_{k-1}^2(1-U_k).
      \]
    \State Reveal $\upalpha_k=\langle \bm d,\bm e_k\rangle$ (so it is $\F_k$--measurable).
  \EndFor
  \State Draw $S_m \sim \{\pm1\}$ and set $\upalpha_m \gets S_m\,R_{m-1}$.
  \State \Return $\bm d = \sum_{k=1}^m \upalpha_k\,\bm e_k$.
\end{algorithmic}
\end{algorithm}

\begin{cor}
\label{cor:alg_dist}
Let \((\bm e_1, \dots, \bm e_m)\) be any orthonormal basis of \(H\), possibly chosen adaptively. Let \((\upalpha_1, \dots, \upalpha_m)\) be the sequence of coefficients constructed according to Algorithm~\ref{alg:seq_spherical}. Then
\(
\bm d := \sum_{i=1}^m \upalpha_i\,\bm e_i
\)
is distributed uniformly on \(S_H\subset H\), i.e.\ \(\bm d \sim \Unif(S_H)\).
\end{cor}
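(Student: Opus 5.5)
We will show that the law of $\bm d$ produced by Algorithm~\ref{alg:seq_spherical} is $\Unif(S_H)$. The idea is to compare it with a genuinely uniform vector $\bm d'\sim\Unif(S_H)$ read in the same adaptive basis $(\bm e_1,\dots,\bm e_m)$. We will check that the two coefficient sequences $(\upalpha_k)$ and $(\upalpha'_k)$, together with the basis, have the same joint law.

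\textbf{Step 1: the uniform vector has the algorithm's kernels.} Take $\bm d'\sim\Unif(S_H)$ and set $\upalpha'_k=\langle\bm d',\bm e_k\rangle$ in the adaptive basis. By Lemma~\ref{lem:adaptive_stick_breaking}(2), on $\{\tau>k-1\}$ and conditionally on $\F_{k-1}^+$, the vector $(Z'_k,\dots,Z'_m)$ is $R_{k-1}^2\,\mathrm{Dirichlet}(\tfrac12,\dots,\tfrac12)$ on $m-k+1$ coordinates. Aggregation then gives
\[
Z'_k/R_{k-1}^2\mid\F_{k-1}^+\sim\mathrm{Beta}\bigl(\tfrac12,\tfrac{m-k}{2}\bigr).
\]
For the sign of $\upalpha'_k$ we use Lemma~\ref{lem:conditional-uniformity-affine}. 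Given $\F_{k-1}^+$, the residual direction is uniform on $S_H\cap\V_{k-1}^\perp$, and this law is invariant under the reflection $\bm e_k\mapsto-\bm e_k$. That reflection lies in $\mathrm O(\V_{k-1}^\perp\mid\V_{k-1})$ and leaves $\bm e_k$ itself $\F_{k-1}^+$-measurable. So the sign is an independent fair Rademacher variable, independent of the magnitude. Hence the one-step conditional law of $\upalpha'_k$ given $\F_{k-1}^+$ is exactly the law of $S_k\sqrt{R_{k-1}^2U_k}$ in the algorithm. At $k=m$ the magnitude is forced to be $R_{m-1}$ and only the sign is random, again matching the algorithm.

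\textbf{Step 2: equal kernels give equal joint laws.} The basis vector $\bm e_k$ is a measurable function of the past, namely $(\bm X,\bm y)$ and the previously revealed coefficients or realizations. The algorithm and the uniform vector use the same basis-update rule. The algorithm draws $\upalpha_k$ from the same conditional kernel as $\upalpha'_k$ given the same history. By induction on $k$, using the Ionescu--Tulcea construction or simply the chain rule for conditional densities, the joint laws of $(\bm e_1,\upalpha_1,\dots,\bm e_m,\upalpha_m)$ coincide. Then $\bm d=\sum_i\upalpha_i\bm e_i$ is the same measurable map of this sequence as $\bm d'=\sum_i\upalpha'_i\bm e_i$, because $(\bm e_i)$ is an orthonormal basis of $H$. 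Therefore $\bm d\stackrel d=\bm d'\sim\Unif(S_H)$.

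\textbf{Main obstacle.} The delicate point is the case where the basis depends on $\bm d$ itself, i.e.\ the step at which the dummy is selected. After that step, $\bm e_{k+1}$ is built from $\bm d^\perp$, so the ``fresh projection'' kernel no longer applies. We will handle this by stopping the comparison at $\tau$. Up to $\tau$, Step 2 gives equality of laws. At $\tau$, the algorithm's remaining coefficients produce a residual of norm $R_{\tau}$ whose direction must be uniform on $S_H\cap\V_\tau^\perp$. This is what Lemma~\ref{lem:conditional-uniformity-affine} asserts for the true dummy, and it is the realization step \eqref{eq:realize-dummy}. We will verify it by applying the already-established fixed-basis stick-breaking fact to the residual sphere of dimension $m-\tau$, using any completion of the basis that is measurable with respect to $\F_\tau$. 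This is valid because the algorithm's future $U_i,S_i$ are independent of $\F_\tau$. Combining the pre-$\tau$ and post-$\tau$ parts completes the identification.
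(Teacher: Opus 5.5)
Your proposal is correct and takes the same route as the paper: both reduce the claim to Lemmas~\ref{lem:conditional-uniformity-affine} and~\ref{lem:adaptive_stick_breaking} and identify Algorithm~\ref{alg:seq_spherical} with the conditional laws of a genuinely uniform vector read in the adaptive basis. Your version is more explicit than the paper's, which only cites the Dirichlet law of the squared coefficients and the final sign $S_m$ and then says this ``is precisely'' the algorithm: you match the full one-step kernels, signs included, and invoke Ionescu--Tulcea to equate the joint law of basis and coefficients, which is what is actually needed since $\bm d$ depends on both; your ``main obstacle'' paragraph is harmless but unnecessary, because Algorithm~\ref{alg:seq_spherical} already requires each $\bm e_k$ to be fixed before $(U_k,S_k)$ are drawn, so a basis built from $\bm d^\perp$ lies outside the corollary's scope.
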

\begin{proof}
The proof is deferred to Appendix~\ref{app:B}.
\end{proof}
Corollary~\ref{cor:alg_dist} establishes that Algorithm~\ref{alg:seq_spherical}  generates dummy vectors with the correct spherical distribution, regardless of  how the orthonormal basis is constructed from past data. In particular, each virtual dummy produced by the sequential sampler has exactly the same distribution as an explicitly drawn dummy, and the sampler reveals fresh projections in the same order and with the same conditional laws as the augmented procedure.

This alignment of conditional distributions is precisely what is required for the virtual-dummy forward selector to mimic the behavior of the explicitly augmented version. We now show that the entire forward-selection trajectories of the two procedures agree in distribution.

\begin{thm}[Distributional equivalence of VD--FS and AD--FS]
\label{thm:vd-fs-dist}
Let $\bm d_1,\dots,\bm d_L \stackrel{\mathrm{i.i.d.}}{\sim}\Psi(H)$ be dummy 
vectors independent of $(\bm X,\bm y)$, where $\Psi(H)$ is $\mathcal{N}(\bm 0, \bm I_H)$ or $ \Unif(S_H)$.
Let $\mathcal{P}_{\mathrm{ad}}$ denote the augmented-dummy forward selector 
(AD--FS), and $\mathcal{P}_{\mathrm{vd}}$ the virtual-dummy forward selector (VD--FS), 
both using the same deterministic update rule and tie-breaking.
Suppose that the forward-selection rule $\phi_k$ is $\F_k$-measurable at each step. Then the two procedures generate trajectories with identical probability law:
\[
\mathcal{P}_{\mathrm{ad}} \stackrel{d}{=} \mathcal{P}_{\mathrm{vd}}.
\]
\end{thm}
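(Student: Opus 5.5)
We prove the theorem by induction on the step index $k$, comparing the two procedures through their one-step transition kernels. The trajectory up to step $k$ is the state
\[
\mathcal T_k := \bigl(\A_k,\ \bm e_1,\dots,\bm e_{k},\ \{\upalpha_{i\ell}\}_{i\le k,\,\tau_\ell>i-1},\ \{\bm d_\ell\}_{\tau_\ell<k}\bigr).
\]
Every path statistic (selected indices, correlations, step sizes, residuals) is a deterministic function of $\mathcal T_k$ and $(\bm X,\bm y)$. This holds because the update rule is fixed and $\phi_k$ is $\F_k$-measurable, and $\F_k$ is generated by exactly these quantities. It therefore suffices to show that the laws of $(\mathcal T_0,\dots,\mathcal T_K)$ agree for every $K\le m$.

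By the Ionescu--Tulcea / disintegration principle, two processes with the same initial law and the same transition kernels $\mathcal T_k\mapsto\mathcal T_{k+1}$ have the same joint law. The initial state $\F_0$ is deterministic in both procedures, and $\bm e_1=\bm y/\|\bm y\|$ is the same in both.

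For $\mathcal P_{\mathrm{ad}}$, the transition kernel is read off as the regular conditional law of $\mathcal T_{k+1}$ given $\F_k^+$ (respectively $\F_k$) when the whole ensemble $\bm D\sim\Psi(H)^{\otimes L}$ is drawn in advance. The step $\F_k\to\F_k^+$ proceeds as follows. The selection $j^\star=\phi_k$ is a common measurable function in both procedures. If $j^\star$ is a dummy $\ell^\star$, the augmented process reveals $\bm d_{\ell^\star}$, whose conditional law given $\F_k$ is, by Lemma~\ref{lem:conditional-uniformity-affine}, uniform on $S_H\cap\As_k$. Here we use the fact that $\tau_{\ell^\star}>k-1$ and that selection at step $k$ depends only on $\F_k$, so the event $\{\phi_k=p+\ell^\star\}\in\F_k$ does not distort the residual law. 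The virtual process draws $\bm d^\perp_{\ell^\star,k}\sim\Psi_k^\perp$ per \eqref{eq:realize-dummy}, which is the same law, so the two kernels coincide.

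The step $\F_k^+\to\F_{k+1}$ is handled by Lemma~\ref{lem:adaptive_stick_breaking}(2), applied to each unrealized dummy. Conditionally on $\F_k^+$, the augmented process has $\upalpha_{k+1,\ell}^2/R_k^2\sim\mathrm{Beta}(\tfrac12,\tfrac{m-k-1}{2})$ with an independent symmetric sign. This is exactly the draw of Algorithm~\ref{alg:seq_spherical}. For the Gaussian case (G), we factor $\bm d=\|\bm d\|\cdot\bm d/\|\bm d\|$, with the radius independent of the direction. We then either apply the spherical result to the direction and carry the radius as an extra latent variable, or directly use that conditionally on revealed coordinates the residual is $\mathcal N(\bm 0,\bm P_{H\cap\V_k^\perp})$, which gives $\upalpha_{k+1,\ell}\sim\mathcal N(0,1)$ fresh.

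The key remaining point is \emph{conditional independence across dummies}: given $\F_k^+$, the residuals $\{\bm d^\perp_{\ell,k}\}_{\tau_\ell>k}$ must be jointly independent, not just marginally correctly distributed. We prove this by extending the proof of Lemma~\ref{lem:conditional-uniformity-affine} to the product. For any $\F_k^+$-measurable rotation $\bm Q\in\mathrm O(\V_k^\perp\mid\V_k)$ applied independently to each unrealized residual, the joint law of $\bm D$ together with the history is invariant. The reason is that the history depends on unrealized dummies only through their projections on $\V_k$, which the rotation fixes, and $\Psi(H)^{\otimes L}$ is invariant under $\mathrm O(H)^L$. Uniqueness of the invariant measure on the product of spheres then yields conditional i.i.d.\ uniformity.

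We expect this step to be the main obstacle. The basis $\V_k$ and the history are themselves random and functions of realized dummies, so the argument has to be made precise via a measurable choice of rotations, or by induction on $k$ with the dummy selected at step $k$ conditioned out. With the kernels matched at both half-steps, the induction closes and $\mathcal P_{\mathrm{ad}}\stackrel d=\mathcal P_{\mathrm{vd}}$.
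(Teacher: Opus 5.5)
Your proposal is correct and follows essentially the same route as the paper: you match the $\F_k\to\F_k^+\to\F_{k+1}$ transition kernels of the two procedures and conclude by the chain rule. For the fresh projections you use conditional uniformity on the slice for the spherical law and isotropy of the residual in $H\cap\V_k^\perp$ for the Gaussian law, together with conditional independence of unrealized dummies coming from the product structure. You are more explicit than the paper, which treats the realization kernel only implicitly and asserts the cross-dummy conditional independence in one line from the product law $\Psi(H)^{\otimes L}$; of your two suggested fixes for that step, the induction on $k$ is the cleaner way to make it rigorous, since the residual of the dummy revealed at step $k$ is conditionally independent of the other residuals given $\F_k$.
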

\begin{proof}
The proof is deferred to Appendix~\ref{app:B}.
\end{proof}
The distributional equivalence established in Theorem~\ref{thm:vd-fs-dist} implies that, for any fixed design $(\bm X,\bm y)$, the virtual-dummy procedure and the explicitly augmented procedure generate identical conditional laws for all quantities produced by the forward-selection path: correlations, step statistics, active sets, and stopping times. This becomes useful when integrated in inference methods that rely on the competition of real and dummy variables in order to control the FDR. 
The virtual-dummy construction is not tied to a particular selector. The only requirement imposed so far is that the selection rule at step $k$ be $\F_k$-measurable. This naturally includes a broad class of forward-selection procedures. A detailed compatibility discussion, together with examples of selectors that are and are not compatible, is given in Appendix~\ref{app:compatible}. In Section~\ref{sec:Var_sel_FDR}, we show how this transfers to LARS and, more generally, to the T--Rex methodology.
\section{Asymptotic and Finite Sample Analysis}
\label{sec:universality}
In this section, we complement the exact finite--sample equivalence of Section~\ref{sec:stick_breaking} (which requires rotationally invariant dummy laws) with an asymptotic result for generic standardized i.i.d.\ dummy constructions. Specifically, we prove a \emph{pathwise universality} statement in the regime where the sample size $n\to\infty$ while the number of forward-selection steps $K$ (and the pool size $L$ and ambient dimension $p$) are held fixed: the joint law of the first $K$ selected indices is asymptotically the same as if the dummies were Gaussian.
Crucially, this universality holds despite the fact that the FS directions are adaptive and depend on the entire selection history. Because the basis vectors $\bm e_k$ are random, data--dependent, and constructed sequentially, the proof necessarily relies on conditional limit theorems and stability properties of the selection rule. We therefore collect the standing assumptions under which the universality argument is valid.

\begin{assumption}
\label{ass:universality}
Fix $K, L, p \ge 1$ (independent of $n$). For each $n$, let $\bm\updelta^{(n)}_1,\dots,\bm\updelta^{(n)}_L\in\R^n$ be independent raw dummy vectors with i.i.d.\ coordinates $\updelta_{i\ell}^{(n)}$ satisfying $\E[\updelta_{i\ell}^{(n)}]=0$ and $\E[(\updelta_{i\ell}^{(n)})^2]=1$. Define the (normalized) dummy predictors in $H$ by
\[
  \bm d_\ell^{(n)}
  :=
  \frac{I_H\bm\updelta_\ell^{(n)}}{\|I_H\bm\updelta_\ell^{(n)}\|_2/\sqrt n}
  \;\in\;H,
  \qquad \ell=1,\dots,L.
\]
Let the forward selector generate basis vectors $\bm e_1^{(n)},\dots,\bm e_K^{(n)}$ and selection times $\tau_\ell^{(n)}$ as in Section~\ref{ssec:filtration}. We assume:
\begin{enumerate}
\item[\textbf{(D)}] \textbf{Delocalization:}
\[
  \max_{1\le k\le K}\|\bm e_k^{(n)}\|_\infty
  \xrightarrow[n\to\infty]{P} 0.
\]

\item[\textbf{(GP)}] \textbf{General position:} No correlation ties occur in the first $K$ steps with probability one.

\item[\textbf{(Sel)}] \textbf{Finite--set stability.}
For any $k\in\{0,\dots,K\}$ and any finite $\Lambda\subset\{1,\dots,L\}$, consider running the FS procedure on the augmented design with all dummies, and on the augmented design obtained by removing the dummies $\{\bm d_\ell^{(n)}:\ell\in\Lambda\}$, using the same $(\bm X,\bm y)$ and tie--breaking. If $\tau_\ell^{(n)} > k$ for all $\ell\in\Lambda$, then the two trajectories coincide up to step $k$ and yield the same $\bm e_{k+1}^{(n)}$.
\end{enumerate}
\end{assumption}

The conditions collected in Assumption~\ref{ass:universality}  are standard in high-dimensional analyses of greedy and pathwise procedures.

The delocalization condition \textbf{(D)} ensures that the data-dependent directions $\bm e_k^{(n)}$ do not concentrate on a vanishing number of coordinates; it is satisfied, for example, when the design and residuals are sufficiently spread out, and is typical in random design or high-dimensional regimes. The general position condition \textbf{(GP)} rules out exact correlation ties along the first $K$ steps. Under continuous dummy distributions, this holds with probability one and ensures that the selection rule is almost surely continuous with respect to the dummy projections. Finally, the finite-set stability condition \textbf{(Sel)} is a mild structural property of forward-selection algorithms: it states that dummies which are never selected cannot influence the selection path or the induced basis. This condition is satisfied by all standard forward-selection procedures, including LARS, OMP, and forward stepwise regression, since their updates depend only on the current residual and the selected variables. We now formalize this universality beyond exactly rotationally invariant Gaussian and spherical base laws. For each $n$, let
\[
\mathscr J_\mu^{(n)} := (J_{\mu,1}^{(n)},\dots,J_{\mu,K}^{(n)})
\] 
denote the sequence of indices selected in the first $K$ steps of the forward-selection procedure when the dummy variables are constructed from i.i.d.\ standardized coordinates with law $\mu$ under the conditions specified in Assumption~\ref{ass:universality}. We denote by $\mathscr J_{\mathrm G}^{(n)} := (J_{\mathrm G,1}^{(n)},\dots,J_{\mathrm G,K}^{(n)})$ the special case where $\mu = \mathcal{N}(0,1)$.

\begin{thm}[Pathwise universality]
\label{thm:pathwise-univ}
Under Assumption~\ref{ass:universality}, let 
\[
  \mathscr J_\mu^{(n)}
  := (J_{\mu,1}^{(n)},\dots,J_{\mu,K}^{(n)})
\]
be the selection path generated by i.i.d.\ non-Gaussian dummies, and let $\mathscr J_{\mathrm G}^{(n)}$ denote the path obtained with Gaussian  dummies (using the same FS rule, same $(\bm X,\bm y)$,  and Gaussian base law on $H$). Then there exists a random path 
$\mathscr J=(J_1,\dots,J_K)$ such that
\[
  \mathscr J_\mu^{(n)} \xRightarrow[n\to\infty]{d} \mathscr J,
  \qquad
  \mathscr J_{\mathrm G}^{(n)} \xRightarrow[n\to\infty]{d} \mathscr J.
\]
\end{thm}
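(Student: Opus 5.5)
The plan is to reduce the selection path to a continuous function of finitely many dummy projections and then prove a conditional multivariate CLT for those projections along the adaptive basis. I proceed by induction over the steps $k=1,\dots,K$. The inductive object is the finite array of revealed projections
\[
\bm A_k^{(n)} := \bigl(\langle \bm d_\ell^{(n)},\bm e_i^{(n)}\rangle\bigr)_{i\le k,\ \ell\le L},
\]
together with the selected indices. By the filtration structure of Section~\ref{ssec:filtration}, $\phi_k$ is $\F_k$-measurable. For a fixed compatible selector (LARS, OMP, stepwise), $\phi_k$ is therefore a deterministic function $g_k$ of $(\bm X,\bm y)$, of $\bm A_k^{(n)}$, and of the realized selected dummies. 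By \textbf{(GP)}, the argmax defining $g_k$ is almost surely attained with a strict margin, so $g_k$ is continuous at almost every point of the limiting law. Once I show $\bm A_K^{(n)}$ converges in distribution to the same limit $\bm A_K$ under both $\mu$ and $\mathcal N(0,1)$, the continuous mapping theorem gives $\mathscr J_\mu^{(n)}\Rightarrow \mathscr J := g(\bm A_K)$ and $\mathscr J_{\mathrm G}^{(n)}\Rightarrow \mathscr J$.

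The target limit is an i.i.d.\ standard Gaussian array $\bm A_K$, since the normalization $\|I_H\bm\updelta\|_2/\sqrt n$ is chosen so that each projection has unit variance asymptotically. For Gaussian dummies I read this off directly from Lemma~\ref{lem:rot-invariance} and the sequential construction behind Theorem~\ref{thm:vd-fs-dist}. The fresh projections $\upalpha_{k+1,\ell}$ are conditionally $\sqrt n\,\cdot$(scaled Beta-sign) variables, which converge to $\mathcal N(0,1)$ independently of the past. The normalization factor tends to $1$ almost surely by the LLN.

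For general $\mu$ I proceed step by step. Because $\bm e_{k+1}^{(n)}$ depends on the dummies, I use \textbf{(Sel)} to decouple. Fix the set $\Lambda$ of dummies not selected by step $k$, and run the procedure with those dummies removed. This yields the same $\bm e_1^{(n)},\dots,\bm e_{k+1}^{(n)}$, which are then independent of $\{\bm\updelta_\ell\}_{\ell\in\Lambda}$. Conditionally on this basis, $\langle I_H\bm\updelta_\ell,\bm e_i\rangle=\sum_r e_{i,r}\updelta_{r\ell}$ is a weighted sum of i.i.d.\ standardized variables with orthonormal weight vectors. The Lindeberg condition holds because
\[
\textstyle\sum_r e_{i,r}^2\,\E\bigl[\updelta^2\mathbf 1\{|e_{i,r}\updelta|>\varepsilon\}\bigr]\le \E\bigl[\updelta^2\mathbf 1\{|\updelta|>\varepsilon/\|\bm e_i\|_\infty\}\bigr]\to 0
\]
by \textbf{(D)} and uniform integrability of $\updelta^2$. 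The Cramér--Wold device then gives a conditional CLT toward $\mathcal N(\bm 0,\bm I)$, since $\bm e_i\in H$ makes $I_H$ act trivially. I sum over the finitely many possible realizations of $(\Lambda,\text{path})$, which is possible because $K$ and $L$ are fixed.

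The main obstacle is the conditioning on past projections of the same dummy. The coordinates along $\bm e_1,\dots,\bm e_k$ were already revealed and influenced $\bm e_{k+1}$, yet \textbf{(Sel)} decouples only dummies that are entirely unselected. I would handle this by proving the CLT jointly rather than sequentially. On the event that the path equals a fixed sequence, the basis vectors are a.s.\ equal to those produced by the reduced procedure, which uses only real variables and the selected dummies. Selected dummies are then treated via their full vectors; their contributions to the basis are themselves asymptotically Gaussian by the same Lindeberg argument, applied to fixed-index sums. Unselected dummies enter only through a joint CLT over the $K\times|\Lambda|$ array conditioned on that basis. A remaining technical point is that \textbf{(D)} holds only in probability, so I would work along subsequences where it holds almost surely and conclude via the subsequence criterion for weak convergence.
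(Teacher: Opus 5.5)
Your route differs from the paper's in a useful way, but the step you flag as the main obstacle is not closed by your sketch.

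The paper inducts on the index path alone. At step $k+1$ it invokes Lemma~\ref{lem:fresh-step}, a CLT for the fresh projections \emph{conditionally on} $\F_k^+$, followed by a conditional continuous-mapping step. You instead aim for one unconditional joint CLT for the revealed projections. You decouple via \textbf{(Sel)} on each of the finitely many path events, condition only on the \emph{other} dummies (as in Lemma~\ref{lem:finite-dummy}), and then apply continuous mapping once. This buys two things.

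First, you never need a CLT given a dummy's own exact past projections, and such a CLT fails in general. For Rademacher coordinates, $\bm d_\ell$ takes finitely many values, so for generic $\bm e_1\in H$ the number $\langle\bm d_\ell,\bm e_1\rangle$ determines $\bm d_\ell$. Then $\upalpha_{2\ell}$ is $\F_1^+$-measurable rather than conditionally Gaussian. Second, you carry the continuous state forward, which is what the next decision depends on; convergence of $(J_1,\dots,J_k)$ alone does not pin down the law of $J_{k+1}$.

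The gap concerns the selected dummies, and it has two parts.

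\begin{itemize}
\item $\bm A_K^{(n)}$ as you define it cannot have an i.i.d.\ Gaussian limit. If dummy $\ell$ is selected at step $\tau_\ell$, then $\langle\bm d_\ell,\bm e_{\tau_\ell+1}\rangle=\|\bm d^\perp_{\ell,\tau_\ell}\|\approx\sqrt n$ and all later entries vanish. Only entries with $i\le\tau_\ell$ belong in the array.
\item More seriously, you note that $g_k$ also depends on the realized dummies, but then apply continuous mapping to $\bm A_K^{(n)}$ alone. Let $\tau=\tau_{\ell^\star}$. Once $\bm d_{\ell^\star}$ is realized, the scores of real columns involve $\langle\bm x_j,\bm e_{\tau+1}\rangle$. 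Up to terms already in the array, this is $\langle\bm x_j,\bm d_{\ell^\star}\rangle$, for every real $j$ including unselected ones. These must converge jointly with the array to a $\mu$-free limit.
\end{itemize}

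Your ``same Lindeberg argument'' needs $\bm x_j/\|\bm x_j\|$, or its part orthogonal to $\V_\tau$, to be delocalized. But \textbf{(D)} controls only the $K$ selected directions $\bm e_1,\dots,\bm e_K$. For a localized unselected column, this inner product is essentially one coordinate of $\bm\updelta_{\ell^\star}$ and keeps its non-Gaussian law. You need either an extra delocalization hypothesis on $\mathrm{span}(\bm X,\bm y)$, or an argument that, for the selector at hand, these terms enter the decisions only at vanishing order. Dummy--dummy terms such as $\langle\bm d_\ell,\bm d_{\ell^\star}\rangle$ are fine: condition on $\bm d_{\ell^\star}$, whose normalized residual is delocalized because $\max_r|\updelta_{r\ell^\star}|/\sqrt n\to0$ in probability.

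Finally, $g$ depends on $n$ through $(\bm X,\bm y)$, and \textbf{(GP)} is a statement about each finite $n$. So continuous mapping with a fixed $g$ and ``strict margins under the limit law'' is not available as written. You need one of the following:
\begin{itemize}
\item convergence of the real-design geometry together with a.s.\ strict margins in the limit, so that the extended continuous mapping theorem applies; or
\item a weaker conclusion, namely that the laws of $\mathscr J_\mu^{(n)}$ and $\mathscr J_{\mathrm G}^{(n)}$ merge.
\end{itemize}
The paper's proof is equally loose on this point.
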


\begin{proof}
The proof is deferred to Appendix~\ref{app:B}. The idea of the proof is sketched below.
\end{proof}

Theorem~\ref{thm:pathwise-univ} asserts that, for any fixed number of FS steps $K$, the entire selection path, i.e., the ordered sequence of selected indices, has the same asymptotic law under generic standardized i.i.d.\ dummies as under Gaussian dummies. This statement is pathwise: it concerns the joint distribution of $(J_1,\dots,J_K)$ and not merely marginal test statistics or single-step behavior. At a high level, the result rests on the fact that forward selection interacts with each dummy only through a growing collection of low-dimensional projections onto the data-dependent directions $\bm e_k$. Under the delocalization assumption~\textbf{(D)}, each fresh projection is an average of many weakly weighted coordinates. As a consequence, conditional on the current history, fresh dummy projections satisfy a conditional CLT (Lemma~\ref{lem:conditional-clt}), and the resulting normalized projections of any finite set of unselected dummies are asymptotically i.i.d.\ Gaussian (Lemma~\ref{lem:finite-dummy}). Pushing this from finite subsets to the entire pool of remaining dummies yields the Gaussian limit for the fresh projection vector at step $k{+}1$ conditional on $\F_k^+$ (Lemma~\ref{lem:fresh-step}).

The proof of Theorem~\ref{thm:pathwise-univ} proceeds by induction on the step index $k$. The induction step uses Lemma~\ref{lem:fresh-step} to replace the conditional law of the next dummy projections under $\mu$ by the corresponding Gaussian conditional law, given the same history. The finite-set stability assumption \textbf{(Sel)} enters through Lemma~\ref{lem:finite-dummy} and is subsequently used in Lemma~\ref{lem:fresh-step} to control the conditional law of fresh dummy projections. It ensures that dummies which are not selected up to step $k$ do not affect the evolution of the adaptive directions, so the basis can be treated as fixed when conditioning on a leave--$\Lambda$--out environment. General position~\textbf{(GP)} then guarantees that the selection map is almost surely continuous, allowing one to pass from convergence of projections to convergence of selected indices via a conditional continuous mapping argument. Unlike classical universality results for maxima or fixed test statistics, the present argument must control an adaptive, history-dependent sequence of decisions. The key novelty is that Gaussian behavior is recovered conditionally at each step in the evolving filtration, which is what allows the full forward-selection trajectory to converge in law.

\begin{rmrk}[On $L\to\infty$]
Theorem~\ref{thm:pathwise-univ} is proved in a fixed-$(K,L,p)$ asymptotic regime. This is natural for universality with non-rotationally-invariant i.i.d.\ dummies, because extending the argument to $L=L_n\to\infty$ would require uniform conditional limit theorems over an increasing set of dummy competitors and control of extreme order statistics, which we do not pursue here. Importantly, our main results for virtual dummies and the FDR guarantees of T--Rex (Sections~\ref{sec:stick_breaking} and~\ref{sec:Var_sel_FDR}) apply at arbitrary $L$ under rotational invariance and therefore do not rely on Theorem~\ref{thm:pathwise-univ}.
\end{rmrk}

\subsection{Finite Sample Effect of Gaussian Norm Fluctuations}
\label{ssec:norm-inflation}
The pathwise universality result of Theorem~\ref{thm:pathwise-univ} shows that, for a fixed number of selection steps and as $n\to\infty$, forward-selection trajectories driven by standardized i.i.d.\ dummies are asymptotically indistinguishable from those obtained with Gaussian dummies. One may therefore ask whether, in practice, it suffices to work directly with Gaussian dummies and dispense with the spherical construction and the adaptive stick-breaking machinery altogether. We show that the answer is \emph{no}: although Gaussian and spherical dummies are asymptotically equivalent, they exhibit systematically different \emph{finite-sample} behavior. The discrepancy is entirely due to random fluctuations in the norms of Gaussian dummies, which persist at moderate dimension and are amplified when many dummies compete simultaneously.

While Theorem~\ref{thm:vd-fs-dist} permits either Gaussian or spherical base laws, the two constructions differ in a subtle but important way: Gaussian dummies have random radii, whereas spherical dummies have fixed norm by construction. In practice, one might consider Gaussian columns $\bm g_\ell\sim\mathcal N(\bm 0,\tfrac{1}{m}\bm I_H)$, which have unit expected squared norm ($\E[\|\bm g_\ell\|^2]=1$) and thus match the scale of a standardized design. However, their radial component $\|\bm g_\ell\|$ still fluctuates around~$1$, introducing a \emph{norm inflation} that alters how dummies compete with real variables when $L$ is large. In contrast, spherical dummies enforce exact column standardization, whereas Gaussian dummies only match this scale in expectation.

Crucially, spherical and Gaussian dummies differ \emph{only} by this radial component. Writing $\bm g_\ell = \|\bm g_\ell\|\,\bm u_\ell$ with $\bm u_\ell\sim\Unif(S_H)$ independent of $\|\bm g_\ell\|$, we have, for any fixed $\bm r \in S_H$,
\[
M_G
:= \max_{\ell\le L}|\langle\bm g_\ell,\bm r\rangle|
\;\le\;
\bigl(\max_{\ell\le L}\|\bm g_\ell\|\bigr)\,
\underbrace{\max_{\ell\le L}|\langle\bm u_\ell,\bm r\rangle|}_{=:\,M_S}.
\]
The angular part $\bm u_\ell$ behaves exactly as a spherical dummy, while $\|\bm g_\ell\|$ scales the effective correlation strength. Since $\|\bm g_\ell\|^2 = m^{-1}\chi^2_m$, the chi-square tail bound $\Pr(\|\bm g_\ell\|^2 \ge 1 + 2\sqrt{t/m} + 2t/m) \le e^{-t}$ combined with a union bound over $\ell \le L$ at $t = \log(L/\delta)$ gives
\begin{equation}\label{eq:norm-inflation}
M_G
\;\le\;
(1+\eta_L)\,M_S,
\qquad
\eta_L=\sqrt{\tfrac{2\log(L/\delta)}{m}}+\tfrac{\log(L/\delta)}{m},
\end{equation}
with probability at least $1-\delta$. Thus, even when Gaussian dummies are scaled to unit expected norm, their radial fluctuations inflate the maximal dummy correlation by a factor $1+O(\sqrt{\log L/m})$. This bias is conservative for FDR control in the T--Rex selector (overestimation of false discoveries) but reduces power. Although $\eta_L$ decays with $m$, its magnitude remains non-negligible in realistic regimes: for $L=10^4$ and $1-\delta=0.95$, the bound predicts $7\%$--$33\%$ inflation for $m\in\{5000,300\}$. Even the lower end of this range can measurably reduce power when true signals lie near the detection boundary, as confirmed in the finite--sample simulations of Section~\ref{sec:sim}.
\section{Variable Selection and FDR Control with Virtual Dummies}
\label{sec:Var_sel_FDR}

The distributional equivalence in Theorem~\ref{thm:vd-fs-dist} ensures that any
forward-selection procedure adapted to the common filtration can be instantiated
with virtual dummies instead of an explicitly augmented dummy block without
changing the law of the resulting path. We now turn this abstract result into
concrete variable-selection and inference procedures.

We begin by presenting a generic Virtual Dummy Forward Selection (VD--FS)
template that applies to any compatible selector (Algorithm~\ref{alg:VDFS}).
We then instantiate this template for Least Angle Regression, yielding Virtual
Dummy LARS (VD--LARS) (Section~\ref{ssec:VD_LARS}). Next, we revisit the T--Rex
selector and show how its terminating random experiments, relative occurrences,
and voting thresholds can be computed entirely using VD--FS
(Section~\ref{ssec:TRex_VD}). Theorem~\ref{thm:vd-fs-dist} then implies that
all T--Rex quantities, and in particular its FDR guarantees, carry over
unchanged when explicit dummy matrices are replaced by virtual ones, yielding
Corollary~\ref{cor:fdr-rot-invariant}. Finally, we analyze the computational
complexity of VD--LARS and of T--Rex with VD--LARS, showing that the virtual
representation reduces dummy-related costs by several orders of magnitude in
large-scale settings.

\subsection{Generic Virtual Dummy Forward Selection}
\label{ssec:generic_VDFS}

Algorithm~\ref{alg:VDFS} presents the general VD--FS template.
It makes explicit the filtration transitions $\F_k \to \F_k^+ \to \F_{k+1}$ from Section~\ref{ssec:filtration} and isolates the two \emph{procedure-specific} components, the selection rule~$\phi_k$ (line~4) and the score/residual update (line~16), from the universal dummy machinery (realization, basis update, and fresh projection sampling). For linear methods, lines~9--11 use the newly selected predictor to update the basis. For GLMs and other nonlinear or robust procedures, the relevant direction is instead the current score vector, so $\bm v_k$ is updated from that score rather than from the selected predictor. Any compatible selector from Table~\ref{tab:scope_compatibility} can be instantiated in this way, while the dummy-related steps remain unchanged.

\begin{algorithm}[t]
\caption{Virtual Dummy Forward Selection (VD--FS)}
\label{alg:VDFS}
\begin{algorithmic}[1]
\Require $\bm X=(\bm x_1,\dots,\bm x_p)\subset H$, $\bm y\in H$, $L$, base law $\Psi(H)$, selection rule $\phi_k$
\State $\A_0\gets\varnothing$, $\bm r_0\gets \bm y$, $\bm e_1\gets \bm y/\|\bm y\|_2$, $\Ec_1\gets\{\bm e_1\}$, $k\gets 1$\Comment{{$\F_0$}}
\Statex\hspace{2em} \algmark{Fk0Top}
\State Draw $\{\upalpha_{11},\dots,\upalpha_{1L}\}\mid \F_0 \stackrel{\text{i.i.d.}}{\sim} \Psi_1$ and set $\tau_\ell \gets \infty$ $\forall\ell$
\Statex\hspace{2em} \algmark{Fk0Bot}
\algrbrace{Fk0Top}{Fk0Bot}{$\F_0 \to \F_{1}$}
\While{not stopped}
\State $j^\star \gets \phi_k\!\Bigl(\bm X,\bm y,\Ec_k,\bm r_{k-1},\A_{k-1},\{\upalpha_{i\ell}\}_{i\le k,\ \ell:\ \tau_\ell>k-1}\Bigr)$\Comment{{\color{red}procedure-specific}}
\State $\A_k \gets \A_{k-1}\cup\{j^\star\}$
\Statex \algmark{FkTop}
\If{$j^\star = p+\ell^\star$} \Comment{realize dummy~\eqref{eq:realize-dummy}}
  \State Draw $\bm d^{\perp}_{\ell^\star,k}\mid \F_k \sim \Psi_k^{\perp}$ and set $\tau_{\ell^\star}\gets k$
  \State $\bm d_{\ell^\star}\gets \sum_{i=1}^{k}\upalpha_{i\ell^\star}\bm e_i + \bm d^{\perp}_{\ell^\star,k}$
  \State $\bm v_k \gets \bm d_{\ell^\star}$
\Else
  \State $\bm v_k \gets \bm x_{j^\star}$
\EndIf
\State $\bm v_k^{\perp}\gets \bm v_k - \sum_{i=1}^{k}\langle \bm v_k,\bm e_i\rangle \bm e_i$\Comment{basis update}
\State $\bm e_{k+1}\gets \bm v_k^{\perp}/\|\bm v_k^{\perp}\|_2$, \quad $\Ec_{k+1}\gets \Ec_k\cup\{\bm e_{k+1}\}$
\Statex \algmark{FkBot}
\algrbrace{FkTop}{FkBot}{$\F_k \to \F_k^{+}$}
\Statex \algmark{FkPlusTop}
\State Draw $\{\upalpha_{k+1,\ell}\}_{\tau_\ell>k}\mid \F_k^{+}
\stackrel{\text{i.i.d.}}{\sim} \Psi_{k+1}$\Comment{Algorithm~\ref{alg:seq_spherical}}
\Statex \algmark{FkPlusBot}
\algrbrace{FkPlusTop}{FkPlusBot}{$\F_k^{+} \to \F_{k+1}$}

\State Update residual $\bm r_k$ \Comment{{\color{red}procedure-specific}}
\State $k\gets k+1$
\EndWhile
\end{algorithmic}
\end{algorithm}

While this paper instantiates the virtual-dummy construction for LARS (Section 5.2), the generic VD-FS template in Algorithm 2 applies to any compatible selector. Our open-source implementation at \url{https://github.com/taulantkoka/virtual-dummies} additionally provides VD-OMP, VD-AFS, and VD-AFS with logistic regression. A systematic discussion of which selectors are compatible with the framework, together with concrete examples, is given in Appendix~\ref{app:compatible} (Table~\ref{tab:scope_compatibility}).

\subsection{Virtual Dummy Least Angle Regression}
\label{ssec:VD_LARS}

The LARS algorithm is a natural candidate for the virtual dummy framework: its update rule depends only on inner products with vectors in the current span~$\V_k$, never on unrevealed dummy components in~$\V_k^\perp$. At iteration~$k$, each unrealized dummy~$\bm d_\ell$ enters only through its revealed coefficients $\upalpha_{1\ell},\dots,\upalpha_{k\ell}$ in the evolving orthonormal basis~$\Ec_k\subset H$. VD--LARS instantiates Algorithm~\ref{alg:VDFS} with the LARS selection rule and residual update. We recall the standard LARS quantities
\begin{align*}
  \bm r_k   & \text{ (residual)}                                    &
  \A_k      & \text{ (active set)}                                  &
  \bm u_k   & \text{ (equiangular direction)}                       \\
  c_{j,k}     & = \langle \bm x_j,\bm r_k\rangle \text{ (correlations)} &
  \bm a_k   & = \bm X^\top \bm u \text{ (slopes)}                   &
  \bm G_{\A_k}& = \bm X_\A^\top \bm X_\A \text{ (Gram matrix)}
\end{align*}
and $C=\max_{j\notin\A_k}\{|c_j|\}$ the current maximum correlation, $A_\A = \sqrt{\one^\top \bm G_\A^{-1}\one}$.

Let $\widetilde{\bm X}_{\!\A}$ denote the submatrix of active columns, comprising both selected real predictors and realized dummies. The equiangular direction at step $k$ is
\[
  \bm u_k = \widetilde{\bm X}_{\!\A}\,\bm w_k,
  \qquad
  \bm w_k = A_{\!\A}\,\widetilde{\bm G}_{\!\A}^{-1}\one_{\!|\A_k|},
  \qquad
  \widetilde{\bm G}_{\!\A} = \widetilde{\bm X}_{\!\A}^\top \widetilde{\bm X}_{\!\A},
\]
where $\one_{\!|\A_k|}$ is the vector of ones of length $|\A_k|$
and the sign of each active column's coefficient in $\bm w_k$
is set to match the sign of its correlation with $\bm r_k$.
By construction, $\bm u_k\in\V_k$ and makes equal angles with all active columns.
Since $\bm r_k,\bm u_k\in\V_k$, while the unrevealed residual component $\bm d_{\ell,k}^{\perp}\in\V_k^\perp$ is orthogonal to both, the correlation and slope of each unrealized dummy $\ell$ (i.e., $\tau_\ell > k$) are computable from the revealed coefficients alone:
\begin{equation}
\label{eq:vd-corr-slope}
  \rho_\ell
  = \langle \bm d_\ell,\bm r_k\rangle
  = \sum_{i=1}^k \upalpha_{i\ell}\,\langle \bm e_i,\bm r_k\rangle,
  \qquad
  a^{\mathrm{vd}}_\ell
  = \langle \bm d_\ell,\bm u_k\rangle
  = \sum_{i=1}^k \upalpha_{i\ell}\,\langle \bm e_i,\bm u_k\rangle.
\end{equation}
These are $k$-dimensional inner products (cost $\Oc(k)$ per dummy), replacing the $n$-dimensional products $\bm D^\top\bm r_k$ and $\bm D^\top\bm u_k$ of explicit augmentation. The next variable is selected as
\begin{equation}
\label{eq:lars-selection}
  j^\star
  =\operatorname*{argmin}_{\substack{j\notin\A_k\\p+\ell\notin\A_k}}
  \left\{
    \frac{C-c_j}{A_{\!\A}-a_j},\;
    \frac{C+c_j}{A_{\!\A}+a_j},\;
    \frac{C-\rho_\ell}{A_{\!\A}-a^{\mathrm{vd}}_\ell},\;
    \frac{C+\rho_\ell}{A_{\!\A}+a^{\mathrm{vd}}_\ell}
  \right\}_{\!+}, \quad C = \max_{\substack{j\notin\A_k\\p+\ell\notin\A_k}}\{|c_j|,\,|\rho_\ell|\},
\end{equation}
where $(\cdot)_+$ denotes restriction to positive values and $C$ extends the maximum correlation to include the virtual dummies. This defines $\phi_k$ in Algorithm~\ref{alg:VDFS} (line 4). Finally, the standard LARS residual update $\bm r_{k+1} = \bm r_k - \upgamma\,\bm u_k$ applies (line~16 of Algorithm~\ref{alg:VDFS}), where the step size $\upgamma = \min(\cdot)$ with the $\min(\cdot)$ taken over the same set as the $\operatorname*{argmin}(\cdot)$ in \eqref{eq:lars-selection}. Since $\bm r_k, \bm u_k\in\V_k$, the update does not interact with any unrevealed dummy component.

If the selected variable is a dummy $\ell^\star$, it is realized via~\eqref{eq:realize-dummy} and the basis is updated via~\eqref{eq:orthog} exactly as in lines~6--14 of Algorithm~\ref{alg:VDFS}; the realized dummy then joins the active set and contributes to the Gram matrix $\bm G_{\!\A}$ and subsequent equiangular computations as an ordinary predictor. Fresh projections $\{\upalpha_{k+1,\ell}\}_{\tau_\ell>k}$ are drawn from the conditional law~\eqref{eq:alpha-next} via Algorithm~\ref{alg:seq_spherical}, and the virtual correlations $\rho_\ell$ and slopes $a^{\mathrm{vd}}_\ell$ are updated via~\eqref{eq:vd-corr-slope}. The procedure terminates when (i) a prescribed number of dummies have been realized, (ii)~$|\A_k|=\min\{n,p\}$, or (iii) another LARS stopping condition is met (e.g., $C=0$).

\medskip
\noindent
Hence, VD--LARS differs from classical LARS only in that the pool of dummies is represented implicitly through sequential projections rather than explicit columns of~$\bm D$.
This implicit representation eliminates the need to form or store~$\bm D$ and yields substantial computational savings, which we quantify in Section~\ref{ssec:complexity}.
Moreover, by Theorem~\ref{thm:vd-fs-dist}, replacing explicit dummies by VD--LARS leaves the law of the entire LARS path unchanged. Consequently, any downstream FDR calibration based on dummy competition, in particular T--Rex, is preserved.

\subsection{Terminating Random Experiments with Virtual Dummies}
\label{ssec:TRex_VD}
Forward selection with dummies creates a natural competition between real
predictors and synthetic null variables, but for any fixed dummy pool this
competition is random and a single run can produce unstable selections.
The T--Rex selector stabilizes this by repeating the forward procedure on
independently generated dummy pools, each repetition called a
\emph{random experiment}.

Across $B$ such random experiments, each forward path is run until exactly
$T$ dummies have entered the model. For experiment
$b\in\{1,\dots,B\}$, let $\mathcal{C}_{b,L}(T)$ denote the set of real
variables selected before the $T$th dummy is included. The corresponding
\emph{relative occurrence} of a real variable $j\in\{1,\dots,p\}$ is
defined by
\[
  \Phi_{T,L}(j)
    := \frac{1}{B}\sum_{b=1}^B 
       \mathds{1}\!\left\{\, j \in \mathcal{C}_{b,L}(T) \,\right\},
  \qquad T\ge1,
\]
where $\mathds{1}$ is the indicator function and with the convention
$\Phi_{0,L}(j)=0$. To convert these into a selected set, the T--Rex
selector applies a \emph{voting threshold} $v\in[0.5,1)$:
\[
  \widehat{\mathcal{A}}_L(v,T)
    := \{\, j : \Phi_{T,L}(j) > v \,\}.
\]

For any \emph{fixed} $T$ and $L$, the FDR of this procedure is controlled
at level $\alpha$ as $B\to\infty$, provided the voting threshold $v$
satisfies a conservative FDP constraint (see
\citet{Machkour2025} for the precise estimator and proof).
Theorem~\ref{thm:vd-fs-dist} implies that every quantity entering this
guarantee, namely correlations, entry events, relative occurrences, and
the FDP estimator, has the same conditional distribution under virtual
dummies as under explicit augmentation. This immediately yields the
following corollary.

\begin{cor}[Virtual Dummy T--Rex]
\label{cor:fdr-rot-invariant}
Assume the setting of Theorem~\ref{thm:vd-fs-dist}. Assume further that the
assumptions of \citet[Theorem~1]{Machkour2025} hold for the corresponding
explicitly augmented T--Rex procedure. For fixed $T$ and $L$, run the T--Rex
selector with virtual-dummy forward selection $(\mathcal{P}_{\mathrm{vd}})$
based on $B$ independent random experiments, and choose the voting threshold
$v$ according to the T--Rex calibration rule. Then, as $B\to\infty$,
\[
  \mathrm{FDR}(v,T,L)
  ~=~ \E[\mathrm{FDP}(v,T,L)]
  ~\le~ \alpha.
\]
\end{cor}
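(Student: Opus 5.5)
The plan is a transfer argument: show that the T--Rex output is a fixed measurable functional of the $B$ forward-selection trajectories. Theorem~\ref{thm:vd-fs-dist} then identifies the law of those trajectories under $\mathcal P_{\mathrm{vd}}$ and $\mathcal P_{\mathrm{ad}}$, and \citet[Theorem~1]{Machkour2025} bounds the FDR under $\mathcal P_{\mathrm{ad}}$.

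First, fix $(\bm X,\bm y)$ and a single random experiment $b$. Terminating after the $T$th dummy entry is a stopping time for $(\F_k)$: it is the $T$th smallest of the $\tau_\ell$, and each $\tau_\ell$ was shown to be a stopping time. The terminated path is therefore a measurable function of the full trajectory, namely the sequence of selected indices up to that time. Consequently the candidate set $\mathcal C_{b,L}(T)$, and the analogous sets for $T'\le T$ that the FDP estimator uses, are the same deterministic measurable map $G$ applied to the trajectory under both procedures. The same holds for any path statistic, such as dummy entry counts, entering the T--Rex FDP estimator. Theorem~\ref{thm:vd-fs-dist} gives $\mathrm{traj}_{\mathrm{vd}}\stackrel d=\mathrm{traj}_{\mathrm{ad}}$. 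Since equality in law is preserved under measurable maps, we get $G(\mathrm{traj}_{\mathrm{vd}})\stackrel d=G(\mathrm{traj}_{\mathrm{ad}})$.

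Second, pass to $B$ experiments. In both schemes the experiments use independent dummy pools, and each is driven by its own independent stick-breaking randomness from Algorithm~\ref{alg:seq_spherical}. The $B$ trajectories are therefore i.i.d., so the joint law of $(G(\mathrm{traj}^{(1)}),\dots,G(\mathrm{traj}^{(B)}))$ is the $B$-fold product of the common marginal and agrees across the two procedures. The relative occurrences $\Phi_{T,L}(j)$, the FDP estimator, the calibrated threshold $v$ and the selected set $\widehat{\A}_L(v,T)$ are all deterministic functions of this tuple. Hence the true $\mathrm{FDP}(v,T,L)$ has the same law under both procedures; note it depends on the fixed $\A$ only through this tuple. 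It follows that $\E_{\mathrm{vd}}[\mathrm{FDP}]=\E_{\mathrm{ad}}[\mathrm{FDP}]$ for every $B$, since the FDP is bounded in $[0,1]$. Taking $B\to\infty$ and applying \citet[Theorem~1]{Machkour2025} to the augmented procedure gives the bound $\alpha$. If that theorem averages over the randomness of $(\bm y,\bm X)$ as well, the equality holds conditionally for each fixed design, so integrating it preserves the equality of the unconditional expectations.

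The main obstacle is a subtle one. T--Rex may calibrate over a grid of $(T,L)$ or use quantities beyond the first $T$ dummy entries, for example continuing the path for a larger $T'$. It is also possible that \citet{Machkour2025} require $\Psi$ to be standard Gaussian rather than spherical. For the first issue I would define $G$ on the trajectory run to the largest $T$ needed, which is still covered by Theorem~\ref{thm:vd-fs-dist} because that theorem concerns the whole path. For the second, I would check that their proof uses only exchangeability of null real variables with i.i.d.\ dummies. That property holds for both base laws, so the hypothesis ``assumptions of Theorem~1 hold for the augmented procedure'' can be read with $\Psi(H)$ either choice.
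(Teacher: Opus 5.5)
Your proposal is correct, but it transfers the equivalence at a different level than the paper does.

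\textbf{The paper's route.} The paper fixes $(\bm X,\bm y)$ and uses Theorem~\ref{thm:vd-fs-dist} together with independence of the experiments. This makes the indicators $\mathds{1}\{j\in\mathcal C_{b,L}(T)\}$ conditionally i.i.d.\ with the same law under both procedures. It then applies the strong law of large numbers to obtain deterministic limits $\Phi^\infty_{T,L}(j)$, which coincide under both procedures. Finally, it argues that the limiting selected set, FDP, FDP estimator and threshold are functionals of these limits, so the \emph{proof} of \citet[Theorem~1]{Machkour2025} carries over verbatim.

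\textbf{Your route.} You push the product law of the $B$ i.i.d.\ trajectories forward through one fixed measurable map. This gives equality in law of the whole T--Rex output, hence $\E_{\mathrm{vd}}[\mathrm{FDP}]=\E_{\mathrm{ad}}[\mathrm{FDP}]$, at every finite $B$. Only then do you let $B\to\infty$ and use the \emph{conclusion} of Theorem~1 as a black box.

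\textbf{What each buys.} Your route has three advantages:
\begin{itemize}
\item It yields a stronger statement: exact FDR equality for each $B$.
\item It avoids identifying the $B\to\infty$ limit of the T--Rex output with a functional of the limiting relative occurrences. That step needs care, e.g.\ when some $\Phi^\infty_{T,L}(j)$ equals the threshold.
\item It makes explicit two points the paper leaves implicit. One is calibration over several $T'\le T$: all $\mathcal C_{b,L}(T')$ are functions of a single trajectory. The other is the passage from conditional to unconditional FDR by integrating over $(\bm X,\bm y)$.
\end{itemize}
The paper's route stays closer to how Machkour et al.\ formulate their argument in terms of limiting relative occurrences.

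\textbf{Superfluous parts.} Two parts of your plan are not needed. The stopping-time remark is unnecessary, because measurability of $\mathcal C_{b,L}(T)$ as a function of the selected-index sequence suffices. The worry about Gaussian versus spherical dummies is also unnecessary. The corollary assumes outright that the hypotheses of Theorem~1 hold for the augmented procedure with the same $\Psi(H)$, so nothing inside Machkour et al.'s proof needs to be checked.
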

\begin{proof}
The proof is deferred to Appendix~\ref{app:B}.
\end{proof}

In practice, $T$ and $v$ are chosen as
\[
  (v^\ast,T^\ast)=\operatorname*{argmax}_{v,T}\; |\widehat{\mathcal{A}}_L(v,T)|
  \quad\text{s.t.}\quad
  \widehat{\mathrm{FDP}}(v,T,L)\le\alpha,
\]
i.e., the T--Rex selector maximizes the number of selected variables subject to the conservative FDP estimator remaining below the target level. 

\subsection{Computational Complexity of VD--LARS}
\label{ssec:complexity}

In standard LARS one would conceptually extend the design to $(\bm X\ \bm D)$, where $\bm D\in\R^{n\times L}$ contains explicit dummy columns. In VD--LARS, this block is replaced by a compact projection matrix $\bm A_k\in\R^{k\times L}$, whose $\ell$th column $(\upalpha_{1\ell},\dots,\upalpha_{k\ell})^\top$ stores the coefficients of the $\ell$th virtual dummy in the current orthonormal basis $\Ec_k=\{\bm e_1,\dots,\bm e_k\}\subset H$. Writing $\bm E_k=(\bm e_1\,\dots\,\bm e_k)\in\R^{n\times k}$, the dummy representation is low rank if $k\ll n$\footnote{We use $\Oc(\cdot)$ notation for asymptotic cost to avoid unnecessary distinctions between upper bounds and exact asymptotic orders, even though the underlying linear-algebra kernels admit tight $\Theta(\cdot)$ bounds.}.

\paragraph{Memory complexity.}
The augmented formulation requires storing or regenerating an $n\times L$ dummy block, in addition to the real design:
\[
\mathcal{M}_{\text{aug}} = \Oc(np) + \Oc(nL).
\]
In VD--LARS, the dummy block $\bm D$ is never stored. Instead, at step $k$ we retain the orthonormal basis $\bm E_k\in\R^{n\times k}$, the dummy projection
matrix $\bm A_k\in\R^{k\times L}$, and the $T$ realized dummy columns:
\[
\mathcal{M}_{\text{vd}}
  = \Oc(np)
  + \Oc(nk)
  + \Oc(kL)
  + \Oc(nT),
\]
where $T\le k$ is the number of realized dummies. For $k\ll n$ and $T\ll L$, the dummy-related storage drops from $\Oc(nL)$ to $\Oc(kL+nT)$, i.e.\ by a factor $n/k$ (up to the additional $\Oc(nk)$ cost of representing the revealed subspace).

\paragraph{Per--step time complexity.}
Each LARS step involves computing correlations, updating the Cholesky factor, and advancing along the equiangular direction. For the real features, both algorithms require   (i) one Gram--Schmidt update and two triangular solves: $\Oc(k^2)$ flops, (ii) one dense matrix--vector product $\,\bm a=\bm X^\top\bm u\,$: $\Oc(np)$ flops (dominant term).

The difference lies in the dummy updates: for augmented LARS computing $\bm a_{\mathrm{dum}} = \bm D^\top \bm u$ is $\Oc(nL),$ while in VD--LARS we compute $\bm a_{\mathrm{vd}} = \bm A_k^\top (\bm E_k^\top \bm u)$, which is only $\Oc(kL)$. If a dummy is realized at step $k$, completing it requires projecting a sampled $n$-vector onto $\V_k^\perp$ via $\bm I - \bm E_k\bm E_k^\top$, which costs $\Oc(nk)$. This is incurred only $T$ times along the path. Thus the per--step complexities are:
\[
\begin{aligned}
&\text{Augmented LARS:}\quad
&\Oc(np + k^2 + nL)&, \\[0.5ex]
&\text{VD--LARS:}\quad
&\Oc(np + nk + k^2 + kL)&
\quad\text{(plus $\Oc(nk)$ on $T$ realization steps)}.
\end{aligned}
\]
The $\Oc(nk)$ term in VD--LARS is the cost of maintaining an explicit representation of the revealed span $\V_k$ in ambient coordinates, which is not required in textbook LARS but is negligible compared to $nL$ whenever $k\ll L$.

\paragraph{Example at biobank scale.}
For representative large-cohort settings ($n=5{\times}10^5$, $p=10^6$, $L\ge p$, $k=50$), explicit augmentation requires $nL=5{\times}10^{11}$ dummy inner products per step and storing an $n\times L$ dummy matrix, i.e., about $4$~TB in \texttt{float64}. In contrast, VD--LARS reduces dummy work to $kL=5{\times}10^{7}$ inner products per step and stores only the dummy projection state $\bm A_k\in\mathbb R^{k\times L}$, i.e., about $400$~MB in \texttt{float64}, an approximately $n/k\approx 10^4$ reduction. The real--feature correlation $\bm a=\bm X^\top\bm u$ still costs $\Oc(np)$ per step ($\approx 5{\times}10^{11}$ flops) and dominates. VD--LARS leaves this term unchanged but removes the $\Oc(nL)$ dummy burden, replacing it with $\Oc(kL)$ with $k\ll n$.

\subsection{Complexity of the T--Rex Selector with VD--LARS}

The T--Rex selector runs $B$ early-terminated forward experiments. Each experiment includes the real predictors and $L$ dummies, and is stopped as soon as $T$ dummies have entered the active set. Hence its computational cost depends on the (random) forward path length $\kappa$ and on the number $T$ of selected dummies. To characterize the expected path length, we rely on the negative hypergeometric (NHG) model used in the original T--Rex analysis.  Let $p_1$ and $p_0=p-p_1$ be the numbers of true active and true null predictors.  Then the number of null predictors examined before $T$ dummies have been selected satisfies\footnote{We write $\Xi \sim \mathrm{NHG}(N, K, T)$ for the number of successes observed before the $T$-th failure when sampling without replacement from a population with $K$ successes and $N-K$ failures.}
\begin{align*}
\mathbb E[\kappa]
  \;\le\; p_1 + T + \mathbb E[\Xi]
  \;=\; &p_1 + T + \frac{T}{L+1}\,p_0
  \;\le\; p_1 + 2T,
\end{align*}
for $\Xi \sim \mathrm{NHG}(p_0+L,\,p_0,\,T).$ Under the standard sparsity regime $p_1,T=\Oc(1)$ and $L=\eta p$ with fixed $\eta\geq1$, the expected path length $\mathbb E[\kappa]$ is $\Oc(1)$.

\paragraph{Expected time complexity.}
For a path of length $\kappa$, VD--LARS performs   (i) $\Oc(np\kappa)$ work on real features, (ii) $\Oc(L\kappa^2)$ work on virtual dummy correlations, (iii) $\Oc(n\kappa T)$ work for the $T$ dummy realizations. Thus, we have
\[
\mathbb E[\mathcal C_{\mathrm{T\text{-}Rex}}]
 = \Oc\!\Bigl(
   np\,\mathbb E[\kappa]
   + L\,\mathbb E[\kappa^2]
   + nT\,\mathbb E[\kappa]
 \Bigr)\ = \Oc(np + L),
\]
since $\mathbb E[\kappa]=\Oc(1)$ and $T=\Oc(1)$. Because $L=\eta p$ in typical calibrations, the dominant term is $\Oc(np)$, matching the stated complexity of the original T--Rex selector while reducing dummy-related work by $n/k$.

\paragraph{Expected peak memory.}
At step $k$, VD--LARS stores the design $\bm X$ ($\Oc(np)$), the projection rows $\bm A_k$ ($\Oc(kL)$), and the realized dummies ($\Oc(nT_k)$). Hence the peak memory over a path of length $\kappa$ is
\[
\mathcal M_{\mathrm{vd}}^{\max}
 = \Oc(np + L \kappa + nT) = \Oc(np + L),
\]
using the NHG bound $\mathbb E[K]=\Oc(1)$. Compared to augmented LARS, which requires $\Oc(np + nL)$, this replaces the dummy cost $nL$ by only $L$, i.e., a reduction by a factor of $n$.

\section{Simulations}
\label{sec:sim}
This section presents a simulation study validating both the theoretical and practical contributions of the virtual dummy construction. First, we empirically confirm the key finite--sample equivalences: (i) the pathwise distributional equivalence between virtual and explicitly augmented dummy forward selection (Theorem~\ref{thm:vd-fs-dist}), and (ii) the resulting equivalence of the full T--Rex selector, including stopping and FDR calibration (Corollary~\ref{cor:fdr-rot-invariant}). We then probe the conditional CLT mechanism underlying Theorem~\ref{thm:pathwise-univ} by measuring the Gaussianity of fresh dummy projections under non-Gaussian coordinate laws. Next, we quantify the finite--sample discrepancy between Gaussian and spherical dummy base laws discussed in Section~\ref{ssec:norm-inflation}. Finally, we benchmark memory usage and runtime of VD--LARS against explicit augmentation.

\subsection{Empirical Validation of Distributional Equivalence}
\label{ssec:dist_eq_sim}
We begin by empirically validating the finite--sample distributional equivalence of VD--LARS and its explicitly augmented counterpart (AD--LARS), as established by Theorem~\ref{thm:vd-fs-dist}.  The goal of this experiment is not to assess statistical power, but to verify that the two implementations generate indistinguishable forward--selection trajectories when run under the same model and stopping rule. In each Monte Carlo replicate, we generate a linear model with $n=300$ observations and $p=1000$ standardized real predictors.  The true active set has size $|\mathcal A|=10$, and the signal-to-noise ratio is set to $\mathrm{SNR}=1$.  We fix $(\bm X,\bm y)$ and resample only the dummy ensemble across replicates.  Each run uses $L=5p$ dummy variables, and we repeat the experiment over $n_{\mathrm{MC}}=2000$ independent dummy realizations.

To probe agreement along the selection path, we stop the algorithm after exactly $T$ dummies have entered the active set and examine the remaining unselected dummies.  Specifically, we compute order statistics of the absolute dummy--residual correlations, \( |\rho_\ell (T)| = |\langle \bm d_\ell,\bm r_T\rangle|,\) namely the ranks $1,5,20$, and $50$ among the unselected dummies.  We further compare the full marginal distributions at two representative stopping times, $T=1$ and $T=20$, using empirical cumulative distribution functions (ECDFs) and quantile--quantile (Q--Q) plots.

Figure~\ref{fig:distr_equivalence} shows nearly perfect agreement between VD--LARS and AD--LARS across all diagnostics.  In particular, both the trajectory of the order statistics as a function of $T$ and the marginal distributions of dummy correlations at fixed $T$ coincide within Monte Carlo error. Since the dummy selection times are random and depend on the entire history of the forward path, this experiment provides strong empirical evidence for the pathwise distributional equivalence of VD--LARS and explicit dummy augmentation.

\begin{figure}[t]
    \centering
    \includegraphics[width=
    \linewidth]{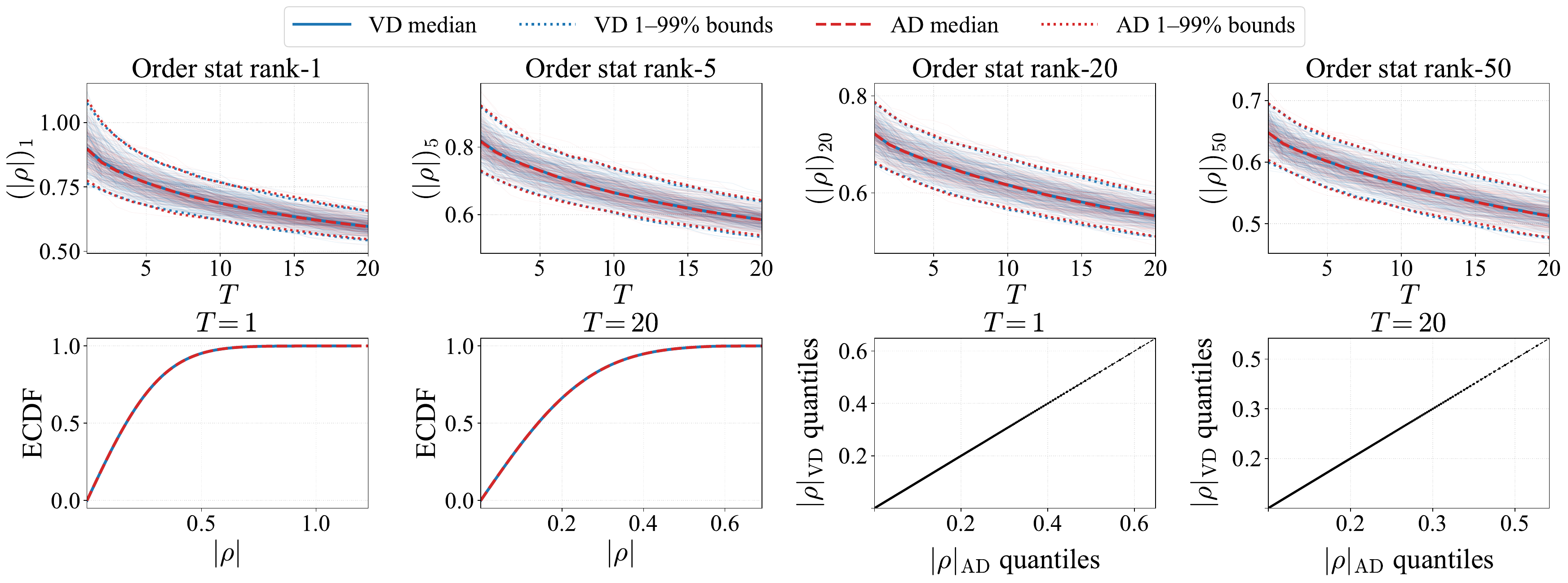}
    \caption{ Distributional equivalence of VD--LARS and explicitly augmented AD--LARS. Top row: median and $1-99\%$ Monte Carlo envelopes (over $2000$ replicates) of order statistics of the absolute correlations between the residual and the unselected dummies after $T$ dummies have been selected, shown for ranks $1, 5, 20,$ and $50$. Bottom row: empirical CDFs and Q--Q plots comparing the marginal distribution of $|\langle \bm d_\ell,\bm r_T\rangle|$ at $T=1$ and $T=20$. Across all diagnostics, VD--LARS and AD--LARS are indistinguishable within Monte Carlo error, supporting the pathwise distributional equivalence predicted by Theorem~\ref{thm:vd-fs-dist}.
    }
    \label{fig:distr_equivalence}
\end{figure}

\subsection{Empirical Validation of FDR Control for Virtual Dummy T--Rex}
\label{ssec:fdr_sim}

We next validate Corollary~\ref{cor:fdr-rot-invariant} by comparing the full
T--Rex selector implemented with explicitly augmented dummies (AD--T--Rex) to
its virtual dummy counterpart (VD--T--Rex).  While the previous experiment
tests distributional equivalence at the level of a single LARS path, this
experiment probes equivalence at the level of the complete T--Rex procedure,
including its calibration, stopping rule, and FDR control.

We consider the same linear model as in
Section~\ref{ssec:dist_eq_sim}, with $n=300$, $p=1000$, and a sparse active set
of size $s=10$.  The design is random with standardized columns, and the
SNR is set to values between $0.2$ and $5$.  For each SNR value, we run the T--Rex selector with $B=20$ independent random experiments at target FDR levels $\alpha\in\{0.1,0.05,0.01\}$ and the total number of dummies takes values in $L\in\{p,5p,10p,20p,30p,40p\}$. Results are averaged over $n_{\mathrm{MC}}=1000$ Monte Carlo replicates for $\alpha\in\{0.1,0.05\}$ and $n_{\mathrm{MC}}=100$ replicates for $\alpha=0.01$. For each replicate, we record the FDP and TPP of the final T--Rex selected set. Figure~\ref{fig:fdr_equivalence} shows that VD--T--Rex and AD--T--Rex are practically indistinguishable across all SNR values and dummy sizes, and that both implementations control the FDP well below the target levels. The sweep over $L$ also illustrates an important practical point: increasing the number of dummies can substantially improve power, especially at stringent target levels.  For instance, at $\alpha=0.05$, using only $L=p$ yields essentially no power, whereas larger dummy budgets recover a nontrivial TPP curve.  This effect is even more pronounced at $\alpha=0.01$, where achieving power requires very large dummy pools.  These regimes are precisely where explicit augmentation becomes computationally prohibitive, and therefore where the virtual dummy construction removes the primary bottleneck of T--Rex while preserving its selection law and FDR guarantees.
\begin{figure}[t]
    \centering
    \includegraphics[width=
    \linewidth]{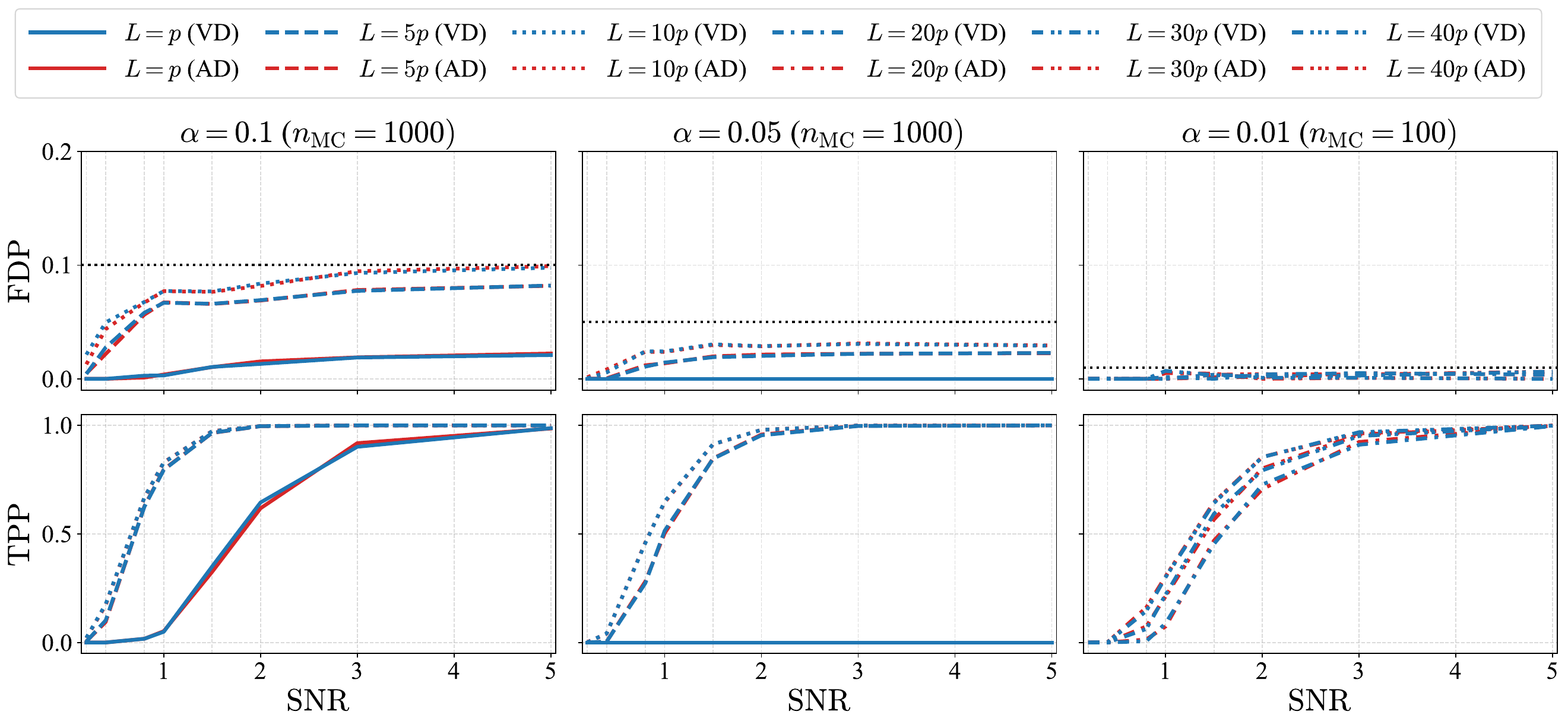}
    \caption{
    FDR control and power of VD--T--Rex versus AD--T--Rex.
    Top row: empirical averaged false discovery proportion (FDP).
    Bottom row: empirical averaged true positive proportion (TPP).
    Columns correspond to target levels $\alpha\in\{0.1,0.05,0.01\}$, and curves
    show varying number of dummies  $L\in\{p,5p,10p,20p,30p,40p\}$.  The horizontal dotted line indicates the target FDR level $\alpha$.  Across all settings, VD--T--Rex and AD--T--Rex produce indistinguishable FDP and TPP curves, confirming that the virtual dummy implementation preserves the full T--Rex selection law and its FDR
    calibration.
    }
    \label{fig:fdr_equivalence}
\end{figure}
\begin{figure}[t]
    \centering
    \includegraphics[width=\linewidth]{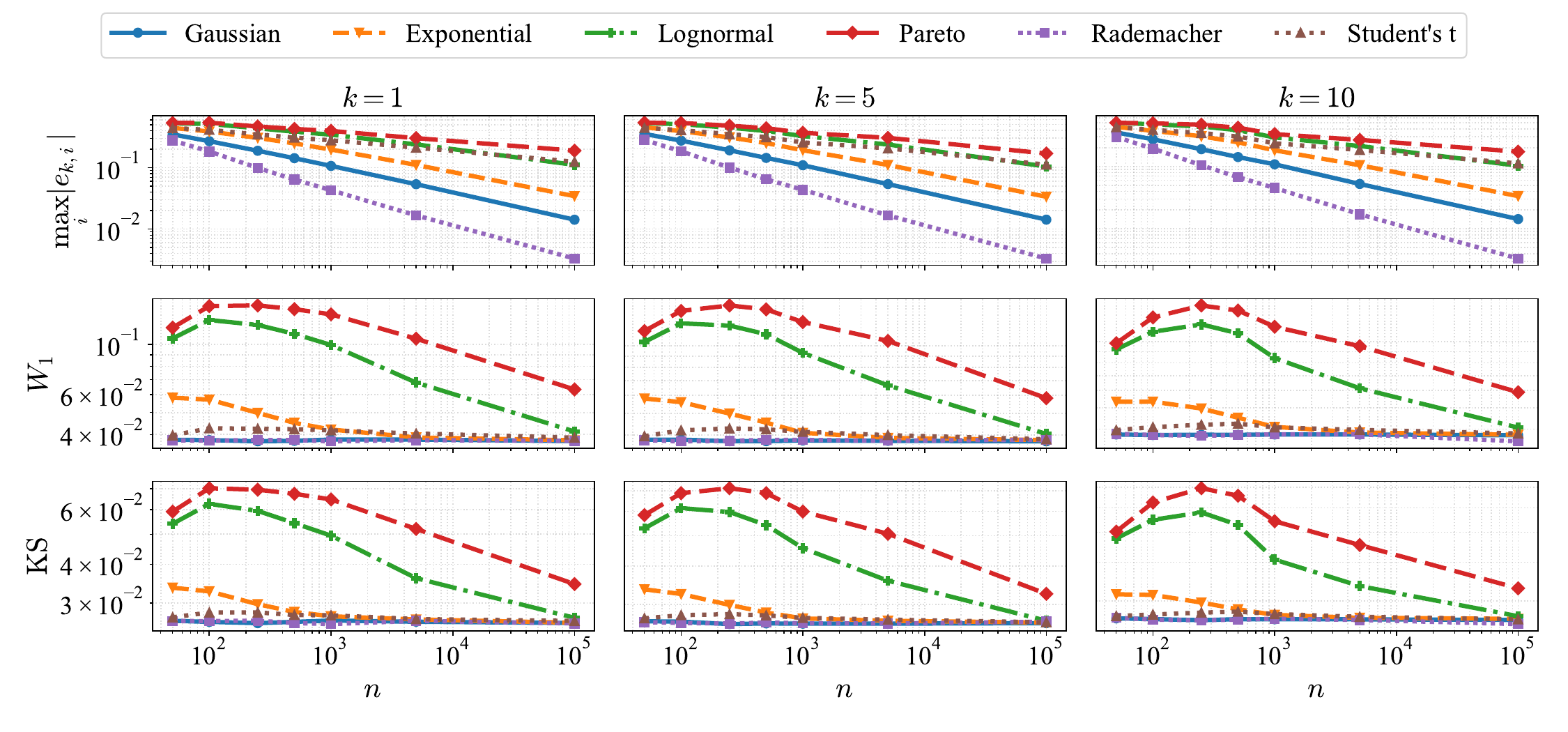}
    \caption{
    Universality diagnostics for Lemma~\ref{lem:fresh-step} (conditional CLT for fresh projections). Columns correspond to forward--selection steps $k\in\{1,5,10\}$, and curves compare different standardized i.i.d.\ coordinate laws used to generate the dummy variables (standard Gaussian, Rademacher, exponential, Student's $t_3$, lognormal, Pareto).  Top row: delocalization proxy $\max_i |(\bm e_k)_i|$ for the data--dependent direction $\bm e_k$.  Middle row: median 1--Wasserstein distance between the empirical distribution of fresh dummy projections $\langle \bm d_\ell,\bm e_k\rangle$ (over unselected dummies) and $\mathcal N(0,1)$.  Bottom row: corresponding Kolmogorov--Smirnov statistic. Across all distributions, $\bm e_k$ becomes delocalized as $n$ grows and the projection distribution approaches $\mathcal N(0,1)$, with slower convergence for heavier-tailed laws at moderate $n$.}
    \label{fig:univ_lemma6}
\end{figure}

\subsection{Conditional Gaussianity of Dummy Projections}
\label{ssec:univ_diag}

We now provide empirical evidence for the conditional CLT mechanism underlying the pathwise universality statement of Theorem~\ref{thm:pathwise-univ}.  The key intermediate claim (Lemma~\ref{lem:fresh-step} in our proof outline) is that, at each step $k$, the fresh projections of the unselected standardized i.i.d.\ dummies onto the data--dependent direction $\bm e_k$ are approximately Gaussian when $n$ is large, provided that the direction $\bm e_k$ is sufficiently delocalized.

We generate a random Gaussian design with $p=100$ standardized real variables and a centered Gaussian response, run forward selection for $K=10$ steps, and augment with $L=10p$ standardized i.i.d.\ dummies.  We sweep the sample size over $n\in\{50,100,250,500,1000,5000,10^5\}$ and repeat each configuration over $n_{\mathrm{MC}}=5000$ independent dummy ensembles.  To probe the conditional CLT, we consider six coordinate laws for the raw dummies: standard Gaussian, Rademacher, exponential, Student's $t$ ($\nu_{t}=3$), lognormal ($\sigma_{\mathrm{lognorm}}=1$), and Pareto ($\alpha_{\mathrm{Pareto}}=3$), each centered and scaled to unit sample variance before column standardization.

\begin{figure}
    \centering
    \includegraphics[width=0.9\linewidth]{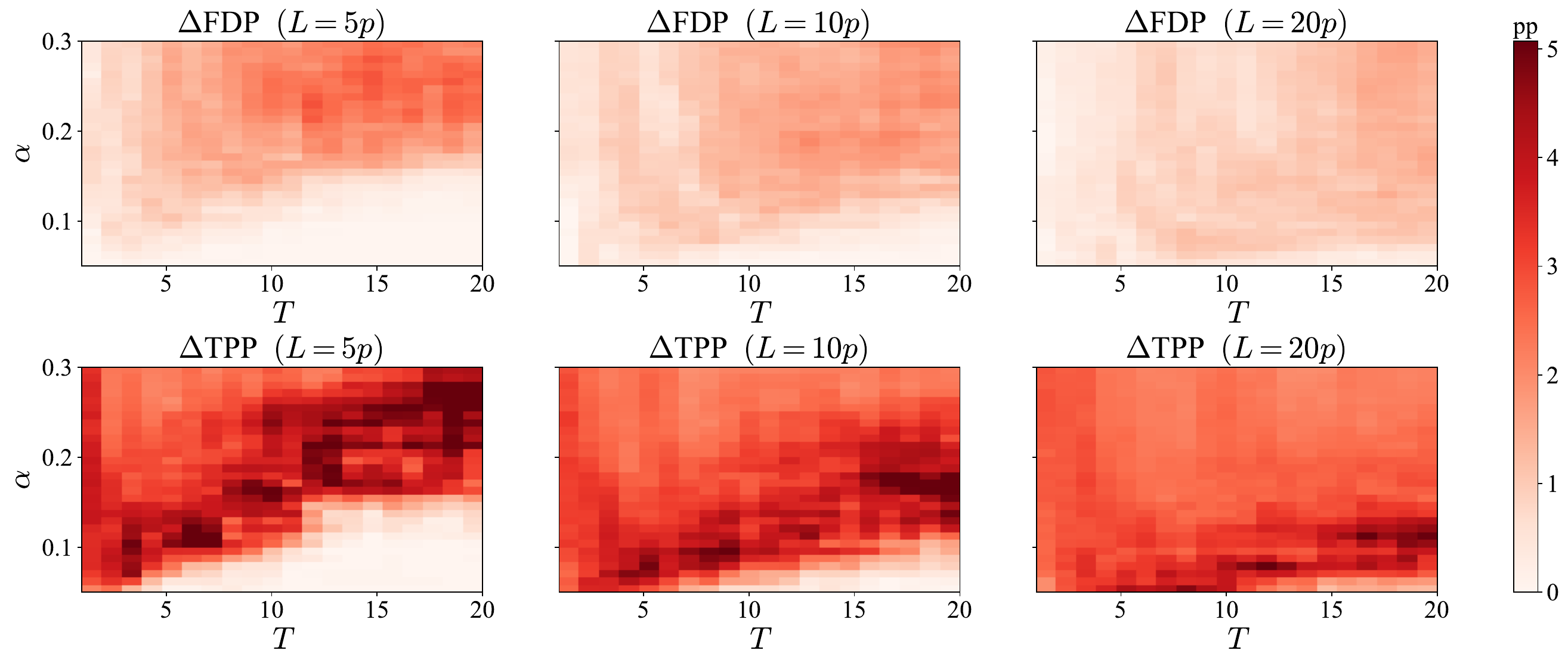} 
\caption{
Finite--sample effect of Gaussian norm fluctuations in T--Rex. Heatmaps show differences, reported as percentage points (pp), between Gaussian and spherical dummy base laws in a high-dimensional regime ($n=300$, $p=10^4$, $s=10$, $\mathrm{SNR}=1$), averaged over $M=1000$ replicates. Top row: $\Delta\mathrm{FDP}=\mathrm{FDP}_{\mathrm{Sph}}-\mathrm{FDP}_{\mathrm{Gauss}}$. Bottom row: $\Delta\mathrm{TPP}=\mathrm{TPP}_{\mathrm{Sph}}-\mathrm{TPP}_{\mathrm{Gauss}} $. Columns correspond to number of dummies $L\in\{5p,10p,20p\}$, and axes sweep the T--Rex calibration parameters $(\alpha,T)$. Gaussian dummies are systematically more conservative (lower FDP) but lose power relative to spherical dummies.
}
\label{fig:gauss_stick_tpp}
\end{figure}

For steps $k\in\{1,5,10\}$, we record (i) the delocalization proxy $\max_i |(\bm e_k)_i|$, and (ii) distributional distances between the empirical distribution of fresh dummy projections $\{\langle \bm d_\ell,\bm e_k\rangle:\ell\ \text{unselected at step }k\}$ and the standard normal law, using the Kolmogorov--Smirnov statistic and the 1--Wasserstein distance.  Figure~\ref{fig:univ_lemma6} shows that $\max_i |(\bm e_k)_i|$ decreases with $n$, consistent with Assumption~(D), and that both KS and Wasserstein distances shrink toward~$0$ across all dummy distributions.  As expected, heavier-tailed coordinate laws (Pareto, lognormal, $t_3$) converge more slowly at moderate $n$, but the same Gaussian limit emerges as $n$ grows.  These results support the conditional Gaussian approximation for fresh projections that drives the universality argument.

\subsection{Gaussian vs.\ Spherical Dummies}
\label{ssec:norm_inflation_sim}

The equivalence result of Theorem~\ref{thm:vd-fs-dist} applies to both Gaussian and spherical dummy base laws.  Since Gaussian dummies are easier to generate, one might be tempted to dispense with the spherical stick-breaking construction altogether. However, as discussed in Section~\ref{ssec:norm-inflation}, we expect a systematic finite--sample discrepancy: Gaussian dummies have random radii, and when many dummies compete simultaneously, the largest dummy norms inflate the maximal dummy correlations.  This makes Gaussian dummies slightly more competitive than spherical ones, leading to more conservative selections and a loss of power.

To isolate this effect, we run the T--Rex selector with either spherical dummies (generated by adaptive stick-breaking) or Gaussian dummies scaled so that $\E\left[\|\bm g_\ell\|_2^2\right] = 1$. We consider a high-dimensional regime with $n=300$, $p=10^4$, $s=10$, and $\mathrm{SNR}=1$, and average results over $n_{\mathrm{MC}}=1000$ Monte Carlo replicates. In each replicate, T--Rex is run with $B=20$ independent random experiments. We sweep the target level $\alpha$ and the number of selected dummies $T$ over a grid, and repeat the experiment for $L\in\{5p,10p,20p\}$ dummies.

Figure~\ref{fig:gauss_stick_tpp} reports the differences between Gaussian and spherical dummies, $\Delta\mathrm{FDP}$ and $\Delta\mathrm{TPP}$, where positive values indicate that Gaussian dummies yield higher FDP or higher power. Across all settings, Gaussian dummies are consistently more conservative: the FDP is slightly reduced relative to spherical dummies, but this comes at a measurable cost in power, with losses reaching several percentage points in the most stringent calibration regimes. These results confirm that, although Gaussian and spherical dummies are asymptotically equivalent, the spherical construction yields a tangible finite--sample advantage when $p$ is large.

\subsection{Memory and Runtime Scalability Benchmark}
\label{ssec:bench_runtime_memory}

We conduct a systems benchmark comparing the computational footprint of AD--LARS and VD--LARS.  The purpose of this experiment is to quantify the practical impact of eliminating the explicit $n\times L$ dummy block, which is the dominant bottleneck in dummy-augmented forward selection at large sample sizes.

We implemented both AD--LARS and VD--LARS in \texttt{C++} and benchmarked them on a Mac Studio (M1 Ultra, 20 CPU cores, 128\,GB RAM).  Each run was executed in a separate single-threaded process (no parallelism). To reduce run-to-run variability, processes were executed with high scheduling priority (favoring performance cores), and no other user processes were active. Peak resident set size (RSS) was capped at 100\,GB.  We fix a linear model with $s=10$ active variables and $\mathrm{SNR}=1$, and run the procedure until $T=10$ dummies have entered the active set.  We sweep the number of dummies $L\in\{p,5p,10p\}$ and report medians over 100 replicates per configuration.  For AD--LARS, runtime measurements exclude the cost of generating and standardizing the dummy matrix (since dummies could in principle be streamed). Figure~\ref{fig:bench_vdlars_adlars} reports (top) peak RSS, together with the theoretical uncompressed memory required to store $(\bm X,\bm y)$ (dotted baseline) and the 100\,GB cap; (middle) the corresponding memory overhead above this baseline; and (bottom) wall-clock runtime.  Across all settings, VD--LARS tracks the $(\bm X,\bm y)$ baseline closely: the dummy-related memory overhead is reduced from tens of gigabytes under explicit augmentation to a few hundred megabytes, consistent with storing only the $k\times L$ projection coefficients and the small number of realized dummies.  In contrast, AD--LARS exhibits pronounced growth in both memory and runtime as $L$ increases, reflecting the unavoidable cost of explicitly materializing, storing, and repeatedly correlating extremely large dummy blocks.  In concrete terms, VD--LARS reduces both peak memory overhead and wall-clock runtime by one to two orders of magnitude relative to AD--LARS across the configurations tested.  Notably, the dependence on $L$ becomes negligible for VD--LARS when $n$ is large, whereas it remains substantial for AD--LARS.

These results corroborate the main computational claim of the paper: virtual dummies remove the primary scaling barrier of dummy-augmented forward selection.  This is particularly consequential in the stringent-FDR regimes of Figure~\ref{fig:fdr_equivalence}, where achieving power requires very large dummy pools, and in biobank-scale settings where $n$ may be in the hundreds of thousands while only a few hundred to a few thousand variables are ever selected.

\begin{figure}[t]
    \centering
   \includegraphics[width = \linewidth]{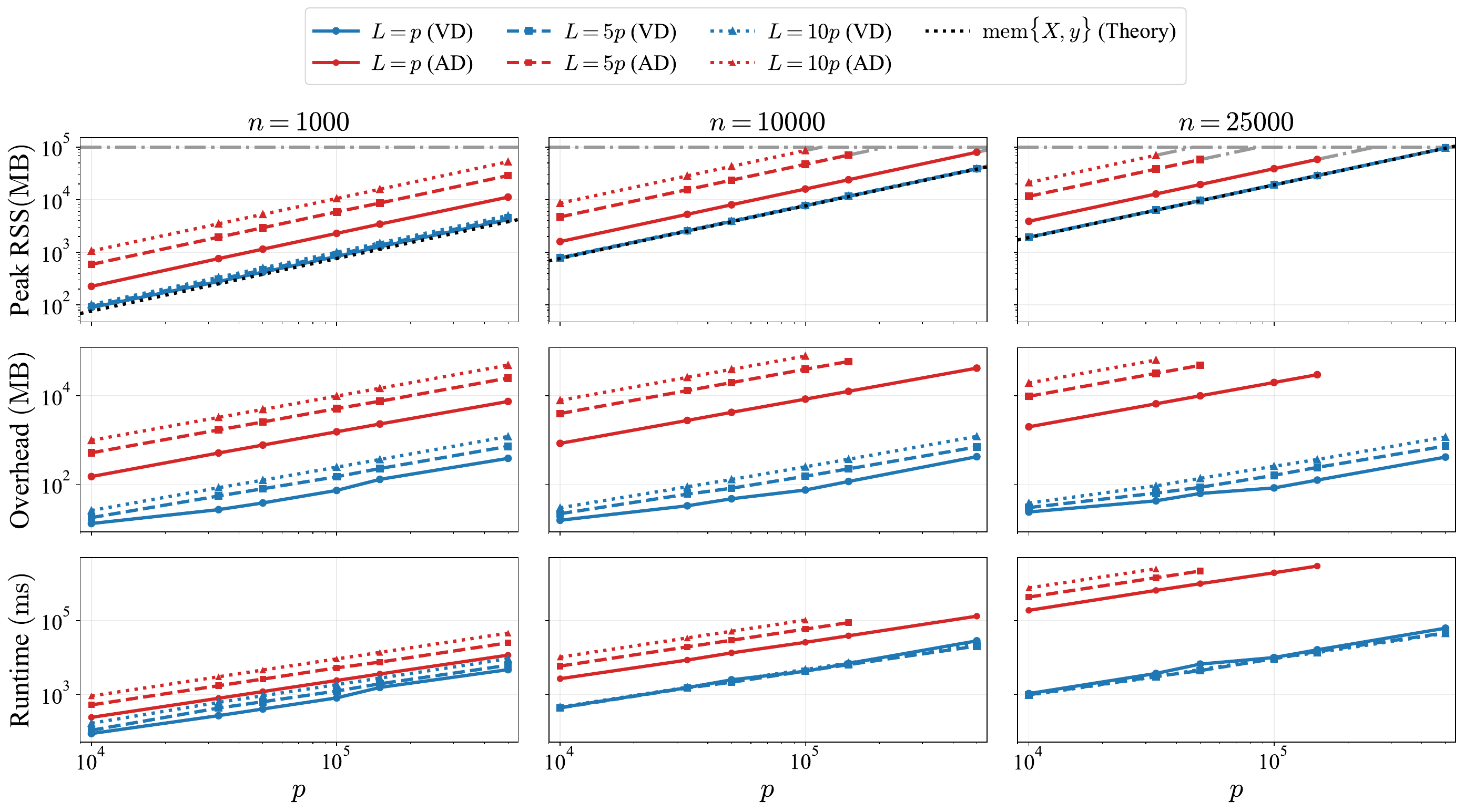}
    \caption{Benchmark of AD--LARS and VD--LARS. Columns correspond to $n\in\{10^3,10^4,2.5\times 10^4\}$, and the horizontal axis sweeps the number of real predictors $p$. Curves show number of dummies $L\in\{p, 5p, 10p\}$. Top row: peak resident set size (RSS), with the theoretical uncompressed memory required to store $(\bm X,\bm y)$ shown as a dotted baseline and the 100\,GB RSS cap indicated by a horizontal line. Middle row: memory overhead above the $(\bm X,\bm y)$ baseline. Bottom row: wall-clock runtime. For AD--LARS, the reported runtime excludes dummy generation and standardization. VD--LARS closely tracks the $(\bm X,\bm y)$ baseline and exhibits negligible dependence on $L$ at large $n$, whereas AD--LARS incurs substantial $L$-dependent memory overhead and pronounced runtime growth.}
    \label{fig:bench_vdlars_adlars}
\end{figure}

\subsection{Benchmark on Realistic GWAS Data}\label{sec:hapnest_benchmark}

\begin{table}[h]
\centering
\caption{FDR benchmark on HAPNEST-simulated GWAS data with multiplicative relative risk model (heterozygous relative risk $\in [1.05, 1.25]$, $s = 10$ causal SNPs, target FDR level $\alpha = 0.1$). Results are averaged over 100 (small-scale) and 30 (full-scale) independent Monte Carlo replications. Methods marked with~$\dagger$ did not complete a single run within the 24\,h wall-time limit; the number of completed runs is given in parentheses. Memory is reported as peak resident set size (RSS).}
\label{tab:hapnest_benchmark}
\vspace{0.5em}
\small
\setlength{\tabcolsep}{4pt}
\renewcommand{\arraystretch}{1.15}
% ── Small-scale panel ──
\textbf{(a) Small-scale:} $n = 10\,000$, $p \approx 31\,000$, $50\%$ prevalence, $100$~replicates
\vspace{0.3em}
\begin{tabular*}{\textwidth}{@{}l @{\extracolsep{\fill}} c c r r r r@{}}
\toprule
\textbf{Method} & $\overline{\textbf{FDP}} \le 10\%$ & $\overline{\textbf{TPP}} > 0\%$ & $\overline{\textbf{FDP}}$ & $\overline{\textbf{TPP}}$ & \textbf{Avg.\ time} & \textbf{Peak RSS} \\
 & & & (in \%) & (in \%) & (hh:mm:ss) & (in GB) \\
\midrule
\textbf{VD--T--Rex}
  & \textbf{\cmark} & \textbf{\cmark} & \textbf{5.2} & \textbf{50.6} & \textbf{00:02:29} & \textbf{5.6} \\
BH--CA
  & \cmark & \xmark & 0.0 & 0.0 & 00:00:03 & 7.2 \\
BY--CA
  & \cmark & \xmark & 0.0 & 0.0 & 00:00:03 & 7.2 \\
BH--HD
  & \xmark & \cmark & 11.0 & 35.5 & 00:02:06 & 8.6 \\
BY--HD
  & \cmark & \cmark & 1.0 & 28.4 & 00:02:06 & 8.6 \\
Knockoffs--HD
  & \cmark & \xmark & 0.0 & 0.0 & 00:04:34 & 9.3 \\
Model-X Knockoffs
  & \cmark & \xmark & 0.0 & 0.0 & 23:16:30 & 95.9 \\[-0.3em]
  & & & & & \multicolumn{1}{r}{\footnotesize (5/100 completed)} \\
\bottomrule
\end{tabular*}
\vspace{1.2em}
% ── Full-scale panel ──
\textbf{(b) Full-scale:} $n = 100\,000$, $p \approx 394\,000$, $10\%$ prevalence, $30$~replicates
\vspace{0.3em}
\begin{tabular*}{\textwidth}{@{}l @{\extracolsep{\fill}} c c r r r r@{}}
\toprule
\textbf{Method} & $\overline{\textbf{FDP}} \le 10\%$ & $\overline{\textbf{TPP}} > 0\%$ & $\overline{\textbf{FDP}}$ & $\overline{\textbf{TPP}}$ & \textbf{Avg.\ time} & \textbf{Peak RSS} \\
 & & & (in \%) & (in \%) & (hh:mm:ss) & (in GB) \\
\midrule
\textbf{VD--T--Rex} \texttt{(memmap)}
  & \textbf{\cmark} & \textbf{\cmark} & \textbf{5.8} & \textbf{59.1} & \textbf{17:49:03} & \textbf{16.4} \\[-0.3em]
  & & & & & \multicolumn{1}{r}{\footnotesize (23/30 completed)} \\
BH--CA \texttt{(memmap)}
  & \cmark & \xmark & 0.0 & 0.0 & 00:41:12 & 4.1 \\
BY--CA \texttt{(memmap)}
  & \cmark & \xmark & 0.0 & 0.0 & 01:09:17 & 4.3 \\
BH--HD$^\dagger$
  & --- & --- & --- & --- & \multicolumn{1}{r}{\footnotesize 0/30 (timed out)} & --- \\
BY--HD$^\dagger$
  & --- & --- & --- & --- & \multicolumn{1}{r}{\footnotesize 0/30 (timed out)} & --- \\
Knockoffs--HD$^\dagger$
  & --- & --- & --- & --- & \multicolumn{1}{r}{\footnotesize 0/30 (timed out)} & --- \\
Model-X Knockoffs$^\dagger$
  & --- & --- & --- & --- & \multicolumn{1}{r}{\footnotesize 0/30 (timed out)} & --- \\
\bottomrule
\end{tabular*}
\end{table}

We now evaluate VD--T--Rex on genotype data with realistic linkage disequilibrium (LD) structure, simulated using \textsc{HAPNEST}~\citep{Hapnest2023}. \textsc{HAPNEST} accurately reproduces the LD patterns, minor allele frequency (MAF) spectrum, and haplotype diversity of real human populations. 
Genotypes were simulated with European-ancestry haplotype reference panels and subjected to standard per-chromosome quality control (MAF ${>}1\%$, call rate ${>}95\%$, Hardy--Weinberg $p-\text{value} > 10^{-6}$), followed by LD pruning via single-linkage clustering at absolute pairwise correlation $|\hat{c}| > 0.7$ to remove near-collinear SNPs while preserving realistic residual LD.\footnote{Pairwise correlations were estimated on a re-standardized subsample of $3\,000$ rows; one representative per cluster was retained at random. Full preprocessing details  and code are available at \url{https://github.com/taulantkoka/virtual-dummies}.} Design matrices were column-centered and $\ell_2$-normalized.

We consider two settings: a small-scale setting with $n = 10\,000$ individuals and $p \approx 31\,000$ SNPs after preprocessing (chromosome~1 only, 100 replicates), and a full-scale setting with $n = 100\,000$ and $p \approx 394\,000$ SNPs after preprocessing (all $22$ autosomes, $30$ replicates). Binary case--control phenotypes are generated from $s = 10$ weak causal SNPs under a multiplicative relative risk model ($\mathrm{RR}_j \sim \mathrm{Unif}(1.05, 1.25)$), with prevalence $0.5$ and $0.1$ at small and full scale, respectively. All experiments ran on exclusive nodes of the Lichtenberg HPC cluster at TU Darmstadt ($384$\,GB RAM, $96$ cores) with a $24$\,h wall-time limit per replicate.

We compare VD--T--Rex against six competing approaches, all evaluated at target FDR level $\alpha = 0.1$. The first two are marginal testing procedures: \emph{BH--CA} and \emph{BY--CA} apply the Cochran--Armitage trend test independently to each SNP and correct the resulting $p$-values with the Benjamini--Hochberg~\citep{Benjamini1995} or Benjamini--Yekutieli~\citep{Benjamini2001} procedure, respectively; the BY correction additionally guarantees FDR control under arbitrary dependence among test statistics. The next two methods, BH--HD and BY--HD, follow the sample-splitting framework of~\cite{Barber2019} for high-dimensional inference. The sample is split using stratified sampling ($37.5\,\%$ screening, $62.5\,\%$ inference), preserving the case--control ratio in both halves. On the screening split, a Lasso path via LARS selects the first $k_{\max} = \lfloor 0.25 \cdot n_{\mathrm{infer}} \rfloor$ variables to enter the regularization path, and on the inference split, OLS is fit on the screened SNPs and two-sided $t$-tests are corrected with BH or BY at level~$\alpha$, respectively. Knockoffs--HD uses the same screening but replaces OLS inference with fixed-$X$ knockoffs~\citep{Barber2015} on the held-out data, following the high-dimensional knockoff approach of~\cite{Barber2019}. Finally, Model-X Knockoffs uses the full \texttt{knockpy} implementation~\citep{Spector2022} without data splitting: it estimates and inverts the $p \times p$ covariance matrix to construct second-order Gaussian knockoff copies and applies the knockoff filter with Lasso coefficient-difference statistics. The VD--T--Rex selector uses VD--LARS in $B = 20$ independent random experiments with $L = 5p$ dummies per experiment. At full scale, the standardized genotype matrix is stored as a Fortran-order \texttt{float64} memory-mapped file. Both VD--T--Rex and the marginal baselines (BH--CA, BY--CA) access this file column-by-column, avoiding materialization of the full $294\,$GB matrix in RAM. 
 
Table~\ref{tab:hapnest_benchmark} summarizes the results, where $\overline{\mathrm{FDP}}$ and $\overline{\mathrm{TPP}}$ denote the empirical mean FDP and TPP across Monte Carlo replicates. BH--CA and BY--CA produce zero discoveries across all replicates at both scales: the weak per-SNP effects do not survive the multiple testing correction over $31\,000$ (small-scale) or $394\,000$ (full-scale) marginal tests. BH--HD achieves nonzero power at the small scale but fails to control FDR ($\overline{\mathrm{FDP}} = 11.0\% > \alpha$). BY--HD, which applies the more conservative BY correction to the same screened set, does control FDR ($\overline{\mathrm{FDP}} = 1.0\%$) but at reduced power ($\overline{\mathrm{TPP}} = 28.4\%$). Knockoffs--HD controls FDR but has zero power, as the conservatism of the knockoff filter further compounds the power loss from data splitting. Model-X knockoffs completes only 5 of 100 small-scale replicates within 24\,h and produces zero discoveries in each. At full scale, BH--HD, BY--HD, Knockoffs--HD, and Model-X knockoffs all time out on all 30 replicates, so their full-scale FDR control and power remain unknown. VD--T--Rex is the only method that produces results at both scales, simultaneously controlling FDR ($\overline{\mathrm{FDP}}$ of $5.2\%$ and $5.8\%$, both well below $\alpha = 10\%$) and achieving nonzero power ($\overline{\mathrm{TPP}}$ of $50.6\%$ and $59.1\%$). At full scale, both VD--T--Rex and the marginal baselines read genotype columns on demand from a memory-mapped file rather than loading the full matrix into RAM, achieving peak RSS of approximately $16$\,GB and $4$\,GB, respectively. Despite the additional I/O overhead of memory-mapped access, VD--T--Rex completes $23$ of $30$ full-scale replicates well within the $24$\,h limit. The marginal baselines (BH--CA, BY--CA) complete faster but lack any power at both scales.
\section{Conclusion}
\label{sec:Conclusion}

This paper develops a virtual dummy construction for dummy-augmented forward-selection procedures. The central idea is to avoid explicit design augmentation by generating only the low-dimensional dummy information that forward selection actually uses, namely sequential projections onto the adaptively constructed selection directions. This yields a forward-selection trajectory that is statistically indistinguishable from the explicitly augmented procedure, while substantially reducing memory and dummy-related computational overhead.

Our main theoretical contribution is an exact, finite-sample distributional equivalence between explicitly augmented and virtual-dummy forward selection under rotationally invariant base laws, including Gaussian and spherical dummies. This equivalence holds pathwise. The entire selection trajectory, together with all intermediate quantities used by the algorithm, has the same probability law. Consequently, the virtual construction serves as a drop-in replacement inside procedures such as the T--Rex selector, preserving both its early-stopping rule and its FDR calibration guarantees.

Beyond rotational invariance, we establish a pathwise universality result. For a fixed number of forward-selection steps, the selection path generated by generic standardized i.i.d.\ dummy coordinates converges in distribution to the same limiting law as in the Gaussian case. The proof hinges on a conditional CLT for projections onto data-dependent but delocalized directions, combined with a finite-set stability property ensuring that unselected dummies do not affect the adaptive basis. This provides a universality guarantee for non-rotationally-invariant i.i.d.\ dummy constructions, despite the history-dependent nature of forward selection.

On the practical side, simulations confirm the finite-sample equivalence, illustrate the conditional Gaussianity mechanism underlying universality, and demonstrate substantial reductions in memory usage and runtime relative to explicit dummy augmentation. Experiments on realistic GWAS data with complex linkage disequilibrium further demonstrate that the proposed approach remains computationally feasible and statistically effective at genome-wide scale. We also quantify an extreme-norm inflation effect. Even when Gaussian dummies are variance-matched to spherical dummies, random radial fluctuations can inflate the maximum dummy correlation, making Gaussian dummies overly competitive in early selection steps, leading to a loss in statistical power. This highlights the importance of separating radial and angular effects when interpreting dummy behavior.

Several directions remain for future work. First, extending the universality argument to regimes in which the dummy pool size $L$ grows with $n$ would require uniform control of conditional projection laws over an increasing set of competitors and a careful treatment of extreme order statistics. Second, while our exact equivalence covers rotationally invariant dummy laws, it would be valuable to develop analogous virtual constructions for correlated dummies designed to mimic aspects of the dependence structure of $\bm X$.  Finally, incorporating additional path algorithms and greedy variants, including procedures with variable deletion or more general pursuit rules, would further broaden the scope of virtual dummies as a computational primitive for error-controlled high-dimensional selection.
\section*{Acknowledgments}
{The work of T. Koka, J. Machkour and M. Muma has been funded by the ERC Starting
Grant ScReeningData, Grant No. 101042407. The work of D. P. Palomar was supported by the Hong Kong GRF 16206123 research grant. The authors gratefully acknowledge the computing time provided to them at the NHR Center NHR4CES at TU Darmstadt (project number p0020087). This is funded by the Federal Ministry of Research, Technology and Space, and the state governments participating on the basis of the resolutions of the GWK for national high performance computing at universities (www.nhr-verein.de/unsere-partner).}

\bigskip

\bibliographystyle{plainnat}
\bibliography{references}  

\newpage

\appendix
\section{Compatibility of Variable Selectors}
\label{app:compatible}
Compatibility with the virtual-dummy construction is determined not by a specific forward-selection algorithm, but by whether the quantities used in the selection rule are contained in the revealed subspace. In particular, the framework is not tied to a particular construction of $\V_k$. What matters is only that $\V_k$ be $\F_k$-measurable and contain the directions needed to compute the dummy scores at step $k$. In the linear-model setting, $\V_k$ is conveniently constructed by orthogonalizing each newly selected predictor against the current basis. Equivalently, one may construct the same revealed subspace by sequentially orthogonalizing the residual direction. Thus, validity does not depend on whether $\V_k$ is generated from selected predictors, but only on whether the quantities used to score the dummies are determined by $\F_k$-measurable directions contained in $\V_k$.

This viewpoint extends the framework beyond linear models. In particular, any forward-selection procedure whose selection rule at step $k$ depends on inner products with one or more $\F_k$-measurable score directions is compatible with the framework, provided these directions are included in $\V_k$. This includes score-based GLM forward selection, where the score is the current GLM score vector, as well as robust forward-selection procedures such as LAD- or Huber-based methods, where the score is a transformed residual direction.

By contrast, procedures are not compatible in general when the selection rule cannot be reduced to inner products with $\F_k$-measurable directions contained in $\V_k$. This includes, for example, exact deviance-based GLM selection and exact Wald or likelihood-ratio procedures, where the next selection requires variable-specific refits or curvature information. Nonlinear screening and projection-pursuit methods, whose decision criteria depend on nonlinear functionals not recoverable from projected dummies alone, likewise fall outside the scope of the framework.

\begin{table}[h]
\centering
\caption{Compatibility of common variable--selection procedures with the virtual--dummy framework. A procedure is compatible if its selection rule at step $k$ can be computed from inner products with $\F_k$-measurable directions contained in the revealed subspace $\V_k$. Procedures requiring candidate-specific refits, curvature information, or nonlinear functionals are not compatible in general.}
\label{tab:scope_compatibility}
\small
\renewcommand{\arraystretch}{1.35}
\setlength{\tabcolsep}{6pt}
\begin{tabularx}{\textwidth}{p{5.1cm}p{2.1cm}X}
\toprule
\textbf{Procedure} & \textbf{Compatible?} & \textbf{Reason} \\
\midrule

OMP and greedy forward stepwise \citep{Mallat1993,pati1993orthogonal,Hocking1976,Chen1989,Joseph2013} &
Yes &
The selection rule is determined by correlations with the residual $\bm r_k\in\V_k$, i.e., by dummy scores $\langle \bm d_\ell,\bm r_k\rangle$. \\[0.3em]

Score-based (adaptive) forward stepwise \citep{Tibshirani2015,Zhang2026} &
Yes &
The selection rule uses a score direction $\bm s_k$ given by a transformed residual, e.g.\ $\bm s_k=\bm y-g(\bm X_{\A_k}\bm\beta_k)$ in GLMs; dummy scores are $\langle \bm d_\ell,\bm s_k\rangle$ once $\bm s_k\in\V_k$. \\[0.3em]

Piecewise-linear solution path methods (e.g., LARS/Lasso) \citep{Efron2004,Rosset2007,Zou2005,Zhao2009} &
Yes &
Path events are driven by score/correlation quantities involving $\F_k$-measurable directions $\bm s_k$; the required dummy scores are of the form $\langle \bm d_\ell,\bm s_k\rangle$. \\[0.3em]

LAD / Huber / robust forward selection \citep{Blanchet2019,Chichua2025} &
Yes &
The score direction is a transformed residual, e.g.\ $\bm s_k=\psi(\bm r_k)$; compatibility holds once $\bm s_k\in\V_k$. \\

\midrule

Exact deviance/Wald/LR GLM selection, nonlinear screening, and projection pursuit \citep{McCullagh1989,Sakate2012,Fan2008,Friedman1981,Hwang2020} &
Not generally &
Require candidate-specific refits, curvature information, or nonlinear functionals not recoverable from projected dummies alone. \\

\bottomrule
\end{tabularx}
\end{table}

A practical caveat arises for procedures that trace a regularized solution path. In piecewise-linear path methods, the relevant score direction may change not only when a new predictor enters the model, but also at intermediate path events, such as coefficients hitting zero or loss-function knots being crossed. Consequently, $\V_k$ may need to be enlarged even when no new variable is selected, so its dimension can grow faster than the active set. Moreover, when a variable first enters along such a path, its coefficient is still zero, so the score has not yet moved in the corresponding new direction; to obtain a nonzero component along that direction, one must first take a positive step along the next linear segment of the path before orthogonalizing the updated score.

This issue does not arise in the same way for forward stepwise procedures, where the score is recomputed only after a new predictor is selected. In that case, the dimension of $\V_k$ grows essentially in step with the active set, including for GLM and robust forward-selection procedures. Even for path-following methods, however, the framework can still be beneficial when the solution path is terminated early. Table~\ref{tab:scope_compatibility} summarizes these distinctions.

\section{Proofs}
\label{app:B}
This appendix collects the proofs of all results stated in Sections~\ref{sec:stick_breaking}--\ref{sec:Var_sel_FDR}. To avoid repetition, we do not restate the lemmas and theorems here; all statements appear in the main text together with surrounding intuition and motivation. The arguments below use the same notation and numbering as in the main text.

The structure of this section mirrors the logical development of the paper: Lemma~\ref{lem:rot-invariance} restates the well-known rotational invariance of the Gaussian base laws; Lemma~\ref{lem:conditional-uniformity-affine} proves conditional uniformity on adaptive subspheres in $H$; Lemma~\ref{lem:adaptive_stick_breaking} derives the sequential Dirichlet structure needed for the virtual--dummy sampler; Corollary~\ref{cor:alg_dist} validates the adaptive stick-breaking algorithm; Theorem~\ref{thm:vd-fs-dist} shows distributional equivalence of the augmented and virtual forward selectors; Corollary~\ref{cor:fdr-rot-invariant} transfers the FDR control guarantee to the virtual--dummy T--Rex procedure.

The final part of the appendix establishes the universality results: Lemma~\ref{lem:conditional-clt} proves a conditional CLT for projections onto data--dependent delocalized directions; Lemma~\ref{lem:finite-dummy} upgrades this to joint Gaussian limits for any fixed finite subset of dummy variables; Lemma~\ref{lem:fresh-step} extends the argument to the fresh projections revealed at an arbitrary forward--selection step; and Theorem~\ref{thm:pathwise-univ} concludes by showing that the entire finite selection path has the same asymptotic law under non-Gaussian i.i.d.\ dummy coordinates as under Gaussian dummies. We proceed to the proofs.

\begin{proof}[\textbf{Proof of Lemma~\ref{lem:rot-invariance}}]
Let $\bm d=\operatorname{vec}(\bm D^\top)
=(\bm d_1^\top,\dots,\bm d_n^\top)^\top$.
Independence of the rows of $\bm D$, each with $\mathrm{Cov}(\bm d_i)=\bm \Sigma$, gives $\operatorname{Cov}(\bm D)=\bm I_n\otimes\bm\Sigma$. Since $\operatorname{vec}((\bm Q\bm D)^\top)=(\bm Q\otimes\bm I_L)\bm d$ and
$\bm Q$ is orthogonal,
\[
  (\bm Q\otimes\bm I_L)\bm d
  \sim \mathcal N_{nL}\!\bigl(\bm 0,(\bm Q\bm Q^{\!\top})\!\otimes\!\bm\Sigma\bigr)
  = \mathcal N_{nL}(\bm 0,\bm I_n\otimes\bm\Sigma),
\]
so $\bm Q\bm D\stackrel d=\bm D$.
\end{proof}

\begin{proof}[\textbf{Proof of Lemma~\ref{lem:conditional-uniformity-affine}}]
Let $\{\bm f_1, \dots ,\bm f_m\}$ denote any fixed orthonormal basis for $H$. By rotational invariance of \(S_H\),
the joint distribution of coordinates
\((\langle\bm d,\bm f_1\rangle,\dots,\langle\bm d,\bm f_m\rangle)\)
is unaffected by any orthogonal change of basis in \(H\).
Fix $k$ and $\ell$ and work on the event $\{\tau_\ell>k\}$, so dummy $\ell$ is not yet realized by time $k$ and only its first $k$ projections are revealed in $\F_k^+$. Condition on \( \F_k^+ \), i.e., fix the realized
\(\bm e_1,\dots,\bm e_k\) and coefficients \(\upalpha_1,\dots,\upalpha_k\).
The feasible set for \(\bm d_\ell\) given this information is
the intersection \(S_H\cap\As_k\),
which is a sphere of radius \(R_k=\sqrt{1-\sum_{i=1}^k\upalpha_i^2}\)
embedded in the affine subspace
\(\As_k = \bm d_\ell^{\parallel} +\V_k^\perp\),
where \(\bm d_\ell^{\parallel}=\sum_{i=1}^k \upalpha_i\bm e_i\).

The subgroup
\[
\mathrm O(\V_k^\perp \mid \V_k)
  := \{\bm Q\in\mathrm{O}(H) : \bm Q\bm v=\bm v \text{ for all } \bm v\in\V_k\}
\]
acts transitively on \(S_H\cap\As_k\)
and leaves it invariant.
Because the law of \(\bm d_\ell\) is induced by Haar measure on \(\mathrm{O}(H)\),
the conditional law of \(\bm d_\ell\mid\F_k^+\)
must be invariant under this subgroup.
There exists a unique invariant probability measure under this action,
the uniform (Haar) measure on the subsphere.
Hence
\[
\bm d_\ell \mid \F_k^+
   \sim \Unif(S_H\cap\As_k),
\]
which implies that the residual component
\(\bm d_\ell^\perp = \bm d_\ell-\bm d_\ell^{\parallel}\)
has direction uniformly distributed on $S_H\cap\V_k^\perp$ and radius $R_k = \sqrt{1 - \sum_{i=1}^k \upalpha_i^2}$, as stated.
\end{proof}

\begin{proof}[\textbf{Proof of Lemma~\ref{lem:adaptive_stick_breaking}}]
After revealing $\upalpha_{1\ell},\dots,\upalpha_{k\ell}$, recall that $Z_i:=\upalpha_{i\ell}^2$, and set
\[
  R_{\ell,k}^2 := 1-\sum_{i=1}^k Z_i = \|\bm d_\ell^\perp\|^2 .
\]
Let
\(\widetilde{\bm d}_\ell^\perp := \bm d_\ell^\perp / \|\bm d_\ell^\perp\|
\in S_H\cap\V_k^\perp\).
By Lemma~\ref{lem:conditional-uniformity-affine},
\[
\widetilde{\bm d}_\ell^\perp\mid\F_k^+
~\sim~
\Unif\bigl(S_H\cap\V_k^\perp\bigr).
\]
For any \(\F_k^+\)-measurable
\(\bm e_{k+1}\in\V_k^\perp\),
\[
Z_{k+1}
= R_{\ell,k}^2\,\langle \widetilde{\bm d}_\ell^\perp, \bm e_{k+1}\rangle^2,
\qquad
\langle \widetilde{\bm d}_\ell^\perp, \bm e_{k+1}\rangle^2
~\big|~\F_k^+
\sim \mathrm{Beta}\Bigl(\tfrac12,\tfrac{m-k-1}{2}\Bigr)
=: U_1.
\]
Hence \(Z_{k+1}=R_{\ell,k}^2 U_1\).
Iterating with $U_i \mid \F_{k+i-1}^+ \sim \mathrm{Beta}(\tfrac12,\tfrac{m-k-i}{2})$ and using the standard stick-breaking characterization of the Dirichlet distribution yields the claimed Dirichlet law for $(Z_1,\dots,Z_m)$ and its conditional marginals.
\end{proof}

\begin{proof}[\textbf{Proof of Corollary~\ref{cor:alg_dist}}]
By Lemma~\ref{lem:adaptive_stick_breaking}, the squared coefficients 
\((\upalpha_1^2,\dots,\upalpha_m^2)\) follow the 
\(\mathrm{Dirichlet}(\tfrac12,\dots,\tfrac12)\) law in \(H\), even when the basis 
\((\bm e_1,\dots,\bm e_m)\) is chosen adaptively.
At the final step, the revealed subspace is one-dimensional in \(H\), so the last coefficient 
absorbs the remainder deterministically:
\(\upalpha_m = S_m R_{m-1}\).
This is precisely the procedure in 
Algorithm~\ref{alg:seq_spherical}, hence 
\(\bm d = \sum_{k=1}^m \upalpha_k\bm e_k\)
is uniformly distributed on \(S_H\).
\end{proof}

\begin{proof}[\textbf{Proof of Theorem~\ref{thm:vd-fs-dist}}]
We show that $\mathcal{P}_{\mathrm{ad}}$ and $\mathcal{P}_{\mathrm{vd}}$
share the same conditional transition kernel from $\F_k$ to $\F_{k+1}$
for every $k$; since both start from $\sigma(\bm X,\bm y)$ and terminate
in finitely many steps, the chain rule of probability then gives
$\mathcal{P}_{\mathrm{ad}} \stackrel{d}{=} \mathcal{P}_{\mathrm{vd}}$.

At step $k$, passing from $\F_k$ to $\F_k^+$ realizes at most one dummy
and fixes $\bm e_{k+1}$. Since $\bm D$ has product law
$\Psi(H)^{\otimes L}$, this does not reveal information about the
orthogonal residuals $\{\bm d_{\ell,k}^{\perp} : \tau_\ell>k\}$, which remain conditionally i.i.d.\ given $\F_k^+$. The next projections are $\upalpha_{k+1,\ell} = \langle\bm d_{\ell,k}^{\perp},\bm e_{k+1}\rangle$, and by rotational invariance:

\noindent
(\,G\,) under the Gaussian base law, isotropy of $\bm d_{\ell,k}^{\perp}$ in $\V_k^\perp$ gives $\upalpha_{k+1,\ell}\mid\F_k^+\sim\mathcal N(0,1)$, conditionally i.i.d.

\noindent
(\,S\,) under the spherical base law, Lemma~\ref{lem:conditional-uniformity-affine} gives uniformity on $S_H\cap\V_k^\perp$ with radius $R_{\ell,k}=\sqrt{1-\sum_{i=1}^k\upalpha_{i\ell}^2}$, hence $\upalpha_{k+1,\ell}\mid\F_k^+ \sim R_{\ell,k}\sqrt{\mathrm{Beta}(\tfrac12,\tfrac{m-k-1}{2})}$, again conditionally i.i.d.

\noindent
By construction, VD--FS samples the projections
$\{\upalpha_{k+1,\ell}\}_{\tau_\ell>k}$ from exactly these conditional laws. All subsequent updates are deterministic functions of $\F_k^+$-measurable quantities and the fresh projections $\{\upalpha_{k+1,\ell}\}_{\tau_\ell>k}$, so the two processes share the same transition kernel at every step.
\end{proof}

\begin{proof}[\textbf{Proof of Corollary~\ref{cor:fdr-rot-invariant}}]
Fix $(\bm X,\bm y)$. For each $j\in\{1,\dots,p\}$, the relative occurrence is
\[
\Phi_{T,L}(j)
= \frac{1}{B}\sum_{b=1}^B \mathds{1}\!\left\{\, j \in \mathcal{C}_{b,L}(T) \,\right\}.
\]
By Theorem~\ref{thm:vd-fs-dist}, for each random experiment $b\in\{1,\dots,B\}$, the virtual-dummy and explicitly augmented procedures induce the same conditional law on the full forward-selection trajectory given
$(\bm X,\bm y)$. Hence the indicators
\[
\mathds{1}\!\left\{\, j \in \mathcal{C}_{b,L}(T) \,\right\},
\qquad b=1,\dots,B,
\]
have the same conditional law under $\mathcal P_{\mathrm{aug}}$ and $\mathcal P_{\mathrm{vd}}$. Since the $B$ random experiments are generated independently, these indicators are conditionally i.i.d.\ given $(\bm X,\bm y)$ under both procedures. Therefore,
\[
\Phi_{T,L}(j)
\xrightarrow[B\to\infty]{a.s.}
\Phi_{T,L}^{\infty}(j)
:=
\E\!\left[
\mathds{1}\!\left\{\, j \in \mathcal{C}_{1,L}(T) \,\right\}
\,\middle|\, \bm X,\bm y
\right],
\]
and the deterministic conditional limit $\Phi_{T,L}^{\infty}(j)$ is the same under virtual dummies and explicit augmentation.

It follows that all asymptotic quantities used by the T--Rex selector are the same under both constructions, since they are deterministic functionals of $\{\Phi_{T,L}^{\infty}(j)\}_{j=1}^p$. In particular, the limiting selected set
\[
\widehat{\mathcal A}_L(v,T)=\{j:\Phi_{T,L}^{\infty}(j)>v\},
\]
the false discovery proportion $\mathrm{FDP}(v,T,L)$, the corresponding estimator $\widehat{\mathrm{FDP}}(v,T,L)$, and the induced voting threshold $v$ coincide conditional on $(\bm X,\bm y)$. Therefore, the proof of Theorem~1 in \citet{Machkour2025} applies without further modifications, yielding
\[
\E[\mathrm{FDP}(v,T,L)] \le \alpha.
\]
\end{proof}

\begin{lem}
\label{lem:conditional-clt}
For each $n$, let $\bm\updelta^{(n)} = (\updelta^{(n)}_1,\dots,\updelta^{(n)}_n)^\top$ have
i.i.d.\ coordinates with $\E[\updelta]=0$ and $\E[\updelta^2]=1$.
Let $\G_n$ be a $\sigma$--field such that $\bm\updelta^{(n)}$ is independent of
$\G_n$.
Suppose there exist $\G_n$--measurable unit vectors
$\bm e_1^{(n)},\dots,\bm e_{k+1}^{(n)}\in\R^n$ forming an orthonormal system and satisfying
\[
  \max_{1\le i\le k+1}\|\bm e_i^{(n)}\|_\infty \xrightarrow[n\to\infty]{P} 0.
\]
Let $\bm E_n = (\bm e_1^{(n)},\dots,\bm e_{k+1}^{(n)})\in\R^{n\times(k+1)}$ and
define the projection vector
\(
  \bm\upalpha_n := \bm E_n^\top \bm\updelta^{(n)} \in \R^{k+1}.
\)
Then
\[
  \bm\upalpha_n \,\big|\, \G_n
  \xRightarrow[n\to\infty]{d}\;
  \mathcal N(\bm 0,\bm I_{k+1})
  \quad\text{in probability}.
\]
\end{lem}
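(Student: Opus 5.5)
Because $\bm\updelta^{(n)}$ is independent of $\G_n$ and $\bm E_n$ is $\G_n$-measurable, the conditional law of $\bm\upalpha_n$ given $\G_n$ is obtained by freezing $\bm E_n$. Concretely, a regular version is $\kappa_n(\omega,\cdot)=\mathcal L\bigl(\bm E^\top\bm\updelta^{(n)}\bigr)\big|_{\bm E=\bm E_n(\omega)}$, where $\bm E$ ranges over deterministic orthonormal systems. The statement ``$\bm\upalpha_n\mid\G_n\Rightarrow\mathcal N(\bm 0,\bm I_{k+1})$ in probability'' then means that $d_{\mathrm{BL}}(\kappa_n,\mathcal N(\bm 0,\bm I_{k+1}))\to 0$ in probability for the bounded-Lipschitz metric, or equivalently that conditional characteristic functions converge in probability for each fixed $\bm t$.

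The plan is to reduce everything to a quantitative deterministic Lindeberg bound, depending on $\bm E$ only through $\varepsilon=\max_i\|\bm e_i\|_\infty$, and then plug in the random $\bm E_n$.

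\textbf{Step 1 (Cramér--Wold for a fixed system).} Fix $\bm t\in\R^{k+1}$ and put $\bm w=\bm E\bm t$. Orthonormality gives $\|\bm w\|_2=\|\bm t\|_2$, and
\[
\|\bm w\|_\infty\le\sum_i|t_i|\,\|\bm e_i\|_\infty\le\|\bm t\|_1\,\varepsilon .
\]
Then $\bm t^\top\bm E^\top\bm\updelta=\sum_{j=1}^n w_j\updelta_j$ is a weighted sum of independent centered terms with total variance $\|\bm t\|_2^2$.

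\textbf{Step 2 (Lindeberg condition uniformly in the weights).} For $\eta>0$,
\[
\sum_j w_j^2\,\E\bigl[\updelta^2\one\{|w_j\updelta|>\eta\}\bigr]\le\|\bm t\|_2^2\,\E\bigl[\updelta^2\one\{|\updelta|>\eta/(\|\bm t\|_1\varepsilon)\}\bigr]=:\|\bm t\|_2^2\,h(\varepsilon),
\]
and $h(\varepsilon)\to0$ as $\varepsilon\to0$ by dominated convergence, using only $\E\updelta^2=1$. Here the coordinate law is assumed not to depend on $n$; if it did, one would need uniform integrability of $\updelta^2$. The standard Lindeberg characteristic-function estimate,
\[
\bigl|\E e^{\mathrm i\sum_j w_j\updelta_j}-e^{-\|\bm w\|_2^2/2}\bigr|\le C\bigl(\eta\|\bm t\|_2^2+h(\varepsilon)\bigr),
\]
holds for all $\eta>0$. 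This yields a deterministic function $g_{\bm t}(\varepsilon)\to0$, independent of $\bm E$, with
\[
\bigl|\varphi_{\bm E}(\bm t)-e^{-\|\bm t\|^2/2}\bigr|\le g_{\bm t}\bigl(\max_i\|\bm e_i\|_\infty\bigr).
\]

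\textbf{Step 3 (randomize).} Substituting $\bm E=\bm E_n(\omega)$ gives
\[
\bigl|\E[e^{\mathrm i\bm t^\top\bm\upalpha_n}\mid\G_n]-e^{-\|\bm t\|^2/2}\bigr|\le g_{\bm t}\bigl(\max_i\|\bm e_i^{(n)}\|_\infty\bigr)\xrightarrow{P}0
\]
by hypothesis (D), since $g_{\bm t}$ is monotone and tends to zero at zero. To pass from pointwise-in-$\bm t$ convergence in probability to convergence of the conditional laws in probability, I would use a subsequence argument. Take a countable dense set of $\bm t$'s. Every subsequence has a further subsequence along which $\max_i\|\bm e_i^{(n)}\|_\infty\to0$ almost surely. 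Along it, conditional characteristic functions converge for all $\bm t$ simultaneously, because the bound of Step 2 holds for every $\bm t$ once $\varepsilon$ is small. Lévy's continuity theorem then gives almost-sure weak convergence of $\kappa_n$, and hence $d_{\mathrm{BL}}\to0$ in probability.

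The main obstacle is Step 2: it must be made uniform over all orthonormal systems with the same sup-norm. This is exactly why I isolate the explicit function $g_{\bm t}$ rather than invoking the Lindeberg–Feller theorem as a limit statement. The remaining subtlety is the measure-theoretic identification of the conditional law with the frozen law, which follows from the independence of $\bm\updelta^{(n)}$ and $\G_n$ via the usual substitution (freezing) lemma.
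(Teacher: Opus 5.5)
Your proof is correct. It shares the paper's skeleton: freezing $\bm E_n$ via independence, Cram\'er--Wold reduction to $\sum_j q_j\updelta_j$ with $\|\bm q\|_2=\|\bm t\|_2$ and $\|\bm q\|_\infty\le\|\bm t\|_1\max_i\|\bm e_i\|_\infty$, and a subsequence argument. The central-limit step, however, is done differently.

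\textbf{The paper's route.} It uses the qualitative H\'ajek--\v{S}id\'ak CLT under Noether's condition. It extracts a subsequence along which $\max_j|q_j^{(n)}|\to0$ almost surely, applies the theorem $\omega$ by $\omega$ to the frozen weights, and then appeals to a ``conditional Cram\'er--Wold device''. That route is shorter because it outsources the work to a classical theorem.

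\textbf{Your route.} You derive a deterministic bound $|\varphi_{\bm E}(\bm t)-e^{-\|\bm t\|^2/2}|\le g_{\bm t}(\varepsilon)$ that is uniform over all orthonormal systems with sup-norm $\varepsilon$. This buys two things.
\begin{itemize}
\item Convergence in probability of the conditional characteristic function follows by direct substitution, with an explicit modulus.
\item Your subsequence depends only on $\varepsilon_n=\max_i\|\bm e_i^{(n)}\|_\infty$ and not on $\bm t$, so simultaneous convergence for all $\bm t$, and hence L\'evy continuity pathwise, is automatic. In the paper the subsequence is chosen through $\bm q^{(n)}=\bm E_n\bm t$, so the Cram\'er--Wold step leaves this simultaneity implicit.
\end{itemize}
You also rightly flag that a fixed coordinate law (or uniform integrability of $\updelta^2$) is needed. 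The paper's use of H\'ajek--\v{S}id\'ak tacitly requires this as well.

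\textbf{One small correction.} The standard Lindeberg estimate also contains the term $\sum_j w_j^4/8\le\|\bm t\|_1^2\|\bm t\|_2^2\varepsilon^2/8$, which comes from comparing $\prod_j(1-w_j^2/2)$ with $e^{-\|\bm w\|_2^2/2}$. The telescoping product bound also needs $\max_j w_j^2$ small. Both conditions vanish as $\varepsilon\to0$ and can be absorbed into $g_{\bm t}$, so the argument stands.
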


\begin{proof}
Fix $\bm t\in\R^{k+1}$ with $\|\bm t\|_2=1$ and set
\[
  \xi_n := \bm t^\top \bm\upalpha_n
         = \sum_{j=1}^n q_j^{(n)} \updelta_j^{(n)},
  \qquad
  \bm q^{(n)} := \bm E_n \bm t
  \in \R^n.
\]
Conditional on $\G_n$, the weights $\bm q^{(n)}$ are deterministic and
$\{\updelta_j^{(n)}\}_{j=1}^n$ are i.i.d.\ with mean $0$ and variance $1$.
By orthonormality of $\{\bm e_i^{(n)}\}$,
\[
  \|\bm q^{(n)}\|_2^2
  = \Big\|\sum_{i=1}^{k+1} t_i \bm e_i^{(n)}\Big\|_2^2
  = \sum_{i=1}^{k+1} t_i^2
  = 1,
\]
so $\Var(\xi_n\mid\G_n)=1$. By Cauchy--Schwarz and the delocalization assumption,
\[
  \max_{1\le j\le n} |q_j^{(n)}|
  \le \sum_{i=1}^{k+1} |t_i|\,\|\bm e_i^{(n)}\|_\infty
  \xrightarrow[n\to\infty]{P} 0.
\]
Thus, the \emph{Noether condition}
\(\{
  \max_{1\le j\le n} (q_j^{(n)})^2 \;\longrightarrow\; 0\}
\)
holds in probability (since $\|\bm q^{(n)}\|_2^2=1$). Let $(n_r)$ be any subsequence. There exists a further subsequence
$(n_{r_s})$ along which
\[
  \max_{1\le j\le n_{r_s}} |q_j^{(n_{r_s})}|
  \xrightarrow[s\to\infty]{a.s.} 0.
\]
Fix $\omega$ in this probability-one event.
For this $\omega$, the weights $\bm q^{(n_{r_s})}(\omega)$ are deterministic,
satisfy $\sum_j (q_j^{(n_{r_s})}(\omega))^2=1$ and
$\max_j |q_j^{(n_{r_s})}(\omega)|\to0$.  
By the Hájek--Šidák central limit theorem
\citep[Thm.~1, Sec.~6.1.2]{Hjek1999},
the weighted sums
\[
  \xi_{n_{r_s}}(\omega)
  = \sum_{j=1}^{n_{r_s}} q_j^{(n_{r_s})}(\omega)\,\updelta_j^{(n_{r_s})}
\]
converge in distribution to $\mathcal N(0,1)$.
Equivalently, for every bounded continuous $f:\R\to\R$,
\[
  \E\bigl[f(\xi_{n_{r_s}})\mid\G_{n_{r_s}}\bigr](\omega)
  \longrightarrow \E[f(Z)],
  \qquad Z\sim\mathcal N(0,1).
\]

Thus along $(n_{r_s})$ we have
$\E[f(\xi_{n_{r_s}})\mid\G_{n_{r_s}}]\to\E[f(Z)]$ almost surely, hence in
probability.  Since every subsequence $(n_r)$ admits such a further subsequence,
it follows that
\[
  \xi_n \mid \G_n \xRightarrow[n\to\infty]{d} \mathcal N(0,1)
  \quad\text{in probability}.
\]

Because this holds for every $\bm t\in\R^{k+1}$, the (conditional) Cramér--Wold
device implies
\[
  \bm\upalpha_n \mid \G_n
  \xRightarrow[n\to\infty]{d} \mathcal N(\bm 0,\bm I_{k+1})
  \quad\text{in probability},
\]
which is the desired conclusion.
\end{proof}

\begin{lem}
\label{lem:finite-dummy}
Work under Assumption~\ref{ass:universality}.  
Fix a finite index set $\Lambda\subset\{1,\dots,L\}$ and define
\(
\G_{n,\Lambda}
  := \sigma\!\bigl(\bm X,\bm y,\{\bm d_r^{(n)}:r\notin\Lambda\}\bigr).
\)
For $\ell\in\Lambda$ and $i=1,\dots,K$ define the scaled projections
\[
  \upalpha_{i,\ell}^{(n)}
  := \langle \bm d_\ell^{(n)},\bm e_i^{(n)}\rangle,
  \qquad
  \bm\upalpha_\ell^{(n)}
  := (\upalpha_{1,\ell}^{(n)},\dots,\upalpha_{K,\ell}^{(n)})^\top.
\]

On the event 
\(
  \Lambda\cap\{J_1^{(n)},\dots,J_{K-1}^{(n)}\}=\emptyset,
\)
i.e., no index in $\Lambda$ is selected before step $K$, we have
\[
  \bigl(\bm\upalpha_\ell^{(n)}\bigr)_{\ell\in\Lambda}
  \,\Big|\,\G_{n,\Lambda}
  \xRightarrow[n\to\infty]{d}
  \bigl(\bm Z_\ell\bigr)_{\ell\in\Lambda}
  \quad\text{in probability},
\]
where $\{\bm Z_\ell\}_{\ell\in\Lambda}$ are independent 
$\mathcal N(\bm 0,\bm I_K)$ random vectors.
\end{lem}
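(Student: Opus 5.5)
The plan is to reduce the statement to Lemma~\ref{lem:conditional-clt} by a leave-$\Lambda$-out construction. We apply the lemma separately to each $\ell\in\Lambda$, then combine the pieces using independence of the dummies.

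\textbf{Step 1: a $\G_{n,\Lambda}$-measurable basis.} Let $\widetilde{\bm e}_1^{(n)},\dots,\widetilde{\bm e}_K^{(n)}$ be the basis vectors produced by running the FS procedure on the design with the dummies indexed by $\Lambda$ removed, using the same $(\bm X,\bm y)$ and tie-breaking. This reduced run uses only $(\bm X,\bm y)$ and $\{\bm d_r^{(n)}:r\notin\Lambda\}$, so each $\widetilde{\bm e}_i^{(n)}$ is $\G_{n,\Lambda}$-measurable. On the event $\Lambda\cap\{J_1^{(n)},\dots,J_{K-1}^{(n)}\}=\emptyset$ we have $\tau_\ell^{(n)}>K-1$ for every $\ell\in\Lambda$. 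Assumption \textbf{(Sel)} with $k=K-1$ then gives $\bm e_i^{(n)}=\widetilde{\bm e}_i^{(n)}$ for $i\le K$, so on this event $\bm\upalpha_\ell^{(n)}=\widetilde{\bm E}_n^\top\bm d_\ell^{(n)}$. For delocalization we only have \textbf{(D)} for the full-run vectors $\bm e_i^{(n)}$. We will either read \textbf{(D)} as also covering the leave-$\Lambda$-out run (it is the same kind of algorithm), or use the fact that the two bases agree on the event. Note that the first vector $\bm y/\|\bm y\|$ is the same in both runs.

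\textbf{Step 2: from raw to normalized dummies.} The $\bm\updelta_\ell^{(n)}$, $\ell\in\Lambda$, are independent of $\G_{n,\Lambda}$ and of each other. Since $\widetilde{\bm e}_i\in H$, we have $\widetilde{\bm E}_n^\top I_H\bm\updelta_\ell=\widetilde{\bm E}_n^\top\bm\updelta_\ell$. Hence
\[
\bm\upalpha_\ell^{(n)}=\frac{\widetilde{\bm E}_n^\top\bm\updelta_\ell^{(n)}}{\|I_H\bm\updelta_\ell^{(n)}\|_2/\sqrt n}.
\]
Lemma~\ref{lem:conditional-clt}, applied with $\G_n=\G_{n,\Lambda}$ and $k+1=K$, gives $\widetilde{\bm E}_n^\top\bm\updelta_\ell\mid\G_{n,\Lambda}\Rightarrow\mathcal N(\bm 0,\bm I_K)$ in probability. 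The denominator does not depend on $\G_{n,\Lambda}$. By the weak law of large numbers, $\|I_H\bm\updelta_\ell\|_2^2/n=n^{-1}\sum_j\updelta_j^2-(\bar\updelta)^2\to1$ in probability. A conditional Slutsky argument (subsequence extraction to a.s.\ convergence, exactly as in the proof of Lemma~\ref{lem:conditional-clt}) transfers the Gaussian limit to $\bm\upalpha_\ell^{(n)}$.

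\textbf{Step 3: joint law over $\Lambda$.} Conditionally on $\G_{n,\Lambda}$, the vectors $\{\widetilde{\bm E}_n^\top\bm d_\ell^{(n)}\}_{\ell\in\Lambda}$ are functions of independent $\bm\updelta_\ell$'s and a fixed matrix, so they are conditionally independent. Their conditional characteristic function therefore factorizes into a product over $\ell\in\Lambda$. Each factor converges in probability by Step~2, and so the product converges in probability to that of independent $\mathcal N(\bm 0,\bm I_K)$ vectors. Lévy's continuity theorem along a.s.-convergent subsequences then yields the claim. Restricting to the event in the statement is harmless, because on that event the observed projections coincide with the $\G_{n,\Lambda}$-adapted ones from Step~1.

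\textbf{Main obstacle.} The delicate step is Step~1. We must justify that the leave-$\Lambda$-out basis is delocalized, since \textbf{(D)} is stated for the full run, and we must make precise what ``conditional convergence on an event'' means, given that the event itself depends on $\bm d_\ell$. I would handle this by stating the result for $\widetilde{\bm E}_n^\top\bm d_\ell$, which holds unconditionally on the event, and noting that it coincides with $\bm\upalpha_\ell^{(n)}$ on the event. I would also add the mild reading of \textbf{(D)} that it applies to every leave-finite-set-out run.
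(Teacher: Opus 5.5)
Your proposal is correct and follows essentially the same route as the paper's proof. \textbf{(Sel)} identifies the first $K$ directions with the $\G_{n,\Lambda}$-measurable leave-$\Lambda$-out basis on the event, Lemma~\ref{lem:conditional-clt} is applied to each raw $\bm\updelta_\ell^{(n)}$, independence over $\ell\in\Lambda$ gives the joint limit, and conditional Slutsky with $\|I_H\bm\updelta_\ell^{(n)}\|_2/\sqrt n\to 1$ handles the normalization; the paper only differs in combining over $\ell$ before applying Slutsky.

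The two subtleties you flag, delocalization of the leave-$\Lambda$-out basis and conditioning on an event that is not $\G_{n,\Lambda}$-measurable, are passed over silently in the paper, whose argument implicitly needs the same reading of \textbf{(D)} that you propose. You could even avoid that extra reading. Replace the leave-out basis by a fixed delocalized orthonormal system in $H$ whenever its sup-norm exceeds a threshold $\epsilon_n\to 0$, where $\epsilon_n$ is chosen so that $\Pr(\max_{i\le K}\|\bm e_i^{(n)}\|_\infty>\epsilon_n)\to 0$. On the event, this substitution then occurs only with vanishing probability.
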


\begin{proof}
By \textbf{(Sel)}, on the event that no $\ell\in\Lambda$ is selected in the first 
$K-1$ steps, the forward selector that operates on the leave--$\Lambda$--out design  (i.e.\ using only the dummies $\{\bm d_r^{(n)}: r\notin\Lambda\}$) produces  the same sequence of directions $\bm e_1^{(n)},\dots,\bm e_K^{(n)}$. Thus, on this event, each $\bm e_i^{(n)}$ is $\G_{n,\Lambda}$--measurable.

Fix $\ell\in\Lambda$. The raw dummy $\bm\updelta_\ell^{(n)}$ is independent  of $\G_{n,\Lambda}$ and has i.i.d.\ coordinates with mean $0$ and variance  $1$. Applying Lemma~\ref{lem:conditional-clt} with $k+1$ replaced by $K$ and with $\G_n=\G_{n,\Lambda}$ yields
\[
\bm\vartheta_\ell^{(n)}
  := \bigl(\langle\bm\updelta_\ell^{(n)},\bm e_1^{(n)}\rangle,\dots,
           \langle\bm\updelta_\ell^{(n)},\bm e_K^{(n)}\rangle\bigr),
\qquad
\bm\vartheta_\ell^{(n)} \,\Big|\,\G_{n,\Lambda}
  \xRightarrow[n\to\infty]{d}
  \mathcal N(\bm 0,\bm I_K)
  \quad\text{in probability}.
\]
Since the $\{\bm\updelta_\ell^{(n)}\}_{\ell\in\Lambda}$ are independent, we obtain joint convergence over $\ell\in\Lambda$:
\[
  \bigl(\bm\vartheta_\ell^{(n)}\bigr)_{\ell\in\Lambda}\,\Big|\,\G_{n,\Lambda}
  \xRightarrow[n\to\infty]{d}
  \bigl(\bm Z_\ell\bigr)_{\ell\in\Lambda}
  \quad\text{in probability},
\]
where the $\bm Z_\ell$ are independent $\mathcal N(\bm 0,\bm I_K)$.

Finally, write
\[
  \upalpha_{i,\ell}^{(n)}
  = \langle\bm d_\ell^{(n)},\bm e_i^{(n)}\rangle
  = \frac{\vartheta_{i,\ell}^{(n)}}{\rho_\ell^{(n)}},
\qquad
\rho_\ell^{(n)} := \|I_H\bm\updelta_\ell^{(n)}\|_2/\sqrt n.
\]
By the law of large numbers, $\rho_\ell^{(n)}\to1$ in probability, and  $\rho_\ell^{(n)}$ is independent of $\G_{n,\Lambda}$.  Conditional Slutsky (componentwise and jointly over $\ell\in\Lambda$) yields
\[
  \bigl(\bm\upalpha_\ell^{(n)}\bigr)_{\ell\in\Lambda}\,\Big|\,\G_{n,\Lambda}
  \xRightarrow[n\to\infty]{d}
  \bigl(\bm Z_\ell\bigr)_{\ell\in\Lambda}
  \quad\text{in probability},
\]
which is the claim.
\end{proof}

\begin{lem}
\label{lem:fresh-step}
Work under Assumption~\ref{ass:universality}. Fix $k\in\{0,\dots,K-1\}$. Let
\(
  \Jset_k^{(n)} := \{\ell:\tau_\ell^{(n)}>k\}
\)
be the (random) set of dummies not yet selected at step $k$. For each $\ell\in \Jset_k^{(n)}$ define
\(
  \upalpha_{k+1,\ell}^{(n)}
  := \langle \bm d_\ell^{(n)},\bm e_{k+1}^{(n)}\rangle.
\)
Then, conditional on $\F_k^+$,
\[
  \bigl(\upalpha_{k+1,\ell}^{(n)}\bigr)_{\ell\in\Jset_k^{(n)}}
  \,\Big|\,\F_k^+
  \xRightarrow[n\to\infty]{d}
  \bigl(Z_{k+1,\ell}\bigr)_{\ell\in\Jset_k^{(n)}}
  \quad\text{in probability},
\]
where $\{Z_{k+1,\ell}\}_{\ell\in\Jset_k^{(n)}}$ are i.i.d.\ $\mathcal N(0,1)$.
\end{lem}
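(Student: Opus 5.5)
The plan is to reduce the statement to Lemma~\ref{lem:finite-dummy} by conditioning on which dummies are still unselected. Since $L$ is fixed, the random set $\Jset_k^{(n)}$ takes only finitely many values $\Lambda\subseteq\{1,\dots,L\}$. It therefore suffices to prove the claim on each event $\{\Jset_k^{(n)}=\Lambda\}$ separately, and then recombine with a finite sum of indicators. For a bounded continuous $f:\R^{|\Lambda|}\to\R$, I would show
\[
\mathds{1}\{\Jset_k^{(n)}=\Lambda\}\Bigl(\E\bigl[f\bigl((\upalpha_{k+1,\ell}^{(n)})_{\ell\in\Lambda}\bigr)\,\big|\,\F_k^+\bigr]-\E\bigl[f\bigl((Z_\ell)_{\ell\in\Lambda}\bigr)\bigr]\Bigr)\xrightarrow{P}0 .
\]
Summing over the finitely many $\Lambda$ then gives the stated convergence in probability.

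\textbf{Step 1: enlarge the conditioning field.} On $\{\Jset_k^{(n)}=\Lambda\}$, condition \textbf{(Sel)} shows that the trajectory up to step $k$, and hence $\bm e_1^{(n)},\dots,\bm e_{k+1}^{(n)}$, are determined by $(\bm X,\bm y)$ and the dummies $\{\bm d_r^{(n)}:r\notin\Lambda\}$. So on this event these directions are $\G_{n,\Lambda}$-measurable, and the dummies selected by step $k$ (those with $\tau_r\le k$) are also in $\G_{n,\Lambda}$. What $\F_k^+$ contains beyond $\G_{n,\Lambda}$ is the revealed projections $\bm\upalpha_{\le k,\ell}:=(\upalpha_{1,\ell}^{(n)},\dots,\upalpha_{k,\ell}^{(n)})$ for $\ell\in\Lambda$. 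Up to the event indicator, this gives $\F_k^+\subseteq\sigma(\G_{n,\Lambda},\{\bm\upalpha_{\le k,\ell}\}_{\ell\in\Lambda})=:\mathcal H_n$.

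\textbf{Step 2: joint conditional CLT.} Apply Lemma~\ref{lem:finite-dummy} with $K$ replaced by $k+1$. This gives that, given $\G_{n,\Lambda}$, the full vectors $(\upalpha_{1,\ell},\dots,\upalpha_{k+1,\ell})_{\ell\in\Lambda}$ converge to independent $\mathcal N(\bm 0,\bm I_{k+1})$ vectors in probability. In the Gaussian limit the last coordinate is independent of the first $k$. The task is to transfer this to conditioning on $\mathcal H_n$, that is, to show that conditioning additionally on the first $k$ projections does not distort the limit of the $(k+1)$-st.

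\textbf{Step 3, the main obstacle.} Convergence of joint laws does not automatically give convergence of conditional laws given a continuous coordinate. My first attempt would be a test-function argument. Take product test functions $g(\bm\upalpha_{\le k})\,f(\upalpha_{k+1})$ with $g,f$ bounded and Lipschitz. Step 2 gives $\E[g f\mid\G_{n,\Lambda}]\to\E[g(\bm Z_{\le k})]\,\E[f(Z_{k+1})]$ in probability. From this I would deduce
\[
\E\bigl[g(\bm\upalpha_{\le k})\bigl(\E[f(\upalpha_{k+1})\mid\mathcal H_n]-\E f(Z)\bigr)\,\big|\,\G_{n,\Lambda}\bigr]\to0 .
\]
That yields weak (in $L^1$ against $\mathcal H_n$-measurable tests) rather than pointwise conditional convergence, which I would interpret as the intended ``in probability'' mode.

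If a stronger form is needed, I would use a cleaner alternative route: a Lindeberg replacement. Swap the coordinates of $\bm\updelta_\ell$ one at a time for Gaussians, conditionally on $\G_{n,\Lambda}$. The direction $\bm e_{k+1}$ is orthogonal to $\bm e_1,\dots,\bm e_k$, and delocalization (D) makes each swap cost $O(\|\bm e\|_\infty^3)$ summed to $o(1)$. This requires a third-moment truncation argument, since only second moments are assumed; I would handle it by the standard truncation used in the Lindeberg CLT. For the Gaussian replacement, orthogonality gives exact conditional independence of $\upalpha_{k+1}$ from $\bm\upalpha_{\le k}$. Finally, conditional Slutsky handles the normalization $\rho_\ell^{(n)}\to1$, exactly as in Lemma~\ref{lem:finite-dummy}.
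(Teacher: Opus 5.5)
\textbf{There is a genuine gap in your Step~3, and the paper's own proof has the same gap.}

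Your Steps~1--2 match the paper's proof:
\begin{itemize}
\item the reduction to the finitely many values $\Lambda$ of $\Jset_k^{(n)}$;
\item the use of \textbf{(Sel)} to make $\bm e_1^{(n)},\dots,\bm e_{k+1}^{(n)}$ $\G_{n,\Lambda}$-measurable;
\item Lemma~\ref{lem:finite-dummy} with $K$ replaced by $k+1$.
\end{itemize}
Your remark that $\F_k^+\subseteq\sigma(\G_{n,\Lambda},\bm U)$ on $\{\Jset_k^{(n)}=\Lambda\}$ is cleaner than the paper's auxiliary $\bm W_n$, which is already $\G_{n,\Lambda}$-measurable. Here $\bm U=(\upalpha_{i,\ell}^{(n)})_{i\le k,\,\ell\in\Lambda}$ are the past projections and $\bm V=(\upalpha_{k+1,\ell}^{(n)})_{\ell\in\Lambda}$ the fresh ones.

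Neither of your Step~3 routes closes the gap. In route (a), the tower property gives $\E[g(\bm U)\,\E[f(\bm V)\mid\mathcal H_n]\mid\G_{n,\Lambda}]=\E[g(\bm U)f(\bm V)\mid\G_{n,\Lambda}]$. So your display is just the joint convergence of Step~2 rewritten. Calling this the intended mode changes the lemma. The lemma asserts that $\E[f(\bm V)\mid\F_k^+]\to\E[f(\bm Z_{\mathrm{fut}})]$ in probability, and the proof of Theorem~\ref{thm:pathwise-univ} uses exactly that, through conditional continuous mapping given $\F_k^+$. Route (b) has the same defect. Lindeberg swapping controls $\E[h(\bm U,\bm V)\mid\G_{n,\Lambda}]$ for smooth $h$, which is again only the joint law. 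Exact independence of $\bm V$ and $\bm U$ in the Gaussian model is not a continuous functional of the joint law, so it does not transfer to conditioning on the exact value of $\bm U$ under $\mu$.

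This is not a removable technicality: the inference fails without more structure on $\mu$. Abstractly, with $U\sim\Unif[0,1]$, the pair $(U,\operatorname{sign}\sin(nU))$ converges in law to an independent pair, yet its second coordinate is a function of the first. Concretely, take Rademacher coordinates and a generic $\bm y$, and set $\rho(\bm\updelta)=\|\bm I_H\bm\updelta\|_2/\sqrt n$.
\begin{itemize}
\item For distinct non-constant $\bm\updelta,\bm\updelta'\in\{\pm1\}^n$ one checks $\bm I_H(\bm\updelta/\rho(\bm\updelta)-\bm\updelta'/\rho(\bm\updelta'))\neq\bm 0$.
\item Hence $\bm\updelta\mapsto\langle\bm\updelta,\bm e_1\rangle/\rho(\bm\updelta)=\upalpha_1$ is injective unless $\bm e_1$ lies in one of finitely many hyperplanes of $H$.
\item So $\upalpha_{1,\ell}$ reveals $\bm\updelta_\ell$, $\upalpha_{2,\ell}$ is $\F_1^+$-measurable, and its conditional law is a point mass, not $\mathcal N(0,1)$.
\end{itemize}
Strictly, \textbf{(GP)} fails in this example, but only on an exponentially small event (coincident dummies). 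Your argument never invokes \textbf{(GP)}, so it cannot be complete.

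The paper derives $\varpi_n\to\E[f(\bm Z_{\mathrm{fut}})]$ from joint convergence plus independence of the limits by ``a standard diagonal subsequence argument''. That argument works in Lemma~\ref{lem:conditional-clt} only because $\bm\updelta$ is independent of the conditioning field there, so the weights can be frozen. Here $\bm U$ is a function of the same $\bm\updelta_\ell$, so no such freezing is available.

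A correct version needs an extra ingredient. Two options:
\begin{itemize}
\item A density assumption on $\mu$ strong enough for a multivariate local CLT for $\bm E_n^\top\bm\updelta_\ell$ given $\G_{n,\Lambda}$. A Scheff\'e-type argument then gives $L^1$ convergence of the conditional density of $\bm V$ given $\bm U$.
\item A joint (stable) reformulation of the lemma, which Lemma~\ref{lem:finite-dummy} already supplies. This must be paired with a proof of Theorem~\ref{thm:pathwise-univ} that applies continuous mapping to the whole finite projection array, instead of replacing conditional laws step by step.
\end{itemize}
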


\begin{proof}
Fix a finite deterministic subset $\Lambda\subset\{1,\dots,L\}$ and define
\[
\G_{n,\Lambda}
  := \sigma\!\bigl(\bm X,\bm y,\{\bm d_r^{(n)}:r\notin\Lambda\}\bigr).
\]
For each $n$, consider the event
\(
  E_n := \{ \Lambda \cap \Jset_k^{(n)} = \Lambda \},
\)
on which no index in $\Lambda$ has been selected before step $k$. On $E_n$, Lemma~\ref{lem:finite-dummy} (with $K$ replaced by $k+1$) yields
\[
  \bigl(\upalpha_{i,\ell}^{(n)} : 1\le i\le k+1,\ \ell\in\Lambda\bigr)
  \,\Big|\, \G_{n,\Lambda}
  \xRightarrow[n\to\infty]{d}
  \bigl(Z_{i,\ell} : 1\le i\le k+1,\ \ell\in\Lambda\bigr)
  \quad\text{in probability},
\]
where the $Z_{i,\ell}$ are i.i.d.\ $\mathcal N(0,1)$. Define
\[
  \bm U_{n,\Lambda}
  := \left(\upalpha_{i,\ell}^{(n)}\right)_{1\le i\le k,\ \ell\in\Lambda},
  \qquad
  \bm V_{n,\Lambda}
  := \left(\upalpha_{k+1,\ell}^{(n)}\right)_{\ell\in\Lambda},
\]
and let $\bm Z_{\mathrm{past}}$ and $\bm Z_{\mathrm{fut}}$ denote the corresponding Gaussian limits.
Because the coordinates $Z_{i,\ell}$ are independent, the random vectors
$\bm Z_{\mathrm{past}}$ and $\bm Z_{\mathrm{fut}}$ are independent.

On $E_n$, the $\sigma$--field $\F_k^+$ can be generated as
\[
  \F_k^+
  = \sigma\bigl(\G_{n,\Lambda},\,\bm U_{n,\Lambda},\,\bm W_n\bigr),
\]
where $\bm W_n$ collects all additional information in $\F_k^+$ (i.e.,  projections and realized dummies outside $\Lambda$, and tie-breaking randomness). 
Conditional on $\G_{n,\Lambda}$ and $\bm U_{n,\Lambda}$, the fresh projections
$\bm V_{n,\Lambda}$ depend only on the dummies
$\{\bm d_\ell^{(n)}:\ell\in\Lambda\}$, which are independent of $\bm W_n$.
Hence, for any bounded measurable $f$,
\[
  \E\bigl[f(\bm V_{n,\Lambda}) \,\big|\, \G_{n,\Lambda},\bm U_{n,\Lambda},\bm W_n\bigr]
  = \E\bigl[f(\bm V_{n,\Lambda}) \,\big|\, \G_{n,\Lambda},\bm U_{n,\Lambda}\bigr].
\]

Now fix a bounded continuous $f:\R^{|\Lambda|}\to\R$ and set
\[
  \varpi_n := \E\bigl[f(\bm V_{n,\Lambda}) \,\big|\, \G_{n,\Lambda},\bm U_{n,\Lambda}\bigr].
\]
By the joint conditional convergence of $(\bm U_{n,\Lambda},\bm V_{n,\Lambda})$ 
to $(\bm Z_{\mathrm{past}},\bm Z_{\mathrm{fut}})$ given $\G_{n,\Lambda}$, and the
independence of these limit components, a standard diagonal subsequence argument
(as in the proof of Lemma~\ref{lem:conditional-clt}) yields
\[
  \varpi_n \xrightarrow[n\to\infty]{P} \E\bigl[f(\bm Z_{\mathrm{fut}})\bigr].
\]

Combining the above,
\[
  \E\bigl[f(\bm V_{n,\Lambda}) \,\big|\, \F_k^+\bigr]
  \xrightarrow[n\to\infty]{P} \E\bigl[f(\bm Z_{\mathrm{fut}})\bigr]
\]
on the event $E_n$.

Because $L$ is fixed, the random set $\Jset_k^{(n)}$ takes values in a finite collection of subsets of $\{1,\dots,L\}$. For any fixed subset $S\subset\{1,\dots,L\}$, we may apply the above argument with $\Lambda=S$ on the event $\{\Jset_k^{(n)}=S\}$. Summing over all such $S$ then yields the claim for $\bigl(\upalpha_{k+1,\ell}^{(n)}\bigr)_{\ell\in\Jset_k^{(n)}}$.
\end{proof}

\begin{proof}[\textbf{Proof of Theorem~\ref{thm:pathwise-univ}}]
For each $k\in\{0,\dots,K-1\}$, the $(k{+}1)$-st selection can be written as
\[
  J_{k+1}^{(n)}
  =
  \phi_{k+1}^{(n)}\!\Bigl(
    F_k^{(n)},\,
    \bigl(\upalpha_{k+1,\ell}^{(n)}\bigr)_{\ell\in\Jset_k^{(n)}}
  \Bigr),
\]
where $F_k^{(n)}$ is a random element generating the revealed information,
i.e., $\sigma(F_k^{(n)})=\F_k^+$, and $\phi_{k+1}^{(n)}$ is a (Borel) measurable
decision rule encoding the forward--selection update (including deterministic
tie--breaking). Both the non-Gaussian and Gaussian dummy ensembles use the
same measurable mapping $\phi_{k+1}^{(n)}$, applied to their respective fresh
projection vectors. We prove the claim by induction on $k=1,\dots,K$.

\medskip\noindent
\textbf{Induction hypothesis $(\mathrm{IH}(k))$.}
There exists a random vector $(J_1,\dots,J_k)$ such that
\[
  (J_{\mu,1}^{(n)},\dots,J_{\mu,k}^{(n)})
  \xRightarrow[n\to\infty]{d}
  (J_1,\dots,J_k)
\quad\text{and}\quad
  (J_{\mathrm G,1}^{(n)},\dots,J_{\mathrm G,k}^{(n)})
  \xRightarrow[n\to\infty]{d}
  (J_1,\dots,J_k).
\]

\medskip\noindent
\textbf{Base case $k=1$.}
At the first step, the direction $\bm e_1^{(n)}$ depends only on $(\bm X,\bm y)$
and is common to both ensembles. Lemma~\ref{lem:fresh-step} with $k=0$ implies
that, conditional on $\F_0^+=\sigma(\bm X,\bm y)$, the dummy projections
$\{\upalpha_{1,\ell}^{(n)}\}_{\ell=1}^L$ converge in law (in probability) to an
i.i.d.\ $\mathcal N(0,1)$ family, which coincides with the exact conditional law
in the Gaussian ensemble. By Assumption~\ref{ass:universality} \textbf{(GP)}, ties occur with probability
zero in the first $K$ steps, so the maximizer defining $J_1^{(n)}$ is almost
surely unique under the limiting Gaussian law. Hence the mapping
$\phi_1^{(n)}$ is almost surely continuous at the realized projection vector.
The conditional continuous mapping theorem yields
\[
  J_{\mu,1}^{(n)} \xRightarrow[n\to\infty]{d} J_1,
  \qquad
  J_{\mathrm G,1}^{(n)} \xRightarrow[n\to\infty]{d} J_1,
\]
for some random index $J_1$. This establishes $(\mathrm{IH}(1))$.

\medskip\noindent
\textbf{Induction step.}
Assume $(\mathrm{IH}(k))$ holds for some $k\in\{1,\dots,K-1\}$.
Consider step $k+1$.
Let
\[
  \Jset_k^{(n)} := \{\ell:\tau_\ell^{(n)}>k\}
\]
be the unselected dummy indices at step $k$, and let
$(\upalpha_{k+1,\ell}^{(n)})_{\ell\in\Jset_k^{(n)}}$ be the fresh dummy projections. By Lemma~\ref{lem:fresh-step}, conditional on $\F_k^+$, the fresh projection
vector in the non-Gaussian ensemble converges in distribution, in probability,
to an i.i.d.\ $\mathcal N(0,1)$ family. This is exactly the conditional law of
the fresh dummy projections in the Gaussian ensemble given the same history
$\F_k^+$.

The next selected index is obtained by applying the same decision rule:
\[
  J_{\mu,k+1}^{(n)}
  =
  \phi_{k+1}^{(n)}\!\Bigl(
    F_k^{(n)},\,
    (\upalpha_{k+1,\ell}^{(n)})_{\ell\in\Jset_k^{(n)}}
  \Bigr),
\]
and analogously for $J_{\mathrm G,k+1}^{(n)}$. By Assumption~\ref{ass:universality} \textbf{(GP)}, ties occur with probability zero under the Gaussian limit, so $\phi_{k+1}^{(n)}$ is almost surely continuous at the limiting projection vector. Combining (i) the induction hypothesis $(\mathrm{IH}(k))$, (ii) the conditional convergence in Lemma~\ref{lem:fresh-step}, and
(iii) the conditional continuous mapping theorem, we obtain
\[
  (J_{\mu,1}^{(n)},\dots,J_{\mu,k+1}^{(n)})
  \xRightarrow[n\to\infty]{d}
  (J_1,\dots,J_k,J_{k+1}),
\]
and
\[
  (J_{\mathrm G,1}^{(n)},\dots,J_{\mathrm G,k+1}^{(n)})
  \xRightarrow[n\to\infty]{d}
  (J_1,\dots,J_k,J_{k+1}),
\]
for some random index $J_{k+1}$. This is $(\mathrm{IH}(k+1))$. By induction, $(\mathrm{IH}(K))$ holds (with $\mathscr J=(J_1,\dots,J_K)$),
as claimed.
\end{proof}

\end{document}